\newcommand{\reals}{\mathbb{R}}
\newcommand{\naturals}{\mathbb{N}}
\renewcommand{\Pr}{\textnormal{P}}
\newcommand{\f}{f}
\newcommand{\fith}{\f^{(i)}}
\newcommand{\x}{\textnormal{\textbf{x}}}   %
\newcommand{\y}{\textnormal{\textbf{y}}}
\renewcommand{\u}{\textnormal{\textbf{a}}}
\newcommand\indicator{\mathbf{1}}
\newcommand{\FullSet}{X}
\newcommand{\ROISet}{\mathcal{R}}
\newcommand{\ControlSet}{A}
\newcommand{\distBound}{\epsilon}
\newcommand{\noise}{\mathbf{v}}
\newcommand{\dataset}{\mathrm{D}}
\newcommand{\datasetsize}{d}
\newcommand{\distribution}{\mathcal{D}}
\newcommand{\noisedistribution}{p_{\mathbf{v}}}
\newcommand{\SubGaussianParam}{R}
\newcommand{\DiscretizationParam}{\Delta}
\newcommand{\str}{\pi}
\newcommand{\Str}{\Pi}
\newcommand{\adv}{\theta}
\newcommand{\Adv}{\Theta}
\newcommand{\strX}{\str_{\mathbf{y}}}
\newtheorem{theorem}{Theorem}
\newtheorem{problem}{Problem}
\newtheorem{assumption}{Assumption}
\newtheorem{proposition}{Proposition}
\newtheorem{lemma}{Lemma}
\newtheorem{definition}{Definition}
\newtheorem{remark}{Remark}
\newcommand{\compactSet}{W}
\newcommand{\kernel}{\kappa}
\newcommand{\mean}{\mu}
\newcommand{\meanpostith}{\mean^{(i)}_{\aimdp,\dataset}}
\newcommand{\varpostith}{\sigma^{(i)}_{\aimdp,\dataset}}
\newcommand\InfoGainBound{\gamma^{\datasetsize}_\kernel}
\newcommand\RKHS{\mathcal{H}}
\newcommand{\Safe}{\FullSet}
\newcommand{\Unsafe}{\bar{\Safe}}
\newcommand{\iaof}{\Longleftrightarrow}
\newcommand{\imdp}{\text{IMDP}\xspace}
\newcommand{\I}{\mathcal{I}}
\newcommand{\probDist}{\mathcal{D}}
\newcommand{\FeasibleDist}[2]{\mathfrak{T}_{#1}^{#2}}
\newcommand{\feasibleDist}[2]{\mathfrak{t}_{#1}^{#2}}
\newcommand{\Qimdp}{Q}
\newcommand{\Aimdp}{A}
\newcommand{\qimdp}{q}
\newcommand{\qimdppost}{Im(\qimdp, \aimdp)}
\newcommand{\qimdpprime}{\qimdp'}
\newcommand{\aimdp}{a}
\newcommand{\Spec}{\phi}
\newcommand{\APs}{AP}
\newcommand{\Prop}{\mathfrak{p}}
\newcommand{\StateFormula}{\phi}
\newcommand{\PathFormula}{\psi}
\newcommand{\Relation}{\bowtie}
\newcommand{\StateProb}{\Pr_{\Relation\p}}
\newcommand{\StateProbComp}{\Pr_{\bar{\Relation} 1-\p}}
\newcommand\NextOp{\mathcal{X}}
\newcommand\UntilOp{\mathcal{U}}
\newcommand\BoundedUntilOp{\UntilOp^{\leq k}}
\newcommand\EventuallyOp{\mathcal{F}}
\newcommand\BoundedEventuallyOp{\EventuallyOp^{\leq k}}
\newcommand\GloballyOp{\mathcal{G}}
\newcommand\Satisfies{\models}
\newcommand\NotSatisfies{\nvDash}
\newcommand{\Pup}{\hat{\textnormal{P}}}
\newcommand{\Plow}{\check{\textnormal{P}}}
\newcommand{\Qyes}{\Qimdp^{\text{yes}}}
\newcommand{\Qno}{\Qimdp^{\text{no}}}
\newcommand{\Qposs}{\Qimdp^{\text{?}}}
\newcommand{\pathmdp}{\omega}
\newcommand{\pathimdp}{\omega}
\newcommand{\pathmdpfin}{\pathmdp^{\mathrm{fin}}}
\newcommand{\pathimdpfin}{\pathmdpfin}
\newcommand{\Pathimdp}{\mathit{Paths}}
\newcommand{\Pathimdpfin}{\mathit{Paths}^{\mathrm{fin}}}
\newcommand{\last}{\textit{last}}
\newcommand{\p}{p}
\newcommand{\plow}{\check{\p}}
\newcommand{\pup}{\hat{\p}}
\DeclareMathOperator{\sinc}{sinc}
\def\BibTeX{{\rm B\kern-.05em{\sc i\kern-.025em b}\kern-.08em
    T\kern-.1667em\lower.7ex\hbox{E}\kern-.125emX}}
\begin{document}

\title{Formal Verification of Unknown Dynamical Systems via Gaussian Process Regression}

\author{John Skovbekk, \IEEEmembership{Student Member, IEEE}, Luca Laurenti, Eric Frew, and Morteza Lahijanian, \IEEEmembership{Member, IEEE}
\thanks{Submitted for review on 06/26/2023. This work was supported in part by the NSF grant 2039062 and NSF Center for Unmanned Aircraft Systems under award IIP-1650468. (\emph{Corresponding author: John Skovbekk})}
\thanks{John Skovbekk is with the Smead Aerospace Engineering Sciences Dept. at the University of Colorado, 3775 Discovery Dr, Boulder, CO 80305 USA (e-mail: john.skovbekk@colorado.edu)}
\thanks{Luca Laurenti is with the Delft Center for Systems and Control at TU Delft, Netherlands (e-mail:l.laurenti@tudelft.nl)}
\thanks{Eric Frew is with the Smead Aerospace Engineering Sciences Dept. at the University of Colorado, Boulder, CO 80305 USA (e-mail: eric.frew@colorado.edu).}
\thanks{Morteza Lahijanian is with the Smead Aerospace Engineering Sciences Dept. at the University of Colorado, Boulder, CO 80305 USA (e-mail: morteza.lahijanian@colorado.edu).}
}

\maketitle

\begin{abstract}
Leveraging autonomous systems in safety-critical scenarios requires verifying their behaviors in the presence of uncertainties and black-box components that influence the system dynamics.
In this work, we develop a framework for verifying discrete-time dynamical systems with unmodelled dynamics and noisy measurements against temporal logic specifications from an input-output dataset.
The verification framework employs Gaussian process (GP) regression to learn the unknown dynamics from the dataset and abstracts the continuous-space system as a finite-state, uncertain Markov decision process (MDP).
This abstraction relies on space discretization and transition probability intervals that capture the uncertainty due to the error in GP regression by using reproducible kernel Hilbert space analysis as well as the uncertainty induced by discretization.
The framework utilizes existing model checking tools for verification of the uncertain MDP abstraction against a given temporal logic specification. 
We establish the correctness of extending the verification results on the abstraction created from noisy measurements to the underlying system.
We show that the computational complexity of the framework is polynomial in the size of the dataset and discrete abstraction.  
The complexity analysis illustrates a trade-off between the quality of the verification results and the computational burden to handle larger datasets and finer abstractions.
Finally, we demonstrate the efficacy of our learning and verification framework on several case studies with linear, nonlinear, and switched dynamical systems. 
\end{abstract}

\begin{IEEEkeywords}
Data-driven modeling, Formal logic, Formal verification, Gaussian processes, Markov decision processes, Bayesian inference, data-driven certification
\end{IEEEkeywords}

\section{Introduction}
\label{sec:intro}
Recent advances in technology have led to a rapid growth of autonomous systems operating in \emph{safety-critical} domains. Examples include self-driving vehicles, unmanned aircraft, and surgical robotics.
As these systems are given such delicate roles, it is \emph{essential} to provide guarantees on their performance.
To address this need, formal verification offers a powerful framework with rigorous analysis techniques \cite{Clarke99,BaierBook2008}, which are traditionally model-based.
An accurate dynamics model for an autonomous system, however, may be unavailable due to, e.g., black-box components, or so complex that existing verification tools cannot handle.
To deal with such shortcomings, machine learning offers capable methods that can identify models solely from data.
While eliminating the need for an accurate model, these learning methods often lack quantified guarantees with respect to the latent system~\cite{hullermeier2021aleatoric}.
The gap between model-based and data-driven approaches is in fact the key challenge in verifiable autonomy.
This work focuses on closing this gap by developing a data-driven verification method that can provide formal guarantees for systems with unmodelled dynamics.

Formal verification of continuous control systems has been widely studied, e.g., \cite{tabuada2009verification,Belta:Book:2017,doyen2018verification,KD01,soudjani2015fau,Lahijanian:TAC:2015,laurenti2020formal}.
These methods are typically based on model checking algorithms \cite{Clarke99,BaierBook2008}, which check whether a finite-state model satisfies a given specification.
The specification language is usually a form of temporal logic, which provides rich expressivity.
Specifically, probabilistic \textit{linear temporal logic} (LTL) and \textit{probabilistic computation tree logic} (PCTL) are used to define specifications for stochastic systems \cite{Clarke99,BaierBook2008}.
To bridge the gap between continuous and discrete domains, a finite-state \textit{abstraction} is constructed.
This takes the form of a finite Markov process if the latent system is stochastic \cite{Lahijanian:CDC:2012,cauchi2019efficiency,laurenti2020formal}, and
the resulting frameworks admit strong formal guarantees.
Nevertheless, these frameworks are model-based and cannot be employed for analysis of systems with unknown models.

Machine learning has emerged as a powerful tool for learning unknown functions from data.
In particular, Gaussian process (GP) regression is becoming widespread to learn dynamical systems due to its predictive power, uncertainty quantification, and ease of use \cite{Rasmussen:Book:2003,srinivas2012information}.
By conditioning a prior GP on a dataset, GP regression returns a posterior distribution that predicts the system dynamics via the application of the Bayes rule.
However, the main challenge in using GP regression (and machine learning in general) for verification in the context of safety-critical applications is the need to formally quantify the error in the learning process and propagate it in the verification pipeline. 
Existing data-driven approaches often lack the required formalism and/or are limited to linear systems. %

\begin{figure}[tp]
    \centering
    \includegraphics[width=0.48\textwidth,trim={1.15cm 0.5cm 1.5cm 0.5cm},clip]{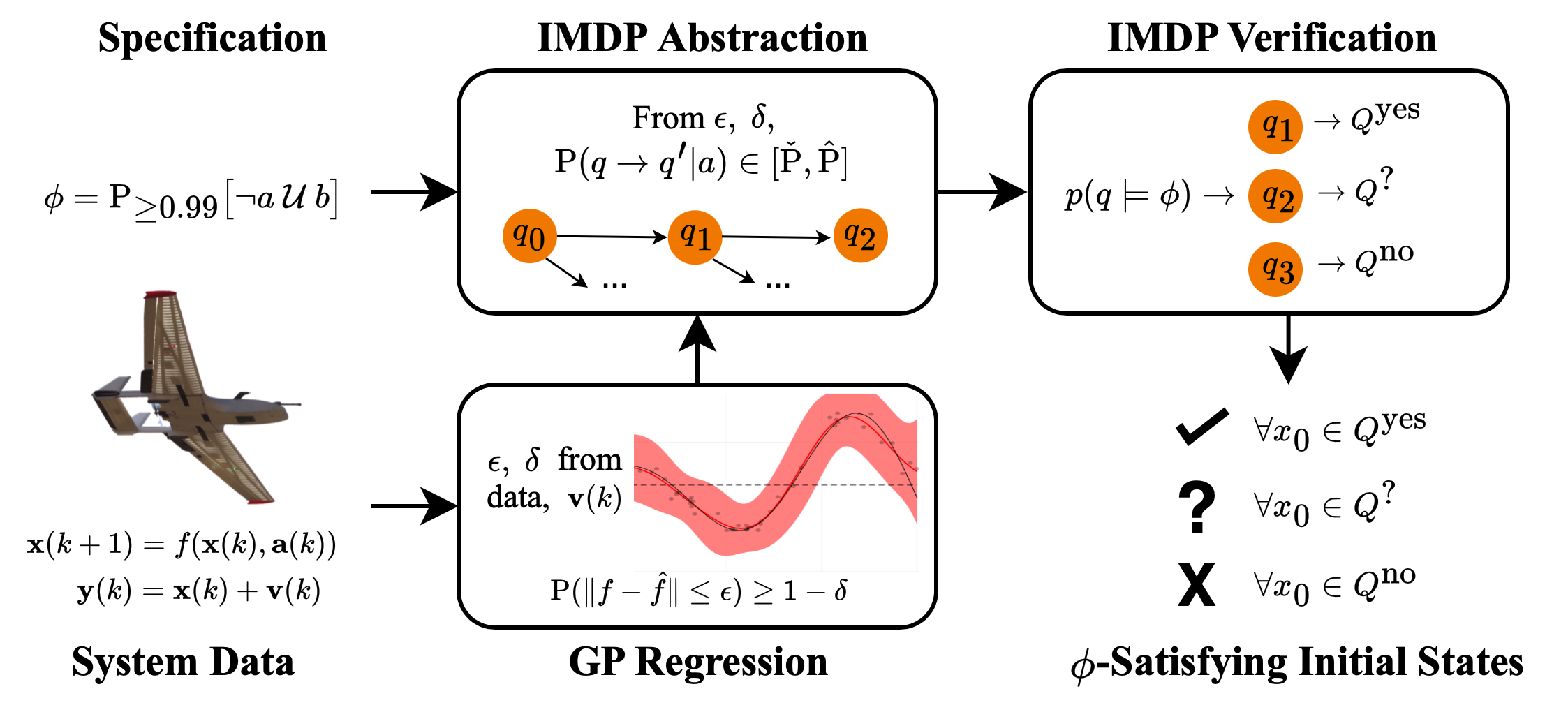}
    \caption{Overview of our learning-based, formal verification framework.
    } 
    \label{fig:overview}
\end{figure}

This article introduces an end-to-end data-driven verification framework that enables the use of off-the-shelf model-based verification tools and provides the required formalism to extend the guarantees to the latent system.
An overview of the framework is shown in Figure~\ref{fig:overview}.
Given a noisy input-output dataset of the latent system and a temporal logic specification (LTL or PCTL formula), the framework produces verification results in the form of three sets of initial states: $\Qyes$, $\Qposs$, $\Qno$, from which the system is guaranteed to fully, possibly, and never satisfy the specification, respectively.

Beginning with the dataset, the framework learns the dynamics via GP regression and formally quantifies the distance between the learned model and the latent system.  
For this step, we build on the GP regression error analysis in \cite{chowdhury2017kernelized} and derive formal bounds for the constants in the error term.
Specifically, by relying on the relation between reproducing kernel Hilbert spaces (RKHS) and kernel functions, we establish upper bounds for the RKHS constant of a function and the information gain. %
Using these bounds, the framework then constructs a finite-state, uncertain Markov decision process (MDP) abstraction that includes probabilistic bounds for every possible behavior of the system.
Next, it uses an existing model checking tool to verify the abstraction against a given PCTL or LTL formula.  
Here, we specifically focus on PCTL formulas since a model checking tool for uncertain MDPs is readily available \cite{Lahijanian:TAC:2015}. 
Finally, the framework closes the verification loop by mapping the resulting guarantees to the latent system.

The main contribution of this work is a \emph{data-driven formal verification framework for unknown dynamical systems}.
The novelties include: 
\begin{enumerate}
    \item formalization of the GP regression error bound in \cite{chowdhury2017kernelized} by deriving bounds for its challenging constants, 
    \item an abstraction procedure including derivation of optimal transition probability bounds,
    \item a proof of correctness of the final verification results for the latent system, and
    \item an illustration of the framework on several case studies with linear, nonlinear, and switched dynamical systems as well as empirical analysis of parameter trade-offs.
\end{enumerate}

\subsection{Related Work}

Recently, there has been a surge in applying formal verification methods to dynamic systems with machine learning components~\cite{perrouin2018learning,xiang2018verification}.
We begin with a brief overview of model checking for verification, and then discuss its application to learning-based approaches for dynamic systems.

Verification of continuous-space systems via model checking  requires constructing a finite abstraction with a simulation~\cite{Girard:ITAC:2007,Lahijanian:CDC:2012}.
An appropriate and widely-used abstraction model for continuous stochastic systems is the \textit{Interval-valued MDP} (IMDP)~\cite{Lahijanian:CDC:2012,Lahijanian:TAC:2015,Luna:WAFR:2014,luna:icra:2014,Luna:AAAI:2014}.
The advantage of IMDPs is their capability to incorporate multiple sources of uncertainty as well as the availability of their model checking tools with PCTL and LTL specifications~\cite{Hahn:QEST:2017,Hahn:TOMACS:2019,Lahijanian:TAC:2015}.
For a system with unknown dynamics and non-Gaussian measurement noise, however, it is not clear how to construct such an abstraction.
In this work, we address this problem by studying a method for formalization of all the sources of uncertainties to enable IMDP abstraction construction from data with formal guarantees.

Data-driven verification methods can be categorized to linear and nonlinear assumptions on the latent system dynamics.
Existing work on linear systems include methods to verify parameterized, time-invariant dynamics subject to performance specifications~\cite{Haesaert:Automatica:2017,Kenanian:Automatica:2019,Salamati:Automatica:2021,badings2022sampling}, to perform system identification with statistical guarantees~\cite{dean2020sample}, and to monitor the safety of stochastic linear systems with unmodelled errors~\cite{yoon2019predictive}.
For nonlinear systems, 
existing methods are largely based on sampling the latent system and using statistical bounding to generate safety or reachability guarantees~\cite{Fan2017,Wang2021,herbert2021scalable,musavi2021hoover,sun2022neureach,Nejati2023FormalVerification,Alanwar2023Reachability}.
Specifically, the DryVR framework relies on learning discrepancy functions for an unknown, deterministic system~\cite{Fan2017}, and NeuReach learns a neural-network for computing probabilistic reachable sets by sampling deterministic system trajectories~\cite{sun2022neureach}.
Recently, jointly performing reinforcement learning and formal verification can guarantee safety and stability for parameterized polynomial systems from sampled trajectories~\cite{wang2023joint}.
Those results, however, are limited to reachability property analysis from the sampled initial conditions, and many consider noiseless measurements of the system.

For complex objectives beyond safety and reachability, existing approaches construct finite-state abstractions from observed system data~\cite{jackson2020safety,jin2022data,jiang2022safe,hashimoto2022learning,peruffo2022data}.
These methods require rigorous error quantification of the data-driven models to guarantee the correctness of the verification results.
Such errors can be quantified with deterministic bounds~\cite{hashimoto2022learning}, \emph{probably-approximately correct} (PAC) bounds~\cite{xue2020pac,musavi2021hoover,jin2022data,sun2022neureach,badings2022sampling,Nejati2023FormalVerification}, or Bayesian probabilistic bounds~\cite{jackson2020safety,jiang2022safe}. 
Among these techniques, PAC-based methods provide straightforward results, but they rely heavily on the number of samples and primarily do not consider measurement noise to the best of our knowledge.
In addition, unlike Bayesian methods, the guarantees they provide are conditioned on a confidence that is not rooted in a probability measure on the system.  
In this work, we consider unknown nonlinear systems with measurement noise, and our approach is Bayesian-based and provides guarantees based on well-defined probability measure over system trajectories.

To quantify probabilistic uncertainty, GP regression is widely employed due to its universal approximation property and mature theory~\cite{Rasmussen:Book:2003}.
For unknown dynamic systems, GP regression has been used to learn maximal invariant sets in reinforcement learning~\cite{akametalu2014reachability,sui2015safe,berkenkamp2016safe,berkenkamp2017safe}, for runtime control and safety monitoring of dynamic systems~\cite{Helwa2019,yoon2019predictive}, and to construct barrier certificates for verifying system safety~\cite{jagtap2020barriers,wajid2022formal}.
A limitation of GP regression method is that 
the correctness of the predicted distribution is conditioned on that fact that the latent function is drawn from the prior process, which may be too restrictive as an assumption for various applications \cite{srinivas2012information}. Furthermore, it only supports Gaussian additive noise~\cite{Rasmussen:Book:2003}.
These limitations can be relaxed 
if the latent function lies in the \emph{Reproducing Kernel Hilbert Space} (RKHS) defined by the GP kernel. Then, the noise can be 
non-Gaussian, and more importantly, the GP error can be quantified probabilistically~\cite{chowdhury2017kernelized,berkenkamp2017safe,jagtap2020barriers,hashimoto2022learning,wajid2022formal}
for arbitrary latent continuous functions. 
A closely-related work uses deterministic RKHS-based GP errors to construct a deterministic transition system and verify against LTL specifications when there is no measurement noise~\cite{hashimoto2022learning}.
Nevertheless, that and many other existing works that use the RKHS setting rely on ad-hoc parameter approximations of the error term, which, strictly speaking, revokes the correctness of the resulting guarantees.  
In this work, we also employ RKHS-based approach but provide a formal method for determining the error parameters, enabling hard guarantees.

\section{Problem Formulation}
\label{sec:problem}
\noindent
We consider a discrete-time controlled process given by 
\begin{align}
    \label{eq:system}
    \begin{split}
    \x(k+1) &= \f(\x(k), \u(k)) \\
    \y(k) &= \x(k)+ \noise(k),
    \end{split}
\end{align}
\begin{equation*}
    k\in \naturals,\;\;\; \x(k) \in \reals^n, \;\;\; \u(k) \in \ControlSet, \;\;\; \y(k) \in \reals^n, \;\;\; 
\end{equation*}
where $\ControlSet=\{a_1,\dots,a_{|\ControlSet|}\}$ is a finite set of actions, $\f:\reals^n\times \, \ControlSet \rightarrow\reals^n$ is a (possibly non-linear) function that is \emph{unknown}, $\y$ is a measurement of $\x$ with noise $\noise(k)$, a random variable with probability density function $\noisedistribution$.
We assume that $\noisedistribution$ is a stationary and \emph{conditionally $\SubGaussianParam$-sub-Gaussian} distribution\footnote{Conditional $\SubGaussianParam$-Sub-Gaussian distributions are those whose tail decay at least as fast as the tail of  normal distribution with variance $\SubGaussianParam^2$ conditioned on the filtration up to the previous step.
This class of distributions includes the Gaussian distribution itself and distributions with bounded support~\cite{massart2007concentration}.}
and the $\SubGaussianParam$ parameter of the distribution is known, but the knowledge of the explicit form of the distribution is not necessary.
Intuitively,  $\y$ is a stochastic process whose behavior depends on the latent process $\x$, which itself is driven by actions from $\ControlSet$.

The evolution of Process~\eqref{eq:system} is described using trajectories of the states and measurements. A finite  \emph{state trajectory} up to step $k$ is a sequence of state-action pairs denoted by $\omega_{\x}^k= x_0 \xrightarrow{a_0} x_1 \xrightarrow{a_1}  \ldots \xrightarrow{a_{k-1}} x_k$.  
The state trajectory $\omega_{\x}^k$ induces a measurement (or observation) trajectory $\omega_{\y}^k= y_0 \xrightarrow{a_0} y_1 \xrightarrow{a_1}  \ldots \xrightarrow{a_{k-1}} y_k$ via the measurement noise process.  
We denote infinite-length state and measurement trajectories by $\omega_{\x}$ and $\omega_\y$ and the set of all state and measurement trajectories by $\Omega_{x}$ and $\Omega_{y}$, respectively.
Further, we use $\omega_{\x}(i)$ and $\omega_{\y}(i)$ to denote the $i$-th elements of a state and measurement trajectory, respectively.

\subsection{Probability Measure}

We assume that the action at time $k$ is chosen by (unknown) 
\textit{control strategy} $\strX:\Omega_y^k\to\ControlSet$ based on the measurement trajectory up to time $k$. 
Note that, even though $\strX$ is deterministic, given a finite state trajectory, the next action chosen by $\strX$ is stochastic due to the noise process $\noise,$ which induces a probability distribution over the trajectories of the latent process $\x$. 
In particular, given a fixed initial condition $x_0 \in \mathbb{R}^n$ and a strategy $\strX$, we denote with  $\omega_{y,v_0,\ldots ,v_{k}}^{k,\strX,x_{{0}}}$  the measurement trajectory with noise at time $0\leq i \leq k$ fixed to $v_i$ and relative to the inducing state trajectory $\Omega^k_x$ such that $\Omega^k_x(0)=x_{0}$.\footnote{Note that $\Omega^k_x$ is uniquely defined by its initial state $x_0$ and the value of the noise at the various time steps, i.e,   $v_0,...,v_{k}.$ } 
 We can then define the probability space  $(\Omega_x^k,\mathcal{B}(\Omega_x^k),\Pr),$ where $\mathcal{B}(\Omega_x^k)$ is the $\sigma$-algebra on $\Omega_x^k$ generated by the product topology, and $\Pr$ is a probability measure on the sets in $\mathcal{B}(\Omega_x^k)$ such that for a set $X \subset
\mathbb{R}^n$ \cite{bertsekas2004stochastic}:
\begin{align*}
    &\Pr(\omega_{\mathbf{x}}(0)\in X)=\indicator_{X}(x_0),\\
    &\Pr(\omega_{\mathbf{x}}(k)\in X \mid \omega_{\mathbf{x}}(0)=x_{0}) = \\
    & \hspace{22mm} \int...\int T\big(X \mid x,\strX(\omega_{y,v_0,\ldots ,v_{k-1}}^{k-1,\strX,x_{0}})\big) \\
    &\hspace{35mm}\noisedistribution(v_{k-1}) \ldots\noisedistribution(v_0)d v_{k-1}\ldots d v_0,
\end{align*}
where 
\begin{equation*}
    \indicator_{X}(x) = 
    \begin{cases}
        1 & \text{if } x\in X\\
        0 &\text{otherwise}
        \end{cases}
\end{equation*}
is the indicator function, and
\begin{equation}
    \label{eq:transition-kernel}
    T(X \mid x,a)=\Pr(\omega_{\x}(k)\in X \mid \omega_{\x}(k-1)=x,\u_{k-1}=a)
\end{equation}
is the transition kernel that defines the single step probabilities for Process \eqref{eq:system}. In what follows, we call $t(\bar{x} \mid x,a)$ the density function associated to the transition kernel, i.e.,  
\begin{equation*}
    \label{eq:transition-kernel-density}
    T(X \mid x,a)=\int_X t(\bar{x} \mid x,a)d\bar{x}.
\end{equation*}
Intuitively, $\Pr$ is defined by marginalizing $T$ over all possible observation trajectories. Note that as the initial condition $x_0$ is fixed, the marginalization is over the distribution of the noise up to time $k-1.$
 We remark that $\Pr$ is also well defined for $k=\infty$ by the \emph{Ionescu-Tulcea extension theorem}~\cite{abate2014effect}. 

The above definition of $\Pr$ implies that
if $f$ were known, the transition probabilities of Process \eqref{eq:system} are Markov and deterministic  once the action is known. 
However, we stress that, even in the case where $\aimdp$ is known, the exact computation of the above probabilities is infeasible in general as $\f$ is unknown. 
Next, we introduce continuity assumptions that allow us to estimate $\f$ 
and consequently compute bounds on the transition kernel $T$.

\subsection{Continuity Assumption}
Taking function $\f$ as completely unknown leads to an ill-posed problem.
Hence, we assume $f$ belongs to a \emph{reproducing kernel Hilbert space} (RKHS), which is a Hilbert space of functions that lie in the span of a positive-definite kernel function.
This is a standard assumption\cite{srinivas2012information},~\cite{chowdhury2017kernelized} that constrains $\f$ to be a well-behaved analytical function on a compact (closed and bounded) set.

\begin{assumption}[RKHS Continuity]
    \label{assump:rkhs}
    For a compact set $\compactSet\subset \mathbb{R}^n$, let $\kernel:\mathbb{R}^n\times \mathbb{R}^n\to \mathbb{R}$ be a given kernel and $\mathcal{H}_\kernel(\compactSet)$ the reproducing kernel Hilbert space (RKHS) of functions over $\compactSet$ corresponding to $\kernel$~\cite{srinivas2012information}. 
    Let the RKHS-norm of $\f$ be $\| \f\|_\kernel$ (defined in Lemma~\ref{lemma:rkhsnorm} in Appendix).
    Then, for each $a \in \ControlSet$  and $i\in \{1,\ldots, n \},$ $\f^{(i)}(\cdot,a) \in \mathcal{H}_\kernel(\compactSet)$ and for a constant $B_i>0,$ $\| \f^{(i)}(\cdot,a) \|_\kernel \leq B_i$, where $\f^{(i)}$ is the $i$-th component of $\f$.
\end{assumption}
\noindent
Although Assumption~\ref{assump:rkhs} limits $\f$ to a class of analytical functions, it is not overly restrictive as we can choose $\kernel$ such that $\mathcal{H}_\kernel$ is dense in the space of continuous functions~\cite{steinwart2001influence}.
For instance, this holds for the widely-used squared exponential kernel function~\cite{Rasmussen:Book:2003}.
This assumption allows us to use GP regression to estimate $\f$ and leverage error results as we detail in Section~\ref{sec:gp-reg}.

\subsection{PCTL Specifications}
\label{sec:PCTL}
We are interested in the behavior of the latent process $\x$ as defined in \eqref{eq:system} over regions of interest $\ROISet =\{r_1,\dots,r_{|\ROISet|}\}$ with $r_i\subset \reals^n$. 
For example, these regions may indicate target sets that should be visited or unsafe sets that must be avoided both in the sense of physical obstacles and state constraints (e.g., velocity limits). In order to define properties over $\ROISet$, for a given state $x$ and region $r_i$, we define an atomic proposition $\Prop_i$ to be true ($\top$) if $x\in r_i$, and otherwise false ($\bot$).  
The set of atomic propositions is given by $\APs=\{\Prop_1, \dots, \Prop_{|\ROISet|}\}$, and the label function $L:\reals^n\to 2^{\APs}$ returns the set of atomic propositions that are true at each state.

Probabilistic Computation Tree Logic (PCTL) \cite{BaierBook2008} is a formal language that allows for the expression of complex behaviors of stochastic systems.

We begin by defining the syntax and semantics of PCTL specifications.
\begin{definition}[PCTL Syntax]
    \label{def:pctl-syntax}
    Formulas in PCTL are recursively defined over the set of atomic propositions $\APs$ in the following manner:
    \begin{align*}
       \text{State Formula} \quad \StateFormula &\coloneqq \top \;|\; \Prop \;|\; \neg\StateFormula \;|\; \StateFormula\wedge\StateFormula \;|\; \StateProb[\PathFormula]\\
        \text{Path Formula} \quad \PathFormula &\coloneqq \NextOp\StateFormula \;|\; \StateFormula\, \BoundedUntilOp\StateFormula \;|\;  \StateFormula\, \UntilOp\StateFormula 
    \end{align*}
    where $\Prop \in \APs$, 
    $\neg$ is the negation operator, $\wedge$ is the conjunction operator, $\StateProb$ is the probabilistic operator, 
    $\Relation \in \{\leq, <, \geq, >\}$ is a relation placeholder, and $p\in [0,1]$. 
    The temporal operators are the $\NextOp$ (Next), $\BoundedUntilOp$ (Bounded-Until) with respect to time step $k\in\naturals$, and $\UntilOp$ (Until).
\end{definition}
\begin{definition}[PCTL Semantics]
\label{def:pctl-semantics}
The satisfaction relation $\Satisfies$ is defined inductively as follows. 
For state formulas,
\begin{itemize}
    \item $x\Satisfies \top$ for all $x\in \reals^n$; 
    \item $x\Satisfies\Prop\iaof\Prop\in L(x)$;
    \item $x\Satisfies(\StateFormula_1\wedge\StateFormula_2)\iaof (x\Satisfies\StateFormula_1)\wedge(x\Satisfies\StateFormula_2)$;
    \item $x\Satisfies\neg\StateFormula\iaof x\NotSatisfies\StateFormula$;
    \item $x\Satisfies\StateProb[\PathFormula]\iaof p^x(\PathFormula)\Relation p$, where $p^x(\PathFormula)$ is the probability that all infinite trajectories initialized at $x$ satisfy $\PathFormula$.
\end{itemize}
For a state trajectory $\omega_{\x}$, the satisfaction relation $\models$ for path formulas is defined as:
\begin{itemize}
    \item $\omega_{\x}\Satisfies\NextOp\StateFormula\iaof \omega_{\x}(1)\Satisfies\StateFormula$;
    \item $\omega_{\x}\Satisfies\StateFormula_1\BoundedUntilOp\StateFormula_2\iaof\exists i\leq k$ s.t. $\omega_{\x}(i)\Satisfies\StateFormula_2\wedge \omega_{\x}(j)\Satisfies\StateFormula_1 \; \forall j\in[0,i)$;
    \item $\omega_{\x}\Satisfies\StateFormula_1\UntilOp\StateFormula_2\iaof\exists i\geq 0$ s.t. $\omega_{\x}(i)\Satisfies\StateFormula_2\wedge \omega_{\x}(j)\Satisfies\StateFormula_1 \; \forall j\in[0,i)$.
\end{itemize}
\end{definition}

The common operators bounded eventually $\BoundedEventuallyOp$ and eventually $\EventuallyOp$ are defined respectively as $\StateProb[\BoundedEventuallyOp\StateFormula]\equiv\StateProb[\top\BoundedUntilOp\StateFormula]$, and $\StateProb[\EventuallyOp\StateFormula]\equiv\StateProb[\top\UntilOp\StateFormula].$ 
The globally operators $\GloballyOp^{\leq k}$ and $\GloballyOp$ are defined as 
$\StateProb[\GloballyOp^{\leq k}\StateFormula]\equiv \StateProbComp[\EventuallyOp^{\leq k} \neg \StateFormula]$, and $\StateProb[\GloballyOp\StateFormula]\equiv \StateProbComp[\EventuallyOp \neg \StateFormula],$ where 
$\bar \bowtie$ indicates the opposite relation, i.e, $\bar < \equiv \; >$, $\bar \leq \equiv \; \geq$, $\bar \geq \equiv \; \leq$, and $\bar > \equiv \; <$. 

As an example the property, \emph{``the probability of reaching the target by while avoiding unsafe areas, and with probability at least 0.95 of eventually reaching the secondary target if the breach area is entered, is at least 0.95''} can be expressed with the PCTL formula 
\begin{align*}
    \StateFormula = \Pr_{\geq 0.95}\big[ \big( \neg\Prop_\text{unsafe} \wedge \big(&\Prop_\text{breach}\implies\\
    &\Pr_{\geq 0.95}[\EventuallyOp  \Prop_\text{secondary}] \big)\big) \, \UntilOp \, \Prop_\text{target} \big].
\end{align*}

\subsection{Problem Statement}
The verification problem asks whether Process~\eqref{eq:system} satisfies a PCTL formula $\Spec$ under all control strategies.
It can also be posed as finding the set of initial states in a compact set $\FullSet\subset\reals^n$, from which Process~\eqref{eq:system} satisfies $\Spec$.

In lieu of an analytical form of 
$\f$, we assume to have a dataset
$\dataset=\{(x_i,a_i,y_i)\}_{i=1}^{\datasetsize}$
generated by Process~\eqref{eq:system} where $y_i$ is the noisy measurement of $\f(x_i, a_i)$, and $\datasetsize \in \naturals^+$.
Using this dataset, we can infer $\f$ and reason about the trajectories of Process~\eqref{eq:system} 
(via Assumption~\ref{assump:rkhs}). 
The formal statement of the problem considered in this work is as follows.

\begin{problem}\label{prob:main}
Let $\FullSet\subset\reals^n$ be a compact set,  
$\ROISet$ a set of regions of interest, and $\APs$ its corresponding set of atomic propositions (as defined in Section~\ref{sec:PCTL}).
Given a dataset $\dataset$ generated by Process~\eqref{eq:system} and a PCTL formula $\Spec$ defined over $\APs$, determine the initial states $\FullSet_0 \subset \FullSet$ from which Process \eqref{eq:system}
satisfies $\Spec$ without leaving $\FullSet$.
\end{problem}

Our approach to Problem~\ref{prob:main} is based on constructing an IMDP abstraction (formally defined in Sec.~\ref{sec:prelimIMDP}) of the latent process $\x$ using GP regression and checking if the abstraction satisfies $\Spec$. 
We begin by estimating the unknown dynamics with GP regression and, by virtue of Assumption~\ref{assump:rkhs}, we leverage existing probabilistic error bounds on this estimate. We present upper bounds on the constants required to compute this probabilistic error and derive bounds on the transition kernel in Eq.~\eqref{eq:transition-kernel} over a discretization of the space $\FullSet$.
Then, we construct the abstraction of Process~\eqref{eq:system} in the form of an uncertain (interval-valued) Markov decision process, which accounts for the uncertainties in the regression of $\f$ and the discretization of $\FullSet$. 
Finally, we perform model checking on the abstraction  to get sound bounds on the probability that $\mathbf{x}$ satisfies $\Spec$ from any initial state.

\begin{remark} 
    We remark that our IMDP abstraction construction is general, and hence, can be used for the verification of Process~\eqref{eq:system} against properties in other specification languages such as LTL. We specifically focus on PCTL  properties mainly because of the availability of its model checking tool for IMDPs \cite{Lahijanian:TAC:2015}.
\end{remark}

\section{Preliminaries}
\label{sec:preliminaries}

\subsection{Gaussian Process Regression}\label{sec:gp-reg}

Gaussian process (GP) regression~\cite{Rasmussen:Book:2003} aims to estimate an unknown function $\mathrm{f}:\reals^n\to \mathbb{R}$ from a dataset $\dataset=\{(\mathrm{x}_i, \mathrm{y}_i)\}_{i=1}^{\datasetsize}$ where $\mathrm{y}_i = \mathrm{f}(\mathrm{x}_i) + \mathrm{v}_i$ and $\mathrm{v}_i$ is a sample of a normal distribution with zero mean and $\sigma_{\noise}^2$ variance, denoted by $\mathrm{v}_i\sim\mathcal{N}(0, \sigma_{\noise}^2)$.
The standard assumption is that $\mathrm{f}$ is a sample from a prior GP with zero-valued mean function and kernel function $\kernel_0$.\footnote{Extensions with non-zero mean are a trivial generalization \cite{Rasmussen:Book:2003}.}

Let $X$  and $Y$ be ordered vectors  with all points in $\dataset$ such that $X_i = \mathrm{x}_i$ and $Y_i = \mathrm{y}_i$.  
Further, let $K$ denote the matrix with element $K_{i,j}=\kernel_0(\mathrm{x}_i,\mathrm{x}_j)$, $k(\mathrm{x},X)$ the vector such that $k_{i}(\mathrm{x},X)=\kernel_0(\mathrm{x},X_i)$, and $k(X,\mathrm{x})$ defined accordingly. 
Predictions at a new input point $\mathrm{x}$ are given by the conditional distribution of the prior at $\mathrm{x}$  given $\dataset$, which is still Gaussian and with mean $\mean_\dataset$ and variance $\sigma_\dataset^2$ given by
\begin{align}
\begin{split}\label{eq:post-mean}
      &\mean_{\dataset}(\mathrm{x}) = k(\mathrm{x},X)^T \big( K+ \sigma_{\mathrm{v}}^2 I \big)^{-1} Y 
\end{split}\\
\begin{split}\label{eq:post-kernel}
      & \sigma_{\dataset}^2(\mathrm{x}) = \kernel_0(\mathrm{x},\mathrm{x})-
      k(\mathrm{x},X)^T\big( K+ \sigma_{\mathrm{v}}^2 I \big)^{-1}k(X,\mathrm{x})
\end{split}
\end{align}
where $I$ is the identity matrix of size $\datasetsize$. 

As mentioned previously, the noise in Process~\eqref{eq:system} is sub-Gaussian, so the standard assumption that $\f$ is a sample from the prior GP cannot be used.  
Hence, we cannot directly use \eqref{eq:post-mean} and \eqref{eq:post-kernel} to bound the latent function. 
Rather, as common in the literature~\cite{srinivas2012information}, we rely on the relationship between GP regression and the reproducing kernel Hilbert space (RKHS).

\subsection{Error Quantification}\label{sec:rkhs}

The reliance on a positive-definite kernel function is the basis for relating a GP with kernel $\kernel$ with an RKHS $\RKHS_\kernel$.
For universal kernels (such as the squared exponential function), the associated RKHS is dense in the continuous functions on any compact set \cite{micchelli2006unikernels}.

Given Assumption~\ref{assump:rkhs}, the following proposition bounds the GP learning error when $f$ is a function in $\RKHS_{\kernel}$, without posing any distributional assumption on $f$.

\begin{proposition}[\hspace{1sp}\cite{chowdhury2017kernelized}, Theorem 2]
    \label{th:RKHS}
    Let $\compactSet$ be a compact set, $\delta\in(0,1]$, and $\dataset$ be a given dataset generated by $\mathrm{f}$ with cardinality $|\dataset|=d$. Further, let $\Gamma>0$ be a bound on the maximum information gain of $\kernel$, i.e., $\InfoGainBound\leq\Gamma$, and $B>0$ such that $\|\mathrm{f} \|_{\kernel}\leq B$. Assume that $\mathrm{v}$ is $\SubGaussianParam$-sub-Gaussian and that $\mean_\dataset$ and $\sigma_\dataset$ are found by setting $\sigma_{\mathrm{v}}^2=1+2/\datasetsize$ and using \eqref{eq:post-mean} and \eqref{eq:post-kernel}. 
    Define $\beta(\delta)=
    B + \SubGaussianParam\sqrt{2(\Gamma + 1 + \log{1/\delta})}$. Then, it holds that 
    \begin{align}
        \label{eq:proberror}
        \Pr\big(\forall \mathrm{x} \in \compactSet, \, |\mean_\dataset(\mathrm{x}) - \mathrm{f}(\mathrm{x})| \leq \beta(\delta)\sigma_\dataset(\mathrm{x}) \big)\geq 1-\delta.
    \end{align}
\end{proposition}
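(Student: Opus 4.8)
The plan is to decompose the regression error into a deterministic \emph{bias} term, controlled by the RKHS norm bound $B$, and a stochastic \emph{noise} term, controlled by a self-normalized concentration inequality for sub-Gaussian sequences; the information gain bound $\Gamma$ will enter only through the latter. First I would invoke the reproducing property to write $\mathrm{f}(\mathrm{x}) = \langle \mathrm{f}, \kernel(\mathrm{x},\cdot)\rangle_\kernel$ and re-express the posterior mean $\mean_\dataset$ of Eq.~\eqref{eq:post-mean} in the feature space of $\kernel$. Writing the observation vector as $Y = \mathbf{f} + \mathbf{v}$, where $\mathbf{f}$ stacks the noise-free evaluations $\mathrm{f}(\mathrm{x}_i)$ and $\mathbf{v}$ stacks the noise samples $\mathrm{v}_i$, this yields the clean split
\begin{equation*}
\mean_\dataset(\mathrm{x}) - \mathrm{f}(\mathrm{x}) = \underbrace{k(\mathrm{x},X)^T (K + \sigma_{\mathrm{v}}^2 I)^{-1}\mathbf{f} - \mathrm{f}(\mathrm{x})}_{\text{bias}} + \underbrace{k(\mathrm{x},X)^T (K + \sigma_{\mathrm{v}}^2 I)^{-1}\mathbf{v}}_{\text{noise}}.
\end{equation*}

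Second, I would bound the bias term. Applying Cauchy--Schwarz in $\mathcal{H}_\kernel$ together with the identity relating the residual feature-space norm to the posterior standard deviation $\sigma_\dataset(\mathrm{x})$ of Eq.~\eqref{eq:post-kernel}, the bias is bounded by a constant multiple of $\|\mathrm{f}\|_\kernel \, \sigma_\dataset(\mathrm{x}) \leq B\,\sigma_\dataset(\mathrm{x})$. The prefactor $\SubGaussianParam/\sigma_{\mathrm{v}}$ appearing in $\beta(\delta)$ tracks the regularization introduced by the choice $\sigma_{\mathrm{v}}^2 = 1 + 2/\datasetsize$, and both the bias and the noise contributions ultimately carry this same $\SubGaussianParam/\sigma_{\mathrm{v}}$ weighting against $\sigma_\dataset(\mathrm{x})$.

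Third --- and this is the technical heart --- I would control the noise term with a self-normalized tail inequality. Since each $\mathrm{v}_i$ is $\SubGaussianParam$-sub-Gaussian and conditionally independent of the past, its moment generating function admits the Gaussian-type bound $\Ex[e^{s \mathrm{v}_i}] \leq e^{s^2 \SubGaussianParam^2/2}$, from which one constructs a supermartingale indexed by the sample count. Applying the \emph{method of mixtures} (integrating the supermartingale against a Gaussian measure over the dual direction) together with a maximal inequality yields, with probability at least $1-\delta$, a bound on the weighted noise norm $\|\mathbf{v}\|_{(K + \sigma_{\mathrm{v}}^2 I)^{-1}}$ in terms of $\tfrac{1}{2}\log\det(I + \sigma_{\mathrm{v}}^{-2} K)$ and $\log(1/\delta)$. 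The crucial step is to recognize $\tfrac{1}{2}\log\det(I + \sigma_{\mathrm{v}}^{-2} K)$ as the information gain of the dataset, which by hypothesis satisfies $\InfoGainBound \leq \Gamma$; substituting produces exactly the $\SubGaussianParam\sqrt{2(\Gamma + 1 + \log(1/\delta))}$ contribution of $\beta(\delta)$.

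Finally, I would combine the two bounds and note the uniformity over $\mathrm{x} \in \FullSet$: the deterministic weighting $k(\mathrm{x},X)$ depends on the query point, but the high-probability event from the third step is a statement about the noise realization $\mathbf{v}$ alone and is independent of $\mathrm{x}$, so the bound holds simultaneously for every $\mathrm{x}$ --- precisely the quantifier order in Eq.~\eqref{eq:proberror}. I expect the \emph{main obstacle} to be the self-normalized concentration step: building the mixture supermartingale for merely sub-Gaussian (rather than Gaussian) noise and controlling its normalization via the information gain requires the careful martingale construction of \cite{chowdhury2017kernelized}, and it is there that the specific form of $\beta(\delta)$ and the regularization choice $\sigma_{\mathrm{v}}^2 = 1 + 2/\datasetsize$ become essential.
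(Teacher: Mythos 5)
The paper itself contains no proof of this proposition: it is imported, statement and citation alike, from \cite{chowdhury2017kernelized} (Theorem 2), so the only meaningful benchmark is that reference's proof. Your sketch reconstructs its architecture faithfully: the bias/noise split of $\mean_\dataset(\mathrm{x})-\mathrm{f}(\mathrm{x})$ after writing $Y=\mathbf{f}+\mathbf{v}$, the reproducing-property/Cauchy--Schwarz bound of the bias in terms of $\|\mathrm{f}\|_\kernel\,\sigma_\dataset(\mathrm{x})\leq B\,\sigma_\dataset(\mathrm{x})$, the self-normalized method-of-mixtures concentration for the noise term with the information gain entering through $\tfrac{1}{2}\log\det(I+\sigma_{\mathrm{v}}^{-2}K)\leq\Gamma+1$, and the correct observation that the high-probability event concerns the realization of $\mathbf{v}$ alone, which is what yields the ``$\forall \mathrm{x}$ inside the probability'' quantifier order. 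These are exactly the ingredients of the cited proof, so in approach your proposal is right.

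The one step you gloss over is the $\SubGaussianParam/\sigma_{\mathrm{v}}$ prefactor, and your explanation of it is not correct: it is not true that ``both the bias and the noise contributions carry this same weighting.'' Carried out exactly, the decomposition gives bias $\leq B\,\sigma_\dataset(\mathrm{x})$ with \emph{no} prefactor (since $|{-\sigma_{\mathrm{v}}^2\,\phi(\mathrm{x})^\top(\Phi^\top\Phi+\sigma_{\mathrm{v}}^2I)^{-1}\mathbf{f}}|\leq\|\mathrm{f}\|_\kernel\,\sigma_\dataset(\mathrm{x})$), while only the noise term picks up $1/\sigma_{\mathrm{v}}$, giving a total of $\big(B+(\SubGaussianParam/\sigma_{\mathrm{v}})\sqrt{2(\Gamma+1+\log(1/\delta))}\big)\sigma_\dataset(\mathrm{x})$ --- the form actually proved in \cite{chowdhury2017kernelized} after using $\sigma_{\mathrm{v}}\geq 1$. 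The paper's $\beta(\delta)=(\SubGaussianParam/\sigma_{\mathrm{v}})\big(B+\SubGaussianParam\sqrt{2(\Gamma+1+\log(1/\delta))}\big)$ instead puts the prefactor on $B$ as well and an extra factor $\SubGaussianParam$ on the square-root term; this dominates the bound your argument delivers (and hence is implied by it) when $\SubGaussianParam\geq\sigma_{\mathrm{v}}$, but for $\SubGaussianParam<1$ it is strictly \emph{smaller}, in which case neither your argument nor the cited theorem yields the statement verbatim. This mismatch is an artifact of the paper's restatement rather than a flaw in your strategy, but a complete write-up would have to track the constants exactly and either recover the Chowdhury--Gopalan form or state the regime of $\SubGaussianParam$ in which the paper's $\beta(\delta)$ is valid.
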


Proposition~\ref{th:RKHS} assumes that the noise on the measurements is $\SubGaussianParam$-sub-Gaussian, which is more general than Gaussian noise. 
Nevertheless, the probability bound~\eqref{eq:proberror} depends on bounding the constants $\|\mathrm{f}\|_{\kernel}$ and $\InfoGainBound$ which are described as \emph{challenging to compute} in the literature~\cite{lederer2019uniform} and are often bounded according to heuristics without guarantees~\cite{berkenkamp2016safe,jagtap2020barriers}.  
In this work, we derive formal upper bounds for each of these terms in Section~\ref{subsec:rkhs-bounds}.

\subsection{Interval Markov Decision Processes}
\label{sec:prelimIMDP}
An interval Markov decision process (\imdp) is a generalization of a Markov decision process, where the transitions under each state-action pair are defined by probability intervals~\cite{givan2000bounded}.
\begin{definition}[Interval MDP] \label{def:imdp}
    An interval Markov decision process is a tuple $\I = (\Qimdp,\Aimdp,\Plow,\Pup,\APs,L)$ where
    \begin{itemize}
    	\setlength\itemsep{1mm}
    	\item $\Qimdp$ is a finite set of states,
    	\item $\Aimdp$ is a finite set of actions, where $\Aimdp(\qimdp)$ is the set of available actions at state $\qimdp\in \Qimdp$,
        \item $\Plow: \Qimdp \times \Aimdp \times \Qimdp \to [0,1]$ is a function, where $\Plow(\qimdp,\aimdp,\qimdpprime)$ defines the lower bound of the transition probability from state $\qimdp\in \Qimdp$ to state $\qimdpprime\in \Qimdp$ under action $\aimdp \in \Aimdp(\qimdp)$,
        \item $\Pup: \Qimdp \times \Aimdp \times \Qimdp \to [0,1]$ is a function, where $\Pup(\qimdp,\aimdp,\qimdpprime)$ defines the upper bound of the transition probability from state $\qimdp$ to state $\qimdpprime$ under action $\aimdp \in \Aimdp(\qimdp)$,
        \item $\APs$ is a finite set of atomic propositions,
        \item $L: \Qimdp \to 2^{\APs}$ is a labeling function that assigns to each state $\qimdp$ possibly several elements of $\APs$.
    \end{itemize}
\end{definition}
\noindent
For all $\qimdp,\qimdpprime \in \Qimdp$ and $\aimdp \in \Aimdp(\qimdp)$, it holds that $\Plow(\qimdp,\aimdp,\qimdpprime) \leq \Pup(\qimdp,\aimdp,\qimdpprime)$ and
\begin{equation*}
    \sum_{\qimdpprime \in \Qimdp} \Plow(\qimdp,\aimdp,\qimdpprime) \leq 1 \leq \sum_{\qimdpprime \in \Qimdp} \Pup(\qimdp,\aimdp,\qimdpprime).
\end{equation*}
Let $\distribution(\Qimdp)$ denote the set of probability distributions over $\Qimdp$.  
Given $\qimdp \in \Qimdp$ and $\aimdp \in \Aimdp(\qimdp)$, we call $\feasibleDist{\qimdp}{\aimdp} \in \probDist(\Qimdp)$ a \textit{feasible distribution} over states reachable from $\qimdp$ under $\aimdp$ if the transition probabilities respect the intervals defined for each possible successor state $\qimdpprime$, i.e., $\Plow(\qimdp,\aimdp,\qimdpprime)\leq\feasibleDist{\qimdp}{\aimdp}(\qimdpprime)\leq\Pup(\qimdp,\aimdp,\qimdpprime)$. 
We denote the set of all feasible distributions for state $\qimdp$ and action $\aimdp$ by $\FeasibleDist{\qimdp}{\aimdp}$. 

A path $\pathimdp$ of an \imdp is a sequence of state-action pairs $\pathimdp = \qimdp_0 \xrightarrow{\aimdp_0} \qimdp_1 \xrightarrow{\aimdp_1} \qimdp_2 \xrightarrow{\aimdp_2}  \ldots$ 
such that $\aimdp_i \in \Aimdp(\qimdp_i)$ and $\Pup(\qimdp_i, \aimdp_i, \allowbreak{\qimdp_{i+1}}) > 0$ (i.e., transitioning is possible)
for all $i \in \naturals$.
We denote the last state of a finite path $\pathimdpfin$ by
$\last(\pathimdpfin)$ and the set of all finite and infinite paths by 
$\Pathimdpfin$ and $\Pathimdp$, respectively. 
Actions taken by the \imdp are determined by a choice of strategy $\str$ which is defined below.

\begin{definition}[Strategy]
\label{def:strategy}
    A strategy $\str$ of an \imdp $\I$ is
    a function $\str: 
    \Pathimdpfin 
    \to \Aimdp$ 
    that maps a finite path
    $\pathimdpfin$ of $\I$ onto an 
    action in $\Aimdp$. The set of all strategies is denoted by $\Str$.
\end{definition}

Once an action is chosen according to a strategy, a feasible distribution needs to be chosen from $\FeasibleDist{\qimdp}{\aimdp}$ to enable a transition to the next state.  
This task falls on the adversary function $\adv$ as
defined below. 
\begin{definition}[Adversary]
\label{def:adversary}
    Given an \imdp $\I$, an adversary is a function $\adv: \Pathimdpfin \times \Aimdp \rightarrow \probDist(\Qimdp)$ that, for each  finite path $\pathimdpfin \in \Pathimdpfin$ and action $\aimdp \in \Aimdp(\last(\pathimdpfin))$, chooses a feasible distribution $\feasibleDist{\qimdp}{\aimdp} \in \FeasibleDist{\last(\pathimdpfin)}{\aimdp}$. 
    The set of all adversaries is denoted by $\Adv$.
\end{definition}

Once a strategy is selected, the \imdp becomes an interval Markov chain. Further, choosing an adversary results in a standard Markov chain.
Hence, given a strategy and an adversary, a probability measure can be defined on the paths of the $\imdp$ via the probability of paths on the resulting Markov chain~\cite{Lahijanian:TAC:2015}.

\section{IMDP Abstraction}
\label{sec:abstraction}
\noindent
To solve Problem~\ref{prob:main}, we begin by constructing a finite abstraction of Process~\eqref{eq:system} in the form of an \imdp that captures the state evolution of the system under known actions.
This involves partitioning the set $\FullSet$ into discrete regions and determining the transition probability intervals between each pair of discrete regions under each action to account for the uncertainty due to the learning and discretization processes.

\subsection{\imdp States and Actions}\label{sec:imdp-construction}
In the first step of the abstraction, the set $\FullSet$ is discretized into a finite set of non-overlapping regions $\Qimdp_{\Safe}$ such that 
$\bigcup_{ \qimdp\in \Qimdp_{\Safe}}~q = \FullSet.$
The discretization must maintain consistent labelling with the regions of interest $\ROISet = \{r_i\}_{i=1}^{|\ROISet|}$ where $r_i\subseteq \FullSet$, i.e., (with an abuse of the notation of $L$), 
$$L(\qimdp)=L(r_i) \quad  \text{ iff } \quad L(x)=L(r_i) \;\; \forall x\in\qimdp.$$
To ensure this consistency, the regions of interest are used as the foundation of the discretization.
Then, we define 
$$\Qimdp=\Qimdp_{\Safe} \cup \{\qimdp_{\Unsafe}\}, \quad \text{ where } \quad \qimdp_{\Unsafe} = \reals^n\setminus \FullSet.$$ 
The regions in $\Qimdp$ are associated with the states of the \imdp, and with an abuse of notation, $\qimdp \in \Qimdp$ indicates both a state of the \imdp and the associated discrete region. 
Finally, the \imdp action space is set to the action set $\Aimdp$ of Process~\eqref{eq:system}. 
 
\subsection{Transition Probability Bounds}\label{subsec:tbounds}

The \imdp transition probability intervals are pivotal to abstracting Process~\eqref{eq:system} correctly.
For all $\qimdp,\qimdp' \in \Qimdp$ and action $\aimdp \in \Aimdp$, the intervals should bound the true transition kernel
\begin{align*}
    \Plow(\qimdp, \aimdp, \qimdp') &\leq \min_{x\in \qimdp} T(\qimdp'\mid x, \aimdp), \\ 
    \Pup(\qimdp, \aimdp, \qimdp') &\geq \max_{x\in \qimdp} T(\qimdp'\mid x, \aimdp). 
\end{align*} 
The bounds on the transition kernel $T$ must account for the uncertainties due to unknown dynamics and space discretization. We use GP regression and account for regression errors to derive these bounds for every $(\qimdp, \aimdp, \qimdp')$ tuple.

\subsubsection{IMDP State Images and Regression Error}\label{sec:overapp}

We perform GP regression using the given dataset $\dataset$ and analyze the evolution of each \imdp state under the learned dynamics.
In addition, the associated learning error is quantified using Proposition~\ref{th:RKHS}. 

Each output component of the dynamics under an action is estimated by a separate GP, making a total of $n \, |\Aimdp|$ regressions.
Let $\meanpostith(x)$ and $\varpostith(x)$ respectively denote the posterior mean and covariance functions for the $i$-th output component under action $\aimdp$ obtained via GP regression.
These GPs are used to evolve $\qimdp$ under action $\aimdp$, defined by image
$$Im(\qimdp, \aimdp)=\big\{ \meanpostith(x) \mid x\in\qimdp, \;i\in[1,n] \big\},$$
which describes the evolution of all $x\in\qimdp$.

The \emph{worst-case} regression error is
\begin{equation}
    e^{(i)}(\qimdp, \aimdp)= \sup_{x\in\qimdp}|\meanpostith(x) - \fith(x,\aimdp)|,
\end{equation}
which provides an error upper-bound for all continuous states $x \in \qimdp$.
To use Proposition~\ref{th:RKHS} for probabilistic reasoning on this worst-case error, the supremum of the posterior covariance in $\qimdp$ denoted by 
$$\overline{\sigma}^{(i)}_{\aimdp,\dataset}(q)= \sup_{x\in \qimdp}\varpostith(x)$$
is used.  
Both the posterior image and covariance supremum can be calculated for each \imdp state using GP interval bounding as in~\cite{blaas2020adversarial}.
Then, Proposition~\ref{th:RKHS} is applied with a scalar $\distBound^{(i)}\geq 0$,
\begin{equation}\label{eq:prob}
    \Pr\big(e^{(i)}(\qimdp, \aimdp)\leq \distBound^{(i)}\big)\geq 1-\delta
\end{equation}
where $\delta$ satisfies $\distBound^{(i)}=\beta(\delta) \overline{\sigma}^{(i)}_{\aimdp,\dataset}(q)$.

\subsubsection{Bounds on the RKHS Parameters}\label{subsec:rkhs-bounds}
The computation of $\Pr\big( e^{(i)}(\qimdp,\aimdp) \leq \distBound^{(i)}\big)$ via Proposition \ref{th:RKHS} requires two constants: the RKHS norm $\| \fith \|_\kernel$ and information gain $\InfoGainBound$.
In this section, we provide formal upper bounds for these constants.
\begin{proposition}[RKHS Norm Bound]\label{prop:rkhsbound}
Let $\fith\in\RKHS$ where $\RKHS$ is the RKHS defined by the kernel function $\kernel$. 
Then
\begin{equation}
    \label{eq:RKHS_norm}
    \|\fith\|_{\kernel}\leq \frac{\sup_{x\in  \FullSet}| \fith(x)|}{\inf_{x,x'\in  \FullSet}\kernel(x,x')^{\frac{1}{2}}}
\end{equation}
for all $x,x'\in  \FullSet$. 
\end{proposition}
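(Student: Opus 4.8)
The plan is to play the upper bound on the squared RKHS norm coming from the reproducing property against a lower bound coming from the pointwise positivity of the kernel. Throughout I write $\kernel_{\min}=\inf_{x,x'\in\FullSet}\kernel(x,x')$. Since $\FullSet$ is compact and $\kernel$ is continuous and strictly positive (as for the squared-exponential kernel), the infimum over the compact product $\FullSet\times\FullSet$ is attained and positive, so $\kernel_{\min}>0$, and likewise $\sup_{x\in\FullSet}|\fith(x)|<\infty$; hence the right-hand side of~\eqref{eq:RKHS_norm} is well defined. This positivity is exactly the structural feature the bound exploits.

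First I would reduce to finite linear combinations of kernel sections. By definition of $\RKHS$, the span of $\{\kernel(x,\cdot):x\in\FullSet\}$ is dense in $\RKHS$, so it suffices to prove the bound for $\fith=\sum_{j=1}^{m}\alpha_j\kernel(x_j,\cdot)$ with $x_j\in\FullSet$ and then pass to the limit: the left-hand side is norm-continuous, and $\RKHS$-convergence implies uniform convergence on $\FullSet$ via $|g(x)|\le\|g\|_{\kernel}\sqrt{\kernel(x,x)}$, so the right-hand side is continuous as well. For such a finite combination the reproducing property gives the two identities
\begin{equation*}
\|\fith\|_{\kernel}^2=\sum_{i,j}\alpha_i\alpha_j\kernel(x_i,x_j)=\sum_{j}\alpha_j\,\fith(x_j),
\end{equation*}
the second following from $\fith(x_j)=\sum_i\alpha_i\kernel(x_i,x_j)$.

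Next I would bound the right identity from above by $\sup|\fith|$ and the middle quadratic form from below by $\kernel_{\min}$. From the second identity and Hölder's inequality,
\begin{equation*}
\|\fith\|_{\kernel}^2=\sum_j\alpha_j\,\fith(x_j)\le\Big(\sum_j|\alpha_j|\Big)\sup_{x\in\FullSet}|\fith(x)|,
\end{equation*}
while using $\kernel(x_i,x_j)\ge\kernel_{\min}$ in the quadratic form I would establish
\begin{equation*}
\|\fith\|_{\kernel}^2=\sum_{i,j}\alpha_i\alpha_j\kernel(x_i,x_j)\ge\kernel_{\min}\Big(\sum_j|\alpha_j|\Big)^2 .
\end{equation*}
Combining the two displays gives $\sum_j|\alpha_j|\le\|\fith\|_{\kernel}/\sqrt{\kernel_{\min}}$; substituting back into the first bound yields $\|\fith\|_{\kernel}^2\le(\|\fith\|_{\kernel}/\sqrt{\kernel_{\min}})\,\sup_{x\in\FullSet}|\fith(x)|$, and cancelling one factor of $\|\fith\|_{\kernel}$ produces exactly~\eqref{eq:RKHS_norm}.

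The hard part will be the quadratic-form lower bound $\sum_{i,j}\alpha_i\alpha_j\kernel(x_i,x_j)\ge\kernel_{\min}(\sum_j|\alpha_j|)^2$. It is immediate when the coefficients share a sign, since then $\alpha_i\alpha_j=|\alpha_i||\alpha_j|$ and the termwise estimate $\kernel(x_i,x_j)\ge\kernel_{\min}$ applies directly; the delicate case is sign-indefinite coefficients, where the cross terms with $\alpha_i\alpha_j<0$ push against the inequality. This is where the real content of the proposition lies (the density and limiting arguments above being routine), and where any structural assumption on $\kernel$ beyond positivity would have to be invoked. I would concentrate the effort here, attempting either to argue that a representer of $\fith$ with nonnegative coefficients suffices under the working hypotheses on $\kernel$, or to reduce the general case to the sign-definite one by grouping the points so that the termwise comparison against $\kernel_{\min}$ can be carried out.
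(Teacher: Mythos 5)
Your reduction to finite kernel combinations, the two reproducing-property identities, and the final combination step are all sound, so the entire proposal stands or falls with the quadratic-form lower bound $\sum_{i,j}\alpha_i\alpha_j\kernel(x_i,x_j)\ \ge\ \kernel_{\min}\big(\sum_j|\alpha_j|\big)^2$, where $\kernel_{\min}=\inf_{x,x'\in\FullSet}\kernel(x,x')$, which you correctly flag as unproven. The problem is that this inequality is not merely hard in the sign-indefinite case --- it is false. Take the squared-exponential kernel, two nearby points $x_1,x_2\in\FullSet$, and $\alpha=(1,-1)$: the left-hand side equals $2-2\kernel(x_1,x_2)$, which tends to $0$ as $x_2\to x_1$, while the right-hand side equals $4\kernel_{\min}$, which stays bounded away from zero (and is close to $4$ when $\FullSet$ has small diameter). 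Your two proposed repairs cannot close this gap: restricting to nonnegative coefficients is a genuine loss of generality, because with a pointwise-positive kernel every combination $\sum_j\alpha_j\kernel(x_j,\cdot)$ with $\alpha_j\ge 0$ is a nonnegative function, so sign-changing elements of $\RKHS$ admit no such representers; and no grouping of points can rescue an inequality that already fails for two points.

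For comparison, the paper's proof takes a different route: it invokes Paulsen's characterization (Theorem 3.11 of the cited book) that $\|\fith\|_{\kernel}$ is the least $c\ge 0$ for which $c^2\kernel(x,x')-\fith(x)\fith(x')$ is itself a kernel function, and then selects $c^2=\sup_{x,x'\in\FullSet}|\fith(x)\fith(x')|/\kernel(x,x')$, justified by the entrywise domination $c^2\kernel(x_i,x_j)\ge|\fith(x_i)\fith(x_j)|$ under the assumption $\kernel\ge 0$. You should be aware that this entrywise step runs into exactly the cross-term obstruction you identified: dominating the entries of the rank-one matrix $\big[\fith(x_i)\fith(x_j)\big]_{ij}$ does not make the difference matrix positive semidefinite when the test vector has mixed signs. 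Concretely, with kernel matrix $\left(\begin{smallmatrix}1 & 0.9\\ 0.9 & 1\end{smallmatrix}\right)$ and function values $(1,-1)$, entrywise domination holds with $c^2=1/0.9$, yet the difference matrix has diagonal entries $1/9$ and off-diagonal entries $2$, hence negative determinant. So your instinct that ``this is where the real content lies'' is exactly right, and it is not a difficulty that positivity of $\kernel$ alone can resolve: neither your quadratic-form route nor the paper's termwise comparison closes it.
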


\noindent
The proof is provided in the Appendix, which relies on the relationship between the RKHS norm and the kernel function.

Proposition~\ref{prop:rkhsbound} bounds the RKHS norm by the quotient of the supremum of the function and the infimum of the kernel function on a compact set. 
Note that \eqref{eq:RKHS_norm} requires the square root of the kernel in the denominator. Hence, so long the chosen kernel is a positive function (e.g., squared exponential), we can compute a finite upper bound for the RKHS norm.
As $\FullSet$ is a finite-dimensional compact set, it can be straightforward to calculate bounds as in the following remark. 

\begin{remark} 
    An upper bound for the numerator of \eqref{eq:RKHS_norm} can be estimated using the Lipschitz constant $L_{\fith}$ of $\fith$ (or an upper bound of it). 
    That is, the supremum in the numerator can be bounded by
    \begin{equation}
        \sup_{x\in \FullSet} |\fith(x)| \leq |\fith(x')| + L_{\fith}\text{diam}( \FullSet).
    \end{equation}
    where $x'$ is any point in $\FullSet$, $\text{diam}(\FullSet)\coloneqq \sup_{x,x'\in\FullSet}\|x-x'\|$ is the diameter of $\FullSet$.
    The term $|\fith(x')|$ can be bounded by sampling at a single point $x'$ if the measurement noise has bounded support or can be set using a known equilibrium point in $\FullSet$, i.e., a point that satisfies $f(x') = x'$ and using the components therein.
    The Lipschitz constant itself can be estimated from data with statistical guarantees~\cite{knuth2021planning,chakrabarty2020safe}.
\end{remark}

The information gain $\InfoGainBound$ can be bounded using the of size of the dataset and the GP hyperparameters as in the following proposition.

\begin{proposition}[Information Gain Bound]\label{prop:infobound}
 Let $\datasetsize$ be the size of the dataset $\dataset$ and $\sigma_\mathrm{v}$ the parameter used for GP regression. Then, the maximum information gain is bounded by
    \begin{equation}
        \label{eq:info_gain}
        \InfoGainBound \leq \datasetsize\log(1+\sigma_\mathrm{v}^{-2}s),    
    \end{equation}
where $s=\sup_{x\in  \FullSet} \kernel(x,x)$. 
\end{proposition}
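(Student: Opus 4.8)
The plan is to unwind the definition of the maximum information gain and then apply Hadamard's inequality entrywise. Recall that for Gaussian observations the information gain of a set $A$ of $|A|=\datasetsize$ sampled inputs equals the mutual information $I(\mathbf{y}_A;\mathbf{f}_A)=\tfrac12\log\det(I+\sigma_\mathrm{v}^{-2}K_A)$, where $K_A$ is the kernel (Gram) matrix with entries $\kernel(x_i,x_j)$ for $x_i,x_j\in A$, and $\InfoGainBound$ is the maximum of this quantity over all input sets of size $\datasetsize$. Hence it suffices to bound $\tfrac12\log\det(I+\sigma_\mathrm{v}^{-2}K_A)$ by $\datasetsize\log(1+\sigma_\mathrm{v}^{-2}s)$ uniformly in $A$, since the right-hand side does not depend on the choice of $A$ and will therefore also bound the maximum.

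First I would observe that $M:=I+\sigma_\mathrm{v}^{-2}K_A$ is symmetric positive definite: $K_A$ is positive semidefinite because $\kernel$ is a valid kernel, and adding the identity makes $M$ strictly positive definite. This is exactly the structural fact the proposition invokes, and it is what licenses the use of Hadamard's inequality for positive semidefinite matrices, $\det(M)\le\prod_{j}M_{jj}$. The diagonal entries are $M_{jj}=1+\sigma_\mathrm{v}^{-2}\kernel(x_j,x_j)$, and since $\kernel(x_j,x_j)\le s=\sup_{x\in\FullSet}\kernel(x,x)$ for every sample point, each factor satisfies $M_{jj}\le 1+\sigma_\mathrm{v}^{-2}s$. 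Multiplying the $\datasetsize$ diagonal factors gives $\det(M)\le(1+\sigma_\mathrm{v}^{-2}s)^{\datasetsize}$.

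Taking logarithms then yields $\tfrac12\log\det(M)\le\tfrac{\datasetsize}{2}\log(1+\sigma_\mathrm{v}^{-2}s)\le\datasetsize\log(1+\sigma_\mathrm{v}^{-2}s)$, where the last step uses $\log(1+\sigma_\mathrm{v}^{-2}s)\ge 0$ (as $s\ge 0$) to absorb the factor $\tfrac12$. Since this bound is independent of $A$, it bounds the maximum over all size-$\datasetsize$ input sets, giving $\InfoGainBound\le\datasetsize\log(1+\sigma_\mathrm{v}^{-2}s)$, as claimed.

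The computation is essentially routine once the definition is in place; the only genuinely load-bearing step is the positive-(semi)definiteness of $K_A$, which guarantees that Hadamard's inequality applies and that the diagonal of $M$ controls its determinant. I expect no real obstacle beyond stating the information-gain identity precisely and confirming that replacing each diagonal kernel evaluation by its compact-set supremum $s$ is legitimate.
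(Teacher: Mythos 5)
Your proof is correct and follows essentially the same route as the paper's: apply Hadamard's inequality to bound $\det(I+\sigma_\mathrm{v}^{-2}K)$ by the product of its diagonal entries, bound each $\kernel(x_j,x_j)$ by $s$, and take logarithms. The only difference is that you explicitly track the $\tfrac12$ factor from the mutual-information definition and the maximization over input sets, which the paper's proof elides by directly bounding $\log\det(I+\sigma_\mathrm{v}^{-2}K)$ for an arbitrary kernel matrix; both yield the stated bound.
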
 
\noindent The proof is provided in the Appendix and relies on the parameters for regression and positive-definiteness of $\kernel$ 
to employ Hadamard's inequality. %
While this bound on $\InfoGainBound$ is practical, it can be further improved by using more detailed knowledge of the kernel function, such as its spectral properties~\cite{vakili2021information}.

\subsubsection{Transitions between states in $\Qimdp_{\FullSet}$}

To reason about transitions using the images of regions, the following notations are used to indicate the expansion and reduction of a region and the intersection of two regions. 

\begin{definition}[Region Expansion and Reduction]
    Given a compact set (region) $\qimdp\subset\reals^{n}$ and a set of $n$ scalars $c = \{c_1,\dots,c_n\}$, where $c_i \geq 0$, the expansion of $\qimdp$ by $c$ is defined as
    \begin{align*}
        \overline{\qimdp}(c) = \{x\in\reals^n \mid \exists x_\qimdp \in \qimdp \;\; s.t. \;\; |x_q^{(i)}-&x^{(i)}|\leq c_i
        \;\;\\&\forall i=\{1,\dots,n\} \},
    \end{align*}
    and the reduction of $\qimdp$ by $c$ is 
    \begin{align*}
        \underline{\qimdp}(c) = \{x_\qimdp \in \qimdp \mid  \forall x_{\partial \qimdp} \in \partial \qimdp,  \;\; |x_\qimdp^{(i)}-x_{\partial \qimdp}^{(i)}|&> c_i 
        \;\;\\ 
        &\forall i=\{1,\dots,n\}\},
    \end{align*}
    where $\partial \qimdp$ is the boundary of $\qimdp$.
\end{definition}

The intersection between the expanded and reduced states indicate the possibility of transitioning between the states. 
This intersection function is defined as
\begin{equation*}
    \indicator_V(W) = \begin{cases} 1 &\text{ if } V\cap W \neq \emptyset\\ 0 & \text{ otherwise }\end{cases}
\end{equation*}
for sets $V$ and $W$.

The following theorem presents bounds for the transition probabilities of Process \eqref{eq:system} that account for both the regression error and induced discretization uncertainties using Proposition~\ref{th:RKHS} and sound approximations using the preceding definitions. 

\begin{theorem}
\label{Th:TransitionBounds}
    Let $\qimdp, \qimdp' \in \Qimdp_{\FullSet}$ and $\distBound$, $\distBound'\in\reals^n$ be non-negative vectors.
    For any action $\aimdp \in \Aimdp$, the transition kernel is bounded by
	\begin{align*}
		& \min_{x\in q} T(q' \mid x, a)
		\\ &\quad\quad \geq\Big(1- \indicator_{\FullSet\setminus \underline{\qimdp}'(\distBound) }\big(\qimdppost\big)\Big)
		\prod_{i=1}^n \Pr\Big( e^{(i)}(\qimdp,\aimdp) \leq \distBound^{(i)}\Big), \quad\;\;
	\end{align*}
	and
	\begin{align*}
	    & \max_{x\in q}T( q' \mid x,a) 
	    \\ &\quad\quad \leq 1-\prod_{i=1}^n\Pr\big( e^{(i)}(\qimdp,\aimdp) \leq \distBound'^{(i)}\big)\Big(1-\indicator_{ \overline{\qimdp}'(\distBound') }\big(\qimdppost\big)\Big).
	\end{align*}
\end{theorem}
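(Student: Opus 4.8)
The plan is to exploit the fact that, given the action $\aimdp$, the dynamics of Process~\eqref{eq:system} are deterministic, so the transition kernel reduces to an indicator: $T(\qimdp' \mid x, \aimdp) = \indicator_{\qimdp'}(\f(x,\aimdp))$. The only uncertainty is epistemic—$\f$ is unknown and is replaced by its GP posterior mean $\mean_\dataset$, with the componentwise discrepancy controlled by the confidence bound of Proposition~\ref{th:RKHS}. I would therefore read both inequalities as bounds that hold on, and because of, the high-probability event on which the regression error is within the prescribed tolerances, with the product $\prod_{i=1}^n \Pr(e^{(i)}(\qimdp,\aimdp)\le\distBound^{(i)})$ quantifying the confidence attached to that event.

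For the lower bound I would first introduce the event $E=\bigcap_{i=1}^n\{e^{(i)}(\qimdp,\aimdp)\le\distBound^{(i)}\}$. Since the $n$ output components are regressed by separate GPs with noise independent across coordinates, these events are independent and $\Pr(E)=\prod_{i=1}^n\Pr(e^{(i)}(\qimdp,\aimdp)\le\distBound^{(i)})$. On $E$, for every $x\in\qimdp$ and every $i$ we have $|\meanpostith(x)-\fith(x,\aimdp)|\le\distBound^{(i)}$, so $\f(x,\aimdp)$ lies in the axis-aligned $\distBound$-box centred at $\mean_\dataset(x)$. The key geometric step is that if $\mean_\dataset(x)\in\underline{\qimdp}'(\distBound)$, then this whole box is contained in $\qimdp'$ by the definition of the reduction, forcing $\f(x,\aimdp)\in\qimdp'$, i.e.\ $T(\qimdp'\mid x,\aimdp)=1$. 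When $\qimdppost\subseteq\underline{\qimdp}'(\distBound)$ (equivalently $\indicator_{\FullSet\setminus\underline{\qimdp}'(\distBound)}(\qimdppost)=0$) this holds simultaneously for all $x\in\qimdp$, so $\min_{x\in\qimdp}T(\qimdp'\mid x,\aimdp)=1$ on $E$, giving the stated lower bound; when the containment fails the indicator makes the right-hand side zero and the bound is vacuous.

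The upper bound is the mirror image. I would take $E'=\bigcap_{i=1}^n\{e^{(i)}(\qimdp,\aimdp)\le\distBound'^{(i)}\}$, again with $\Pr(E')=\prod_{i=1}^n\Pr(e^{(i)}(\qimdp,\aimdp)\le\distBound'^{(i)})$, and note that on $E'$ each $\f(x,\aimdp)$ lies in the $\distBound'$-box around $\mean_\dataset(x)$. By the definition of the expansion, $\mean_\dataset(x)\notin\overline{\qimdp}'(\distBound')$ is precisely the statement that this box misses $\qimdp'$, so $\f(x,\aimdp)\notin\qimdp'$ and $T(\qimdp'\mid x,\aimdp)=0$. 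When $\qimdppost\cap\overline{\qimdp}'(\distBound')=\emptyset$ (i.e.\ $\indicator_{\overline{\qimdp}'(\distBound')}(\qimdppost)=0$) this holds for all $x\in\qimdp$, so the transition is impossible on $E'$; the probability that it nevertheless occurs is then at most $1-\Pr(E')=1-\prod_{i=1}^n\Pr(e^{(i)}(\qimdp,\aimdp)\le\distBound'^{(i)})$, which is the claimed bound, while if the image meets the expanded region the right-hand side collapses to $1$ and the bound is again trivial.

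The main obstacle, and the step I would treat most carefully, is the probabilistic bookkeeping: for the fixed true $\f$ the left-hand sides $\min_x T$ and $\max_x T$ are deterministic numbers in $\{0,1\}$, whereas the right-hand sides are probabilities, so the inequalities must be interpreted with respect to the GP confidence (equivalently, in expectation over the data/confidence region) rather than pointwise. Making this semantics precise—and justifying the replacement of $\Pr(E)$ by the product $\prod_i\Pr(\cdot)$, which rests on independence of the per-coordinate regressions—is the delicate part; the two geometric containment and exclusion arguments, via the reduction $\underline{\qimdp}'(\cdot)$ and the expansion $\overline{\qimdp}'(\cdot)$ respectively, then follow essentially immediately from their definitions.
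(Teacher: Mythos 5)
Your proposal is correct and follows essentially the same route as the paper's proof: a decomposition over the high-probability regression-error event (the paper phrases this as a law-of-total-probability conditioning), geometric containment/exclusion arguments via the reduced region $\underline{\qimdp}'(\distBound)$ and expanded region $\overline{\qimdp}'(\distBound')$, and independence across output components to factor the error probability into the product $\prod_{i=1}^n \Pr(e^{(i)}(\qimdp,\aimdp)\leq\distBound^{(i)})$. The semantic subtlety you flag---that $T(\qimdp'\mid x,\aimdp)$ is deterministic once $\f$ is fixed, so the inequalities live in the probability space of the dataset noise---is real, but the paper's proof handles it the same way, implicitly, by treating the error event and the transition event in one conditioned expression.
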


\begin{proof}
The proof of Theorem~\ref{Th:TransitionBounds} relies on bounding the probability of transitioning from $\qimdp$ to $\qimdpprime$ conditioned on the learning error $\distBound$ ($\distBound'$) given by Eq.~\eqref{eq:prob}, and using the expanded (reduced) image of $\qimdp$ with $\distBound$ ($\distBound'$) to find a point in $\qimdp$ that minimizes (maximizes) this bound.
We provide the proof for the upper bound on $T$, and the lower bound follows analogously. 

Let $\qimdp'$ denote subsequent states of $\qimdp$, and suppose $\omega_{\x}(k)\in \qimdp$ for an arbitrary timestep $k$.
The probability of transitioning to $\qimdp'$ under action $\aimdp$ is upper-bounded beginning with transition kernel and using the law of total probability conditioned on the learning error:
\begin{align*}
 & \max_{x\in \qimdp}\Pr( \omega_{\x}(k+1) \in \qimdp' \mid \omega_{\x}(k)=x,\aimdp)\\
     &=\max_{x\in \qimdp}\big( \Pr( \omega_{\x}(k+1)  \in \qimdp' \wedge e(\qimdp,\aimdp)\leq \epsilon \mid \omega_{\x}(k)=x ) \; +\\
     &\qquad \Pr( \omega_{\x}(k+1) \in \qimdp' \wedge e(\qimdp,\aimdp)> \epsilon \mid \omega_{\x}(k)=x)  \big)\\
     &=\max_{x\in \qimdp}\big( \Pr(\omega_{\x}(k+1) \in \qimdp'  \mid e(\qimdp,\aimdp)\leq \epsilon, \omega_{\x}(k)=x )\cdot \\
     &\quad \Pr(e(\qimdp,\aimdp) \leq \epsilon )+ \\
     & \quad\Pr(\omega_{\x}(k+1) \in \qimdp'  \mid e(\qimdp,\aimdp) > \epsilon, \omega_{\x}(k)=x )\Pr(e(\qimdp,\aimdp) > \epsilon ) \big)
\end{align*}
Next, $\Pr(\omega_{\x}(k+1) \in \qimdp'  \mid e(\qimdp,\aimdp)\leq \epsilon, \omega_{\x}(k)=x )$ is upper bounded using the intersection indicator, $\Pr(\omega_{\x}(k+1) \in \qimdp'  \mid e(\qimdp,\aimdp) > \epsilon, \omega_{\x}(k)=x )$ is upper bounded by one, and $\Pr(e(\qimdp,\aimdp) > \epsilon )$ is replaced by the equivalent $1-\Pr(e(\qimdp,\aimdp) \leq \epsilon )$:
\begin{align*}
     &\leq \max_{x\in \qimdp} \Big( \mathbf{1}_{ \overline{\qimdp}'} \big(Im(\qimdp,\aimdp)\big) \, \Pr(e(\qimdp,\aimdp)\leq \epsilon ) + \\
     &\hspace{40mm}1 \cdot \big( 1-  \Pr(e(\qimdp,\aimdp)\leq \epsilon )  \big) \Big)\\
     &=\max_{x\in \qimdp}\Big( \mathbf{1}_{ \overline{\qimdp}'} \big(Im(\qimdp,\aimdp)\big) \prod_{i=1}^n\Pr(e^{(i)}(\qimdp,\aimdp)\leq \epsilon_i ) \, + \\
         &\hspace{40mm} 1-  \prod_{i=1}^n\Pr( e^{(i)}(\qimdp,\aimdp) \leq  \epsilon_i )  \Big),
\end{align*}
where the last inequality is due to fact that the components of $\noise$ are mutually independent.
Rearranging these terms, we obtain the upper bound on $T$.
\end{proof}

The width of the transition probability intervals in Theorem~\ref{Th:TransitionBounds} relies on the choices of $\distBound$ and $\distBound'$.
The following proposition provides the optimal values for both $\distBound$ and $\distBound'$.

\begin{proposition}
\label{prop:distbound}
    Let $\partial W$ denote the boundary of a compact set $W\subset\reals^n$.
     The distance between the upper and lower bounds in Theorem~\ref{Th:TransitionBounds} is minimized if $\distBound'=\distBound$ and $\distBound$ is chosen such that for each $i\in[1,n]$,
     \begin{align*}
        \distBound^{(i)} &= \inf_{x\in \partial \qimdppost, \, x'\in \partial \qimdp'} |x^{(i)}-x'^{(i)}|. 
    \end{align*}
\end{proposition}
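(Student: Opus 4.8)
The plan is to treat the width of the interval produced by Theorem~\ref{Th:TransitionBounds}, namely the difference between its upper-bound expression $U(\distBound')$ and its lower-bound expression $L(\distBound)$, as the objective to be minimized, and to exploit the fact that $U$ depends only on $\distBound'$ while $L$ depends only on $\distBound$. Consequently, minimizing $U(\distBound')-L(\distBound)$ decouples into two independent problems: choosing $\distBound$ to \emph{maximize} $L(\distBound)$ and choosing $\distBound'$ to \emph{minimize} $U(\distBound')$. Before optimizing, I would record two monotonicity facts. Writing $P_i(c):=\Pr\big(e^{(i)}(\qimdp,\aimdp)\leq c\big)$, each $P_i$ is non-decreasing in $c$ because the event $\{e^{(i)}\leq c\}$ grows with $c$; concretely, through Proposition~\ref{th:RKHS} the computed value increases with $c$ since a larger tolerance corresponds to a larger $\beta(\delta)$, hence a smaller $\delta$ and a larger $1-\delta$. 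Second, each indicator is a monotone step function of its argument: as the components of $\distBound$ grow the reduced set $\underline{\qimdp}'(\distBound)$ shrinks and the expanded set $\overline{\qimdp}'(\distBound')$ grows, so $\indicator_{\FullSet\setminus\underline{\qimdp}'(\distBound)}(\qimdppost)$ and $\indicator_{\overline{\qimdp}'(\distBound')}(\qimdppost)$ are non-decreasing in each coordinate.

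For the lower bound, $L(\distBound)$ is zero unless $\indicator_{\FullSet\setminus\underline{\qimdp}'(\distBound)}(\qimdppost)=0$, i.e. unless $\qimdppost\subseteq\underline{\qimdp}'(\distBound)$; on the set where containment holds, $L$ equals the separable, coordinatewise non-decreasing product $\prod_i P_i(\distBound^{(i)})$. Since $\qimdp'$ is a grid cell (a box) and the image is handled through its box over-approximation $\qimdppostapp$, containment factors across coordinates as $\distBound^{(i)}\leq m^{(i)}$, so the feasible set is the box $\prod_i[0,m^{(i)}]$ and the product is maximized at its corner $\distBound^{(i)}=m^{(i)}$, where $m^{(i)}$ is the largest coordinate-$i$ contraction that keeps $\qimdppost$ inside $\qimdp'$, i.e. the coordinate-$i$ separation between the corresponding faces of $\partial\qimdppost$ and $\partial\qimdp'$. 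Symmetrically, $U(\distBound')$ is minimized only when $\qimdppost\cap\overline{\qimdp}'(\distBound')=\emptyset$, on which region $U=1-\prod_i P_i(\distBound'^{(i)})$, minimized by the largest expansion preserving disjointness, again the coordinate-$i$ face separation.

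The final step is to identify both thresholds with the single quantity $\inf_{x\in\partial\qimdppost,\,x'\in\partial\qimdp'}|x^{(i)}-x'^{(i)}|$ and to argue that setting $\distBound=\distBound'$ equal to it is simultaneously optimal for both subproblems. The key observation is that $\qimdppost$ cannot be both strictly inside and strictly outside $\qimdp'$: if $\qimdppost\subseteq\qimdp'$ then $\qimdppost$ meets $\qimdp'=\overline{\qimdp}'(0)$, so $\indicator_{\overline{\qimdp}'(\distBound')}(\qimdppost)=1$ and $U=1$ for every $\distBound'$, leaving only the lower bound active; if $\qimdppost$ lies outside $\qimdp'$ then $\qimdppost\not\subseteq\underline{\qimdp}'(\distBound)$, so $L=0$ for every $\distBound$, leaving only the upper bound active; and if the image straddles $\partial\qimdp'$ both bounds are trivial ($L=0$, $U=1$) and the width equals $1$ regardless of the parameters. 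In each case the inactive bound is insensitive to its parameter, so choosing $\distBound^{(i)}=\distBound'^{(i)}=\inf_{x\in\partial\qimdppost,\,x'\in\partial\qimdp'}|x^{(i)}-x'^{(i)}|$ attains the optimum of the active bound without disturbing the inactive one, which yields the claimed minimizer.

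The main obstacle is the geometric bookkeeping for the indicators, specifically showing that the coordinatewise face-separation is exactly the value at which $\indicator_{\FullSet\setminus\underline{\qimdp}'}$ flips (for the lower bound) and $\indicator_{\overline{\qimdp}'}$ flips (for the upper bound), and that the boundary-distance formula captures it. This requires care in dimension greater than one: the quantity $\inf_{x\in\partial\qimdppost,\,x'\in\partial\qimdp'}|x^{(i)}-x'^{(i)}|$ must be read as the separation between the faces perpendicular to axis $i$, since a naive infimum over \emph{all} boundary points can collapse to zero, and set containment/disjointness must be reduced to coordinatewise interval conditions, which is precisely what legitimizes both the independent optimization of each $\distBound^{(i)}$ and the replacement of $\qimdppost$ by its box over-approximation. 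Once this geometry is pinned down, the monotonicity and separability arguments above complete the proof routinely.
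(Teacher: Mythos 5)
Your proposal is correct and follows essentially the same route as the paper's proof: decouple the lower- and upper-bound optimizations, invoke monotonicity of the regression-error CDF, and take the largest $\distBound$ (shared by both bounds) that keeps the relevant indicator at its favorable value, identifying that threshold with the boundary-separation quantity; your three-way case analysis is exactly what the paper records in the remark following the proposition. If anything, your treatment is more careful than the paper's terse argument --- in particular, your point that the per-coordinate infimum must be read as a face separation (otherwise it can collapse to zero whenever the boundaries' coordinate projections overlap) is a genuine subtlety that the paper's proof glosses over.
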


\noindent
The proof for Proposition~\ref{prop:distbound} is provided in the Appendix. %

The intuition for this proposition is as follows.
Consider the image $\qimdppost$ and the intersection $\qimdppost\cap \qimdpprime$. There are three possible outcomes; (1) the intersection is empty, (2) the intersection is non-empty but not equal to $\qimdppost$, and (3) the intersection is equal to $\qimdppost$.
By examining Theorem~\ref{Th:TransitionBounds}, it is clear that if (2) is true, then we get trivial transition probability bounds of $[0,1]$. In this case, the choice of $\distBound$ does not matter. If (1) is true, then the bounds are $[0, \Pup(\qimdp,\aimdp,\qimdp')]$
and we have an opportunity to get a non-trivial upper bound. The best upper bound is achieved by choosing large $\distBound$ to capture more regression error, but small enough to keep the indicator function zero. This is achieved by choosing $\distBound$ according to Proposition~\ref{prop:distbound}, which corresponds to minimizing the L1 norm of $x-x'$. Finally, if (3) is true, then the bounds are $[\Plow(\qimdp,\aimdp,\qimdp'), 1]$ and the approach is similar.

\subsubsection{Transitions to $\qimdp_{\Unsafe}$}
Transition intervals to the unsafe state $\qimdp_{\Unsafe}$ are the complement of the transitions to the full set $\FullSet$, or  
\begin{align*}
\Plow(\qimdp,\aimdp,\qimdp_{\Unsafe}) &= 1 - \Pup(\qimdp,\aimdp,\FullSet)
\\\Pup(\qimdp,\aimdp,\qimdp_{\Unsafe}) &= 1 - \Plow(\qimdp,\aimdp,\FullSet)
\end{align*}
where $\Plow(\qimdp,\aimdp,\FullSet)$ and $\Pup(\qimdp,\aimdp,\FullSet)$ are calculated with Theorem~\ref{Th:TransitionBounds}. 
Finally, the unsafe state has an enforced absorbing property where
$
\Plow(\qimdp_{\Unsafe},\aimdp,\qimdp_{\Unsafe})=\Pup(\qimdp_{\Unsafe},\aimdp,\qimdp_{\Unsafe})=1.
$

The \imdp abstraction of Process~\eqref{eq:system} incorporates the uncertainty due to the measurement noise, the uncertainty of $\f$, and the induced error from discretizing the continuous state space.
The width of the transition intervals quantifies the conservativeness of the \imdp abstraction with respect to the underlying system. 
In the worst-case, a transition interval of $[0,1]$ indicates that the transition is possible but provides no meaningful information about the true transition probability. Likewise, smaller intervals indicate the abstraction better models the underlying system.

\section{Verification}
\label{sec:verification}

In this section, \imdp verification against PCTL specifications is summarized, and the correctness of the verification results on the \imdp abstraction is established via Theorem \ref{th:correctness}.

\subsection{\imdp Verification}
\label{subsec:modelchecking}

PCTL model checking of an \imdp is a well established procedure~\cite{Lahijanian:TAC:2015}.  
For completeness, we present a summary of it here. 
Specifically, we focus on the probabilistic operator $\Pr_{\bowtie p}[\PathFormula]$, where $\PathFormula$ includes the bounded until ($\BoundedUntilOp$) or unbounded until ($\UntilOp$) operator, since the procedure for the next ($\NextOp$) operator is analogous~\cite{Lahijanian:TAC:2015}.  
In what follows, $\PathFormula$ is assumed to be a path formula with $\BoundedUntilOp$, which becomes $\UntilOp$ when $k = \infty$.

For an initial state $\qimdp\in\Qimdp$ and PCTL path formula $\PathFormula$, let $\Qimdp^0$ and $\Qimdp^1$ be the set of states from which the probability of satisfying $\PathFormula$ is 0 and 1, respectively.  
These sets are determined simply by the labels of the states in $\Qimdp$. 
Further, let $\plow^k(\qimdp)$ and $\pup^k(\qimdp)$ be respectively the lower-bound and upper-bound probabilities that the paths initialized at $\qimdp$ satisfy $\PathFormula$ in $k$ steps. 
These bounds are defined recursively by
\begin{align}
    \plow^k(\qimdp) &= 
    \begin{cases}
        1 & \text{if } \qimdp \in \Qimdp^1\\
        0 & \text{if } \qimdp \in \Qimdp^0\\
        0 & \text{if } \qimdp \notin (\Qimdp^0 \cup \Qimdp^1) \wedge k = 0\\
        \mathrlap{\min\limits_{\aimdp} \min\limits_{\adv_{\qimdp}^{\aimdp}} \sum_{\qimdp'} \adv_{\qimdp}^{\aimdp}(\qimdp') \plow^{k-1}(\qimdp')} & \hspace{40mm} \text{otherwise}
    \end{cases} 
    \label{eq:plow} 
    \\
    \pup^k(\qimdp) &= 
    \begin{cases}
        1 & \text{if } \qimdp \in \Qimdp^1\\
        0 & \text{if } \qimdp \in \Qimdp^0\\
        0 & \text{if } \qimdp \notin (\Qimdp^0 \cup \Qimdp^1) \wedge k = 0\\
        \mathrlap{\max\limits_{\aimdp} \max\limits_{\adv_{\qimdp}^{\aimdp}} \sum_{\qimdp'} \adv_{\qimdp}^{\aimdp}(\qimdp') \pup^{k-1}(\qimdp')} & \hspace{40mm}\text{otherwise}
    \end{cases} 
    \label{eq:pup}
\end{align}
In short, the procedure exactly finds the minimizing and maximizing adversaries with respect to the equations above for each state-action pair, and then determines the minimizing and maximizing action for each state.
It is guaranteed that each probability bound converges to a fix value in a finite number of steps. 
Hence, this procedure can be used for both bounded until ($\BoundedUntilOp$) and unbounded until ($\UntilOp$) path formulas.
See~\cite{Lahijanian:TAC:2015} for more details.
The final result is the probability interval 
$$[\plow^k(\qimdp),\pup^k(\qimdp)] \subseteq [0,1]$$ 
of satisfying $\PathFormula$ within $k$ time steps for each \imdp state $\qimdp \in \Qimdp$.
Similar to the transition intervals, the widths of the satisfaction probability intervals are a measure of the conservativeness of the verification results.

Given the PCTL formula $\Spec=\Pr_{\bowtie p}[\PathFormula]$, the satisfaction intervals are used to classify states as belonging to $\Qyes$, $\Qno$, or $\Qposs$, i.e., those states that satisfy, violate and possibly satisfy $\Spec$. 
For example, when the relation $\bowtie$ in formula $\Spec$ is $>$ and $\p$ is the threshold value inherent to $\Spec$, 
\begin{equation*}
    \qimdp\in \begin{cases}
      \Qyes&\text{if }\plow^k(\qimdp)> \p\\
      \Qno&\text{if }\pup^k(\qimdp)\leq \p\\
      \Qposs&\text{otherwise.}
    \end{cases}
\end{equation*}
Thus, $\Qyes$ consist of the states that satisfy $\Spec$, and $\Qno$ consists of the states that do no satisfy $\Spec$ for every choice of adversary and action.
$\Qposs$ is the set of indeterminant states for which no guarantees can be made with respect to $\Spec$ due to large uncertainty.

\subsection{Verification Extension and Correctness}
The final task is to extend the \imdp verification results to Process~\eqref{eq:system} even though the \imdp does not model the measurement noise on the system.
This is possible by observing that the value iteration procedures in \eqref{eq:plow} and \eqref{eq:pup} are solved by finding the \emph{extreme} actions at each state. All other actions have outcomes that lie in the satisfaction probability intervals.
The following theorem asserts that these intervals, defined for $\qimdp$, bound the probability that all paths initialized at $x\in\qimdp$ satisfy $\PathFormula$.

\begin{theorem}
    \label{th:correctness}
    Let $\qimdp\in \Qimdp$ be both a region in $\FullSet$ and a state of the \imdp abstraction constructed on dataset $\dataset$, and
    $\plow(\qimdp)$ and $\pup(\qimdp)$ be the lower- and upper-bound probabilities of satisfying $\PathFormula$ from $\qimdp$ computed by the procedure in Section~\ref{subsec:modelchecking}.
    Further, let $x \in \qimdp$ be a point and
    $\Pr(\omega_{\x}\Satisfies\PathFormula\mid \omega_{\x}(0) = x,\strX)$ be the probability that all paths initialized at $x$ satisfy $\PathFormula$ under strategy $\strX$ given $\dataset$.
    Then, it holds that
    \begin{equation*}
        \begin{split}
            \Pr(\omega_{\x}\Satisfies\PathFormula\mid \omega_{\x}(0) = x,\strX) &\geq \plow(\qimdp), \\ 
            \Pr(\omega_{\x}\Satisfies\PathFormula\mid \omega_{\x}(0) = x,\strX) &\leq  \pup(\qimdp)
        \end{split}
    \end{equation*}
    for every strategy $\strX$.
\end{theorem}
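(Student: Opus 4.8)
The plan is to prove both inequalities by induction on the horizon of the path formula, mirroring the value-iteration recursions of Eq.~\eqref{eq:plow} and Eq.~\eqref{eq:pup}. I would focus on the upper bound, the lower bound being entirely analogous with every $\max$ replaced by $\min$. For a path formula $\PathFormula = \StateFormula_1\,\BoundedUntilOp\,\StateFormula_2$ (the unbounded case $\StateFormula_1\,\UntilOp\,\StateFormula_2$ is recovered by letting $k\to\infty$, since the iteration converges in finitely many steps as noted in Section~\ref{subsec:modelchecking}), let $V^k_\strX(x)$ denote the probability that $\omega_\x$ initialized at $x$ under $\strX$ satisfies $\PathFormula$ within $k$ steps. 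The claim I would establish is $V^k_\strX(x)\le \pup^k(\qimdp)$ for every $x\in\qimdp$, every region $\qimdp$, and \emph{every} strategy $\strX$; carrying the universal quantification over strategies through the induction is essential, because after one step the continuation of $\strX$ depends on the observed (noisy) history.

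First I would dispatch the base case. By the consistent-labelling property of the discretization (Section~\ref{sec:imdp-construction}), every $x\in\qimdp$ carries exactly the atomic propositions of $\qimdp$, so $x$ and $\qimdp$ agree on membership in $\Qimdp^0$ and $\Qimdp^1$. Hence at $k=0$ the three cases of Eq.~\eqref{eq:pup} coincide with the truth value of $\PathFormula$ at $x$, giving $V^0_\strX(x)=\pup^0(\qimdp)$.

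The inductive step is the crux. Fix $x\in\qimdp$ with $\qimdp\notin\Qimdp^0\cup\Qimdp^1$. Since $\f$ is deterministic, the one-step recursion reads
\begin{equation*}
    V^k_\strX(x) = \int \noisedistribution(v_0)\, V^{k-1}_{\strX^{v_0}}\big(\f(x,\strX(x+v_0))\big)\, dv_0,
\end{equation*}
where $\strX^{v_0}$ is the continuation of $\strX$ after observing $y_0=x+v_0$. The key observation is that for each \emph{fixed} action $\aimdp$ the true single-step kernel $T(\cdot\mid x,\aimdp)$ is a point mass on the region containing $\f(x,\aimdp)$, and Theorem~\ref{Th:TransitionBounds} yields $\Plow(\qimdp,\aimdp,\qimdpprime)\le T(\qimdpprime\mid x,\aimdp)\le\Pup(\qimdp,\aimdp,\qimdpprime)$ for all $\qimdpprime$, because any $x\in\qimdp$ lies between the per-region $\min$ and $\max$. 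Thus $T(\cdot\mid x,\aimdp)$ is a feasible distribution in $\FeasibleDist{\qimdp}{\aimdp}$. Applying the inductive hypothesis to the landing point $\f(x,\aimdp)$ (valid for whichever continuation strategy arises) and regrouping the noise integral by the action $\aimdp=\strX(x+v_0)$ it selects, with $w(\aimdp)$ the probability of selecting $\aimdp$, I obtain
\begin{equation*}
    V^k_\strX(x)\le\sum_{\aimdp} w(\aimdp)\sum_{\qimdpprime} T(\qimdpprime\mid x,\aimdp)\,\pup^{k-1}(\qimdpprime).
\end{equation*}
Each inner sum is bounded by the inner maximization over feasible distributions in Eq.~\eqref{eq:pup} (as $T(\cdot\mid x,\aimdp)$ is feasible), and then by $\pup^k(\qimdp)$ after the outer maximization over actions; since $\sum_\aimdp w(\aimdp)=1$, the convex combination is likewise bounded by $\pup^k(\qimdp)$, closing the induction. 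Passing to the limit $k\to\infty$ gives the stated bound with $\pup(\qimdp)$.

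The main obstacle I anticipate is the interplay between partial observability and the abstraction: because $\strX$ acts on noisy measurements, the action taken at $x$ is itself random and the post-step strategy is history-dependent. The argument resolves this by (i) quantifying the induction over \emph{all} strategies so that every continuation $\strX^{v_0}$ is covered, and (ii) recognizing the measurement-induced randomization over actions as a convex combination of per-action feasible distributions, each already dominated by the IMDP recursion that maximizes over both actions and adversaries. A secondary point to verify carefully is that the deterministic point-mass kernel genuinely respects the feasibility interval, in particular that $\Plow(\qimdp,\aimdp,\qimdpprime)=0$ for any region not receiving the image; this holds because $\min_{x\in\qimdp}T(\qimdpprime\mid x,\aimdp)=0$ whenever our chosen $x$ maps outside $\qimdpprime$, so the lower bound from Theorem~\ref{Th:TransitionBounds} is forced to vanish there.
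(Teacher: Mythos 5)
Your proposal is correct and follows essentially the same route as the paper's proof: an induction on the horizon that mirrors the IMDP value iteration of Eq.~\eqref{eq:plow}--\eqref{eq:pup}, whose crux is that the true one-step kernel $T(\cdot\mid x,a)$ (a point mass, by determinism of $\f$) is a feasible distribution by Theorem~\ref{Th:TransitionBounds}, so the measurement-noise-induced mixture over actions is dominated by the extremal action and adversary. The only notable differences are presentational: you prove the upper bound while the paper proves the lower bound, you carry arbitrary history-dependent strategies through the induction explicitly whereas the paper assumes stationarity without loss of generality, and the paper separately justifies (Lemma~\ref{lemma:ValueIterOriginalSys}, proved via the Markov property) the one-step recursion for $V^k_{\strX}$ that you assert directly.
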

\begin{proof}
    We present the proof of the lower bound;  the proof for the upper bound is analogous. 
    Furthermore, in what follows for the sake of a simpler notation, we assume that $\strX$ is stationary, i.e., $\strX:\reals^{n}\to A.$ The case of time-dependent strategies follows similarly.  
    The following lemma shows a value iteration that computes the probability of satisfying $\PathFormula$ from initial state $x$ under $\strX$ in $k$ steps.
    \begin{lemma}
        \label{lemma:ValueIterOriginalSys}
        Let $\Qimdp^0$ and $\Qimdp^1$ be sets of discrete regions that satisfy $\PathFormula$ with probability 0 and 1, respectively, and $\FullSet^0 = \cup_{\qimdp \in \Qimdp^0} \, \qimdp$ and $\FullSet^1 = \cup_{\qimdp \in \Qimdp^1} \, \qimdp$.  Further,
        let $V_k^{\strX}:\reals^n \to \reals $ be defined recursively as
        \begin{align*}
            V_k^{\strX}(x)=
            \begin{cases}
                1 & \text{if } x \in \FullSet^1\\
                0 & \text{if } x \in \FullSet^0\\
                0 & \text{if } x \notin (\FullSet^0 \cup \FullSet^1) \wedge k = 0\\
                \mathrlap{\mathbf{E}_{v\sim \noisedistribution, \, \bar{x}\sim T(\cdot \mid x,\strX(x+v) )} [ V_{k-1}(\bar{x})]} & \hspace{45mm}\text{otherwise.}
            \end{cases}
        \end{align*}    
        Then, it holds that $ V_k^{\strX}(x) = \Pr(\omega^k_{\x}\Satisfies\PathFormula\mid \omega_{\x}(0) = x,\strX).$
    \end{lemma}
    The proof of Lemma~\ref{lemma:ValueIterOriginalSys} is in the Appendix.
Hence, to conclude the proof of Theorem~\ref{th:correctness}, it suffices to show that for $x \in \qimdp$, $V_k^{\strX}(x)\geq \plow^k(\qimdp)$ for every $k\geq 0$ and every $\strX$. 
This can be shown by induction similarly to Theorem 4.1 in \cite{delimpaltadakis2021abstracting}. 
The base case is $k=0$, where it immediately becomes clear that 
$$ V_0^{\strX}(x)=\mathbf{1}_{\FullSet^1}(x)=\mathbf{1}_{\Qimdp^1}(\qimdp)= \plow^0(\qimdp).$$
For the induction step we need to show that under the assumption that  for every $q\in \Qimdp$
$$ \min_{x\in \qimdp} V_{k-1}^{\strX}(x)\geq \plow^{k-1}(\qimdp)$$
holds, 
then $V_k^{\strX}(x)\geq \plow^k(\qimdp).$ Assume for simplicity $x \notin (\FullSet^0 \cup \FullSet^1)$, then we have
\begin{align*}
    V&_k^{\strX}(x) = \mathbf{E}_{v\sim \noisedistribution, \bar{x}\sim T(\cdot \mid x,\strX(x+v) )} [ V_{k-1}(\bar{x})] \\ 
    & = \int \int V_{k-1}^{\strX}(\bar{x})t(\bar{x}\mid x,\strX(x+v))\noisedistribution(v)d\bar{x}dv\\
    & =  \int \bigg(\sum_{\qimdp\in\Qimdp}\int_{\qimdp} V_{k-1}^{\strX}(\bar{x})t(\bar{x}\mid x,\strX(x+v))d\bar{x}\bigg)\noisedistribution(v)dv \\
    & \geq  \int \bigg(\sum_{\qimdp\in\Qimdp}\min_{x\in \qimdp}V_{k-1}^{\strX}(x)\int_{\qimdp} t(\bar{x}\mid x,\strX(x+v))d\bar{x}\bigg)\noisedistribution(v)dv \\
    & \geq  \int \sum_{\qimdp\in\Qimdp}V_{k-1}^{\strX}(\qimdp) T(\qimdp\mid x,\strX(x+v))\noisedistribution(v)dv \\
    & \geq \min_{\aimdp \in \Aimdp} \sum_{\qimdp\in\Qimdp}\plow^{k-1}(\qimdp) T(\qimdp \mid x,a) \\
    & \geq \min_{\aimdp} \min_{\adv_{\qimdp}^{\aimdp}} \sum_{\qimdp'} \adv_{\qimdp}^{\aimdp}(\qimdp') \plow^{k-1}(\qimdp') \\
    & = \plow^{k}(q),
\end{align*}
where the last inequality follows by Theorem \ref{Th:TransitionBounds} 
while the second to last inequality follows by the induction assumption and by the observation that the noise distribution $\noisedistribution$ only affects the choice of the action. 
\end{proof}

Theorem~\ref{th:correctness} extends the \imdp verification guarantees to Process~\eqref{eq:system}, even though Process~\eqref{eq:system} is unknown \emph{a priori} and observed with noise.
We provide empirical validation of Theorem~\ref{th:correctness} in our case studies in Section~\ref{sec:studies} by comparing the verification results of a known and learned system.

\subsection{End-to-End Algorithm}
Here, we provide an overview of the entire verification framework in Algorithm~\ref{alg:verification} and its complexity.
The algorithm takes dataset $\dataset$, set $\FullSet\subset\reals^n$, and specification formula $\Spec$, and returns three sets of initial states $\Qyes$, $\Qno$, and $\Qposs$, from which Process~\eqref{eq:system} is guaranteed to satisfy, violate, and possibly satisfy $\Spec$ respectively.
Assuming a uniform discretization of $\FullSet$, the number of states in the abstraction is $|\Qimdp| = 2^n+1$.

First, GP regression is performed $n\,|A|$ times in Line \ref{line:learning} of Algorithm \ref{alg:verification}.
In general, each GP regression is $\mathcal{O}(\datasetsize^3)$, where $\datasetsize$ is the number of input points in the dataset.
GP regression can reach a complexity closer to $\mathcal{O}(d^{2.6})$ through improved Cholesky factorization, e.g.~\cite{camarero2018simple}, although in practice many GP regression toolboxes are fast enough.
As this is repeated for each action and output component, the total complexity of performing all regressions is $\mathcal{O}(\datasetsize^3 \, n \, |A|)$.

Next, the GPs are used to find the image and posterior covariance supremum of region in the discretization in Algorithm~\ref{alg:images}. 
We perform this computation with a branch-and-bound optimization procedure introduced in \cite{blaas2020adversarial} until one of two termination criteria are reached: a maximum search depth $T$, or a minimum distance between the bounds (e.g., $10^{-3}$).
In our case studies, we observe that the algorithm usually terminates long before the maximum search depth is reached.
The complexity of this operation is $\mathcal{O}(2^T \, \datasetsize^2 \, n \, |\Qimdp|\, |\Aimdp| )$, where $T$ allows a trade-off between accuracy and computation time.  
Hence, for a fixed $T$, the computations in Lines~\ref{line:image} and \ref{line:error} are polynomial in size of the dataset $\datasetsize$, dimensions $n$, abstraction $|\Qimdp|$, action set $|\Aimdp|$.  We note that 
the branch-and-bound method is exponential in $T$, and  $|\Qimdp|$ can be exponential in the size of the dimensions if, e.g., a uniform discretization is used.

The abstraction is completed by calculating the transition probability intervals between each pair of discrete states under each action using Theorem~\ref{Th:TransitionBounds} and Proposition~\ref{prop:distbound} (Lines~\ref{line:transitions}-\ref{line:TransitionBounds}) in Algorithm \ref{alg:transitions}.
The calculation of the transition probability intervals between every state-action state pair is $\mathcal{O}(|\Qimdp|^2 \, |\Aimdp|)=\mathcal{O}(2^{2n} \, |\Aimdp|)$, which is standard for IMDP construction.

Finally, we use an existing tool for verifying the abstraction \imdp against specification formula $\Spec$ with its inherent operator $\bowtie$ and threshold value $\p$ (Line~\ref{line:verify}). 
The correctness of the obtained results is guaranteed by Theorem~\ref{th:correctness}.
PCTL model checking of \imdp is polynomial in the size of the state-action pairs and length of the PCTL formula \cite{Lahijanian:TAC:2015}.  
Hence, the introduction of a learned component to abstraction-based verification does not fundamentally make the problem harder, as the exponential influence of the discretization of the state-space ($2^{2n}$) is still dominant.

\begin{algorithm}
   \caption{Bound State Images and Errors} 
   \label{alg:images}
   \hspace*{\algorithmicindent} \textbf{Input:} Posterior GPs $\{\mathcal{GP}^\aimdp\}$, actions $A$, IMDP states $\Qimdp$ \\
    \hspace*{\algorithmicindent} \textbf{Output:} Image and error bounds $Im$, $e$ 
   \begin{algorithmic}[1]
        \State $Im\gets \{\},e\gets \{\}$
        \For{$\qimdp \in \Qimdp,\aimdp\in \Aimdp$}
    
        \State $Im(\qimdp,\aimdp)\gets$ OverapproximateImage$(\qimdp, \mathcal{GP}^\aimdp)$\label{line:image}
        \State $e(\qimdp,\aimdp)\gets$ BoundError$(\qimdp, \mathcal{GP}^\aimdp)$ \hfill \label{line:error}
    \EndFor
    \Return $Im$, $e$
   \end{algorithmic}
\end{algorithm}

\begin{algorithm}
    \caption{Calculate Transition Intervals}
    \label{alg:transitions}
    \hspace*{\algorithmicindent} \textbf{Input:} Image and error bounds $Im$, $e$ , actions $A$, IMDP states $\Qimdp$ \\
    \hspace*{\algorithmicindent} \textbf{Output:} Transition interval matrices $\Plow$, $\Pup$ 
    \begin{algorithmic}[1]
    \State $\Plow\gets\{\}, \Pup\gets\{\}$
        \For{$(\qimdp,\qimdp')\in \Qimdp\times \Qimdp,a\in A$\label{line:transitions}} 
        \State $\distBound\gets$ Proposition~\ref{prop:distbound}
        \State $\Plow(\qimdp,\aimdp,\qimdp'),\Pup(\qimdp,\aimdp,\qimdp')\gets$ Theorem~\ref{Th:TransitionBounds} using $Im,e$ \label{line:TransitionBounds}
   \EndFor 
   \Return $\Plow,\Pup$
    \end{algorithmic}
\end{algorithm}

\begin{algorithm}
 \caption{Learning, Abstraction and Verification}\label{alg:verification}
\hspace*{\algorithmicindent} \textbf{Input:} Data $\dataset$, set $\FullSet$, spec. $\Spec$, $\APs$, label fcn. $L$ \\
\hspace*{\algorithmicindent} \textbf{Output:} Initial states $\FullSet_0$ that satisfy $\Spec$
\begin{algorithmic}[1]
    \State $\{\mathcal{GP}^\aimdp\}\gets$ GP regressions for each $\aimdp$ using $\dataset$ \label{line:learning}
    \State $\Qimdp\gets $Partition$(\FullSet) \cup\{\qimdp_{\Unsafe}\} $\label{line:discretize}
    
    \State $Im,e\gets$ Algorithm \ref{alg:images}
    \State $\Plow,\Pup\gets$ Algorithm \ref{alg:transitions}
     \State $\I\gets$AssembleIMDP$(\Qimdp, \Aimdp, \Plow, \Pup, \APs, L)$ 
     \State $\Qyes, \Qno, \Qposs\gets$VerifyIMDP$(\I,\Spec)$ \label{line:verify}
\end{algorithmic}
\end{algorithm}

\begin{figure*}[t]
    \centering
    \newcommand\figwidth{0.20\textwidth}
    \newcommand\vertspace{-7mm}
    \begin{subfigure}{0.7\textwidth}
    \centering
    \includegraphics[width=0.35\linewidth]{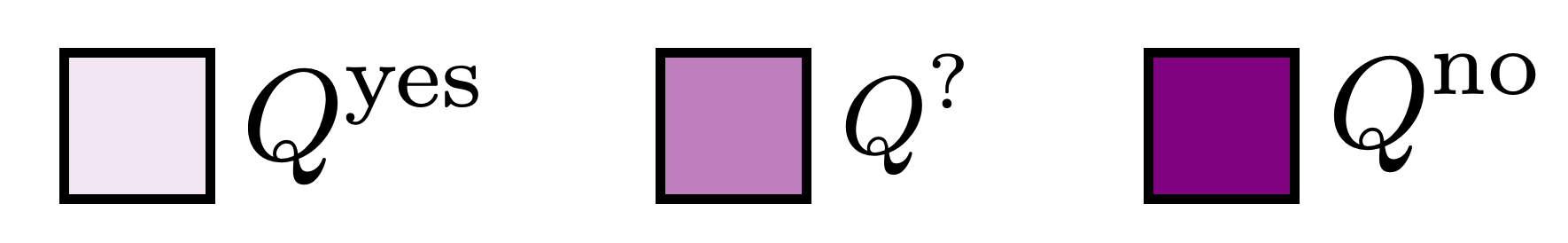}
    \end{subfigure}\\
    \begin{subfigure}{\figwidth}
    \includegraphics[width=\linewidth, trim={1.5cm, 1.5cm, 0.75cm, 0.75cm}, clip]{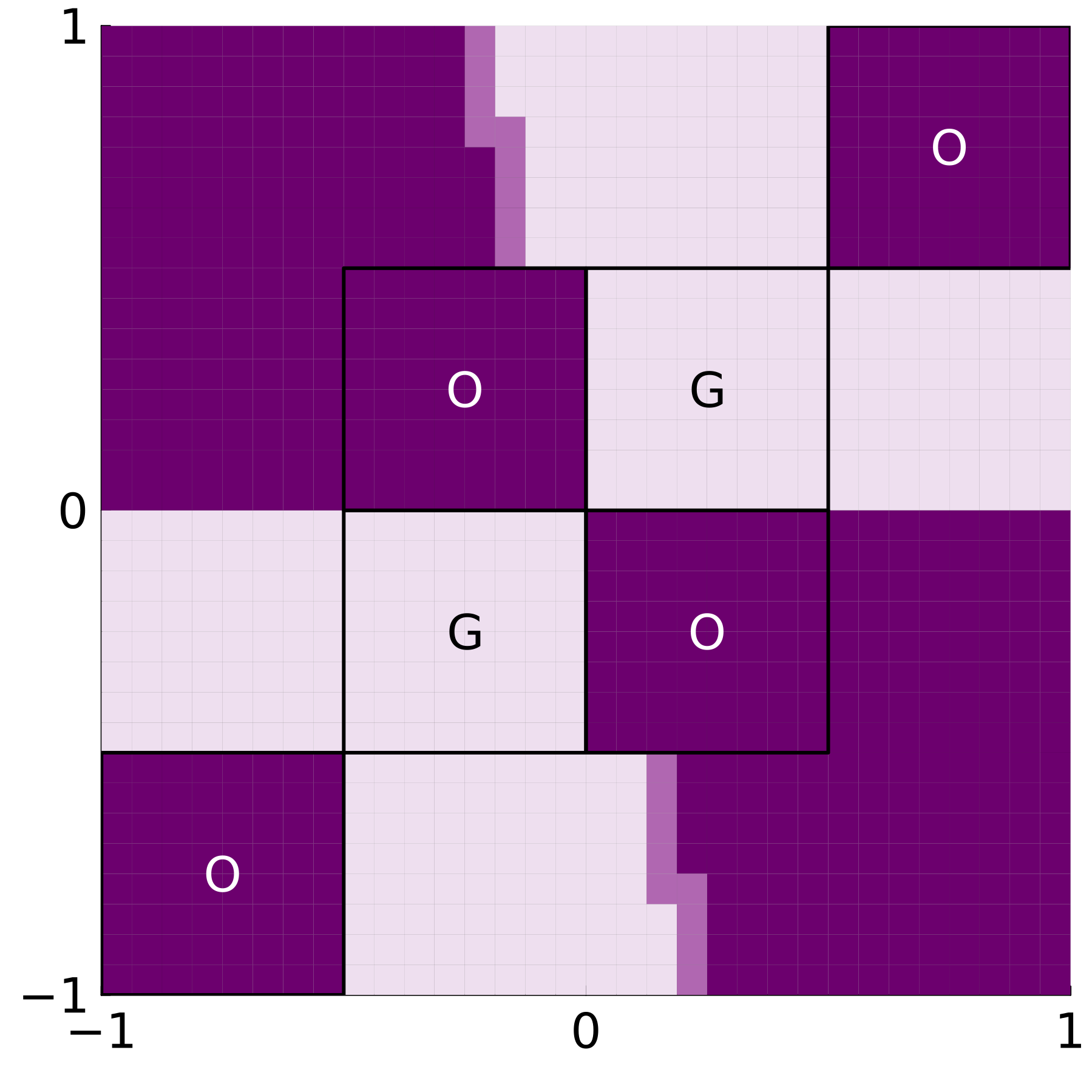}
    \vspace{\vertspace}
    \caption{Known System}
    \label{fig:true-boundeduntil-1step}
    \end{subfigure}
    \begin{subfigure}{\figwidth}
    \includegraphics[width=\linewidth]{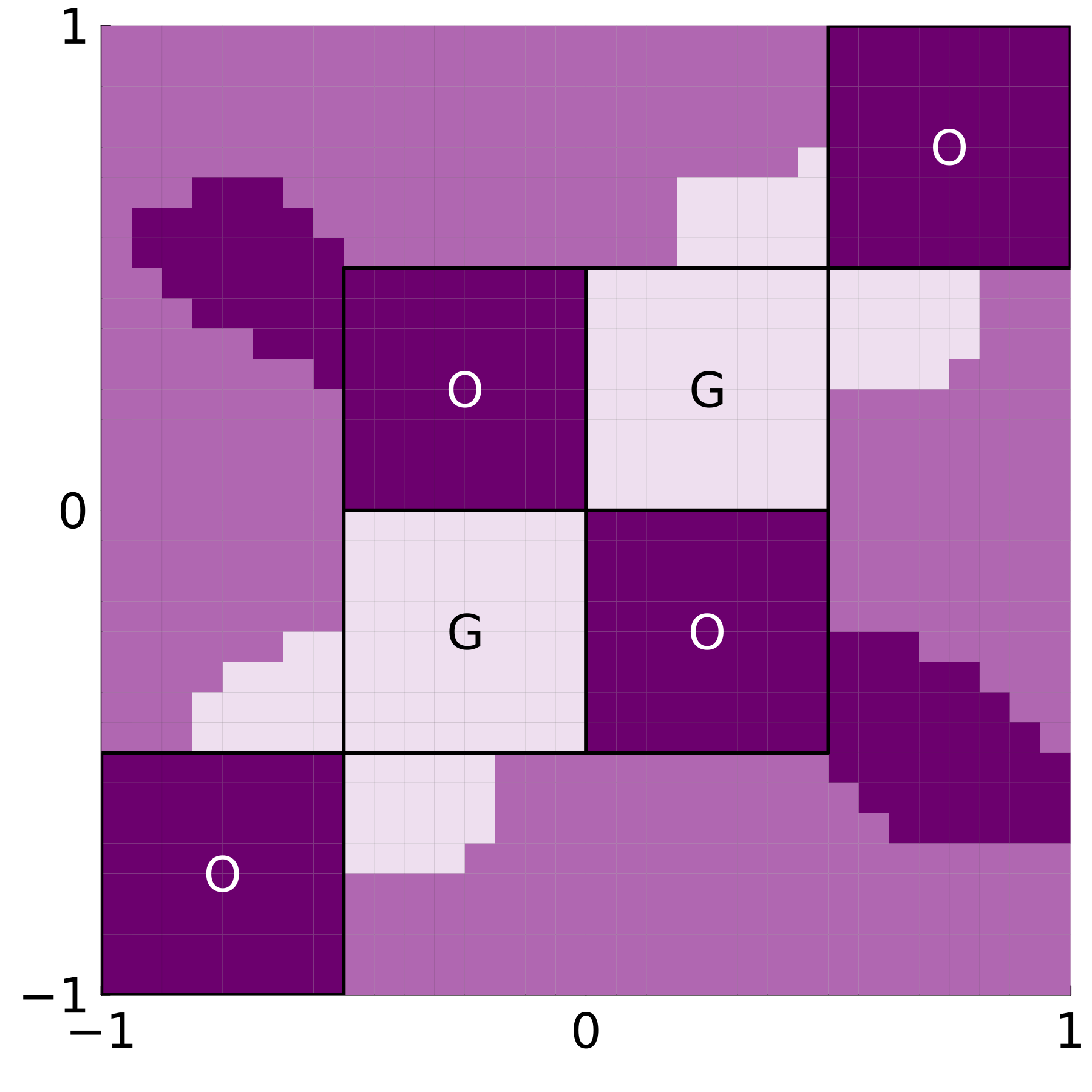}
    \vspace{\vertspace}
    \caption{100 data points}
    \label{fig:example100-boundeduntil-1step}
    \end{subfigure}
    \begin{subfigure}{\figwidth}
    \includegraphics[width=\linewidth]{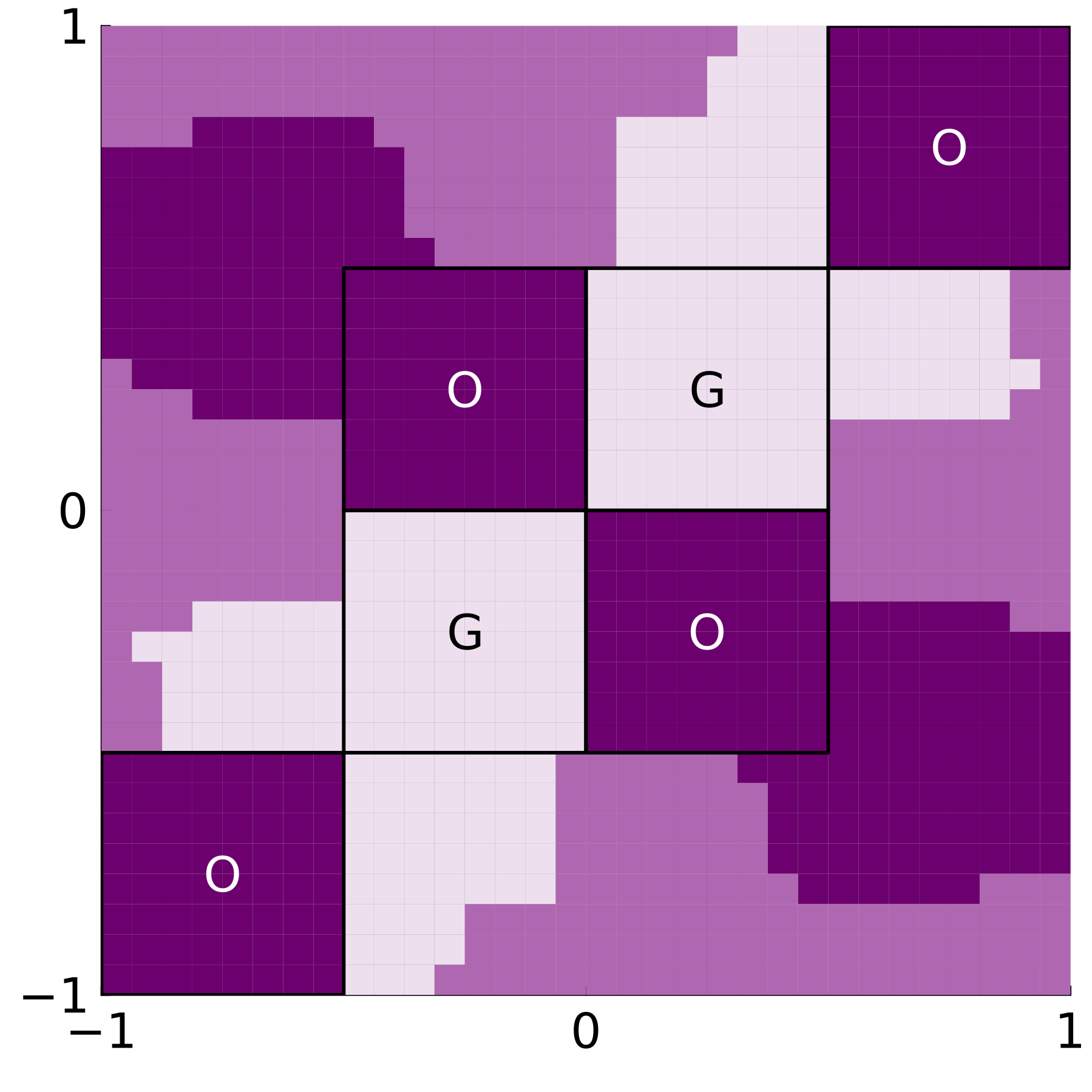}
    \vspace{\vertspace}
    \caption{500 data points}
    \label{fig:example500-boundeduntil-1step}
    \end{subfigure}
    \begin{subfigure}{\figwidth}
    \includegraphics[width=\linewidth]{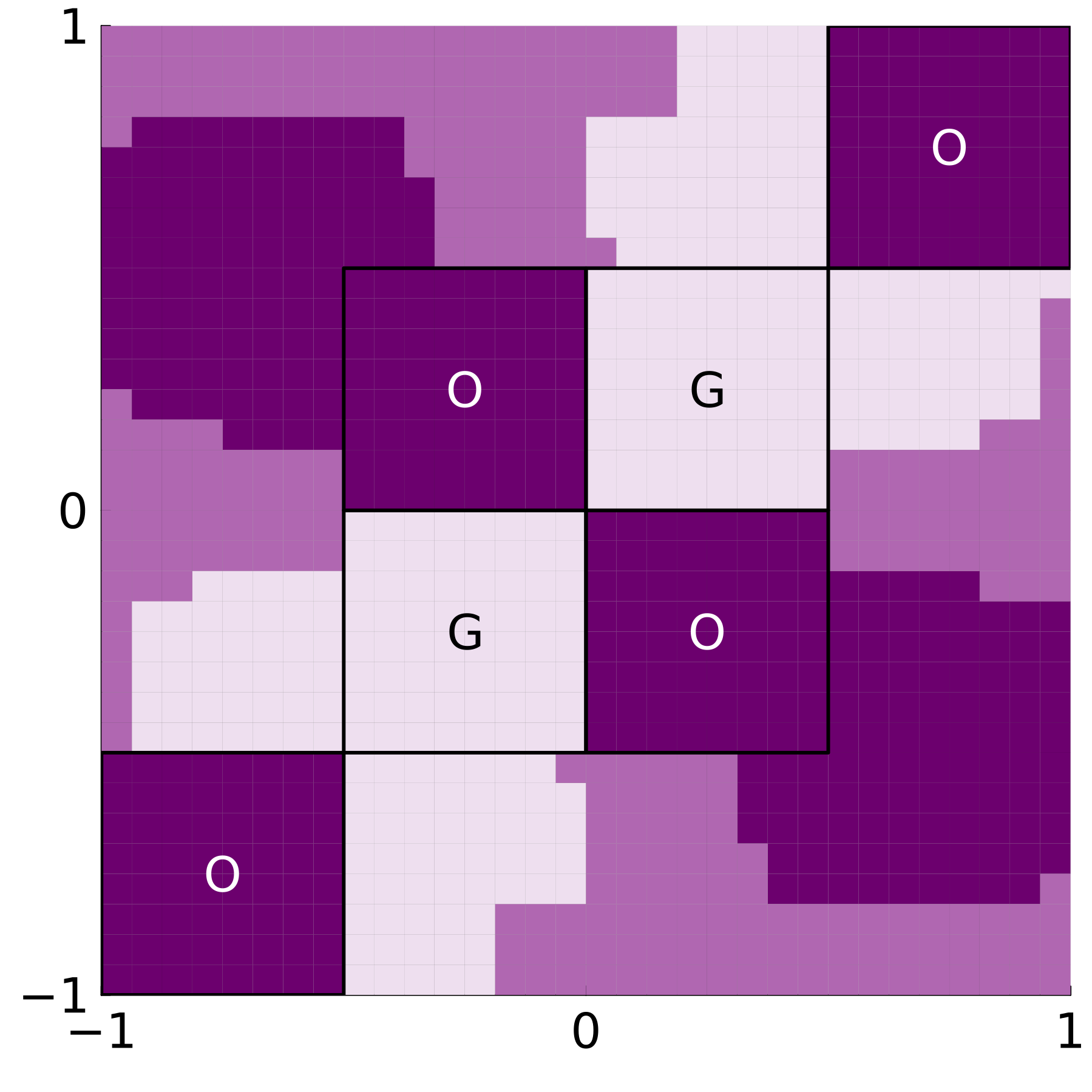}
    \vspace{\vertspace}
    \caption{2000 data points}
    \label{fig:example2000-boundeduntil-1step}
    \end{subfigure}
    \caption{The unbounded ($k=\infty$) verification results for the known system and learned system using increasingly large datasets subject to $\Spec=\Pr_{\geq0.95}[\neg O~\UntilOp~D$].
    }
    \label{fig:boundeduntil-verification}
\end{figure*}

\section{Case Studies}
\label{sec:studies}

We demonstrate the efficacy of our data-driven verification framework through several case studies. 
We begin with verifying a single-mode linear system and compare it with a barrier function approach. 
Then, we show the influence of different parameters on the quality of the verification results and computation times. 
Finally, we demonstrate the efficacy of our approach on switching and nonlinear systems. 

For each demonstration, we consider compact state space $\FullSet$ and use a uniform-grid discretization of $\FullSet$ of cells with side length $\DiscretizationParam$.
The synthetic datasets are generated by uniformly sampling states from $\FullSet$, propagating through $f$, and adding Gaussian noise with zero mean and $0.01$ standard deviation.
The squared-exponential kernel with length scale $1$ and scale factor $1$ and the zero mean function are used as the priors for GP regression, and we selected an upper-bound for the numerator for each example to employ Proposition~\ref{eq:RKHS_norm}.
An implementation of our tool is available online\footnote{\url{https://github.com/jmskov/TransitionIntervals.jl}}.

\subsection{Linear System}
First we perform a verification of an unknown linear system with dynamics 
$$
\x(k+1) = \begin{bmatrix}
    0.4 & 0.1\\
    0.0 & 0.5
    \end{bmatrix} \x(k), \quad \y(k) = \x(k) + \noise(k)
$$
and regions of interest with labels $D$ and $O$ as shown in Fig.~\ref{fig:boundeduntil-verification}.
The PCTL formula is $\Spec=\Pr_{\geq0.95} [\neg O~\UntilOp~D]$, which states that a path initializing at an initial state does not visit $O$ until $D$ is reached with a probability of at least 95\%. 
The verification result using the known dynamics is shown in Fig.~\ref{fig:true-boundeduntil-1step}, which provides a basis for comparing the results of the learning-based approach.
This baseline shows the initial states that belong to $\Qyes$, $\Qno$, and the indeterminate set $\Qposs$.
Some states belong to $\Qposs$ due to the discretization, and subsets of $\Qposs$ either satisfy or violate $\Spec$.
Fig.~\ref{fig:true-boundeduntil-1step} asserts that most discrete states ideally belong to either $\Qyes$ or $\Qno$.

The verification results of the unknown system with various dataset sizes are shown in Fig.~\ref{fig:example100-boundeduntil-1step} to Fig.~\ref{fig:example2000-boundeduntil-1step}.
With 100 data points, the learning error dominates the quality of the verification result as compared to the baseline.
Nevertheless, even with such a small dataset (used to learn $f$), the framework is able to identify a subset of states belonging to $\Qyes$ and $\Qno$ as shown in Fig.~\ref{fig:example100-boundeduntil-1step}.  
Figs.~\ref{fig:example500-boundeduntil-1step} and \ref{fig:example2000-boundeduntil-1step} show these regions growing when using 500 and 2000 datapoints respectively. 
The set of indeterminate states, $\Qposs$, begins to converge to the baseline result but the rate at which it does slows with larger datasets.
The computation times for the datasets with 100, 500, and 2000 points were 2.9, 10.3, and 140 seconds, respectively.

Compared to the abstraction built using the known system, the learning-based abstractions are 48, 24, and 16 times more conservative, respectively, in terms of average transition interval widths. In the limit of infinite data, the error could be driven to zero even with imperfect RKHS constant approximations, which would cause the learning-based results to converge to the known results in Fig.~\ref{fig:true-boundeduntil-1step}.
Additional uncertainty reduction depends on refining the states in $\Qposs$ through further discretization at the expense of additional memory and computational time.
Below, we examine the trade-offs between discretization resolution with the quality of the final results and the effect on total computation time.

\subsubsection{Comparison with Barrier functions}

We compare our approach with a barrier method based on Sum-of-Squares (SOS) polynomials with barrier degree eight~\cite{jagtap2020formal,wajid2022formal}.
The verification task is to remain within the state $[-1,0.5]^2$ while avoiding the region with label $O$ over one then five time steps with a probability at least 95\%.
The key limitations of the barrier method are threefold.
First, barrier functions are generated with respect to an initial set, so the generation of the barrier is repeated for every discrete state. 
Second, barrier functions are unable to generate meaningful results when the initial state is adjacent to an unsafe state due to existing methods finding only smooth barriers (such as a polynomial function).
Finally, the barrier methods use the uncertainty from Proposition~\ref{prop:distbound} as a confidence wrapper, meaning the results hold with a confidence.  In this case study, we set the confidence to $95\%$.

The methods are compared first using the true system, and then using 500 datapoints to learn the system.
The state color in Figure \ref{fig:comparison} indicates satisfaction using our approach, and the state outlines indicate those using the barrier method.
Our approach provided more safety guarantees in both the known and data-driven settings, especially beyond one step. 
After five steps, the barrier-based method cannot guarantee satisfaction anywhere.
The complete runtimes for the learned results are 0.82 seconds for our approach and 12 seconds for the barrier method.
Our method overcomes each of the limitations of the barrier approach, while running ten times faster.

\begin{figure}[t]
    \centering
    \newcommand\figwidth{0.24\textwidth}
    \newcommand\vertspace{-7mm}
    \begin{subfigure}{\linewidth}
    \centering
    \includegraphics[width=0.35\linewidth]{images/legend.png}
    \end{subfigure}\\

    \begin{subfigure}{\figwidth}
    \includegraphics[width=\linewidth]{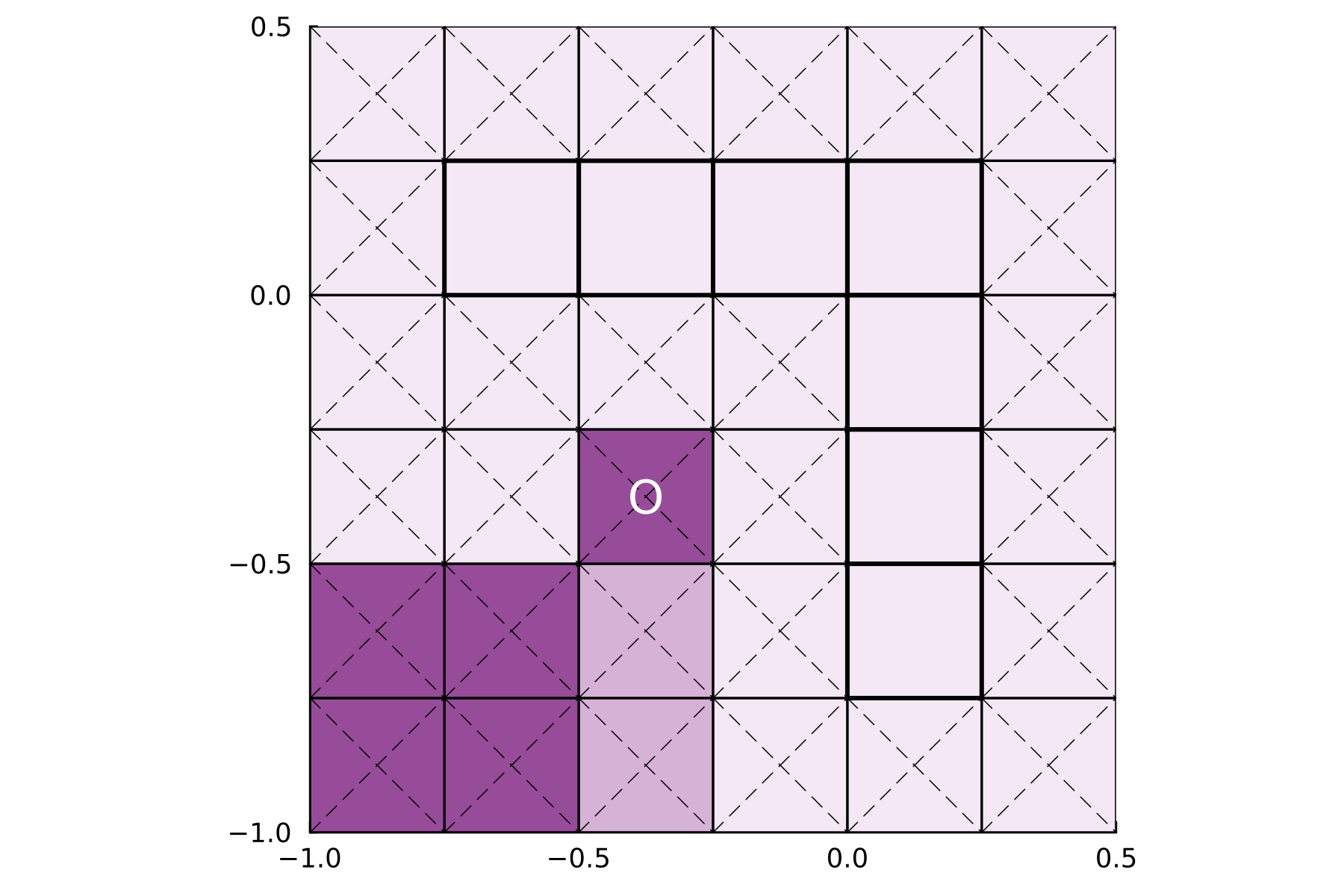}
    \vspace{\vertspace}
    \caption{Known System, $k=1$}
    \label{fig:comp-known}
    \end{subfigure}
    \begin{subfigure}{\figwidth}
        \includegraphics[width=\linewidth]{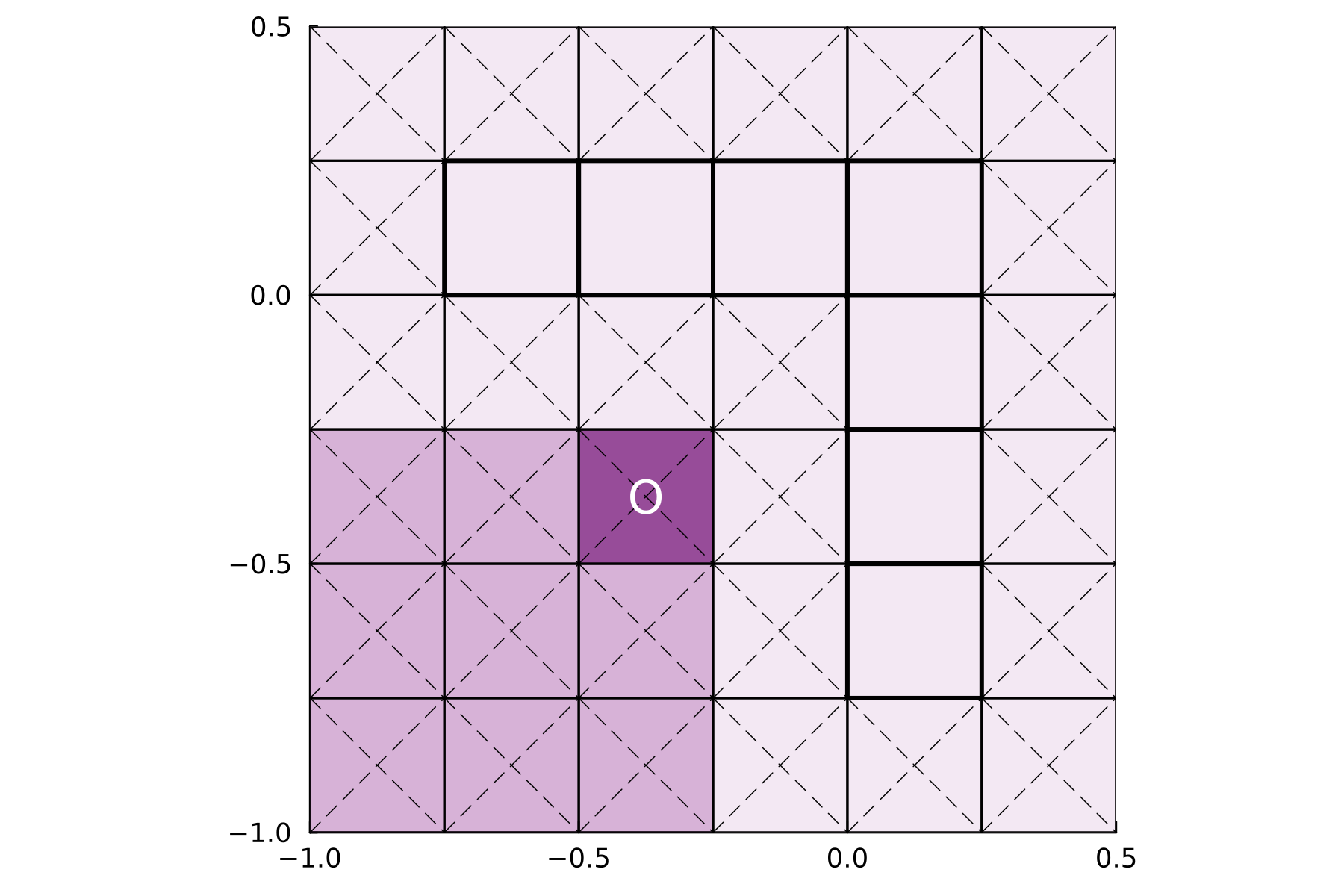}
        \vspace{\vertspace}
        \caption{Learned System, $k=1$}
        \label{fig:comp-data}
        \end{subfigure}
    \begin{subfigure}{\figwidth}
        \includegraphics[width=\linewidth]{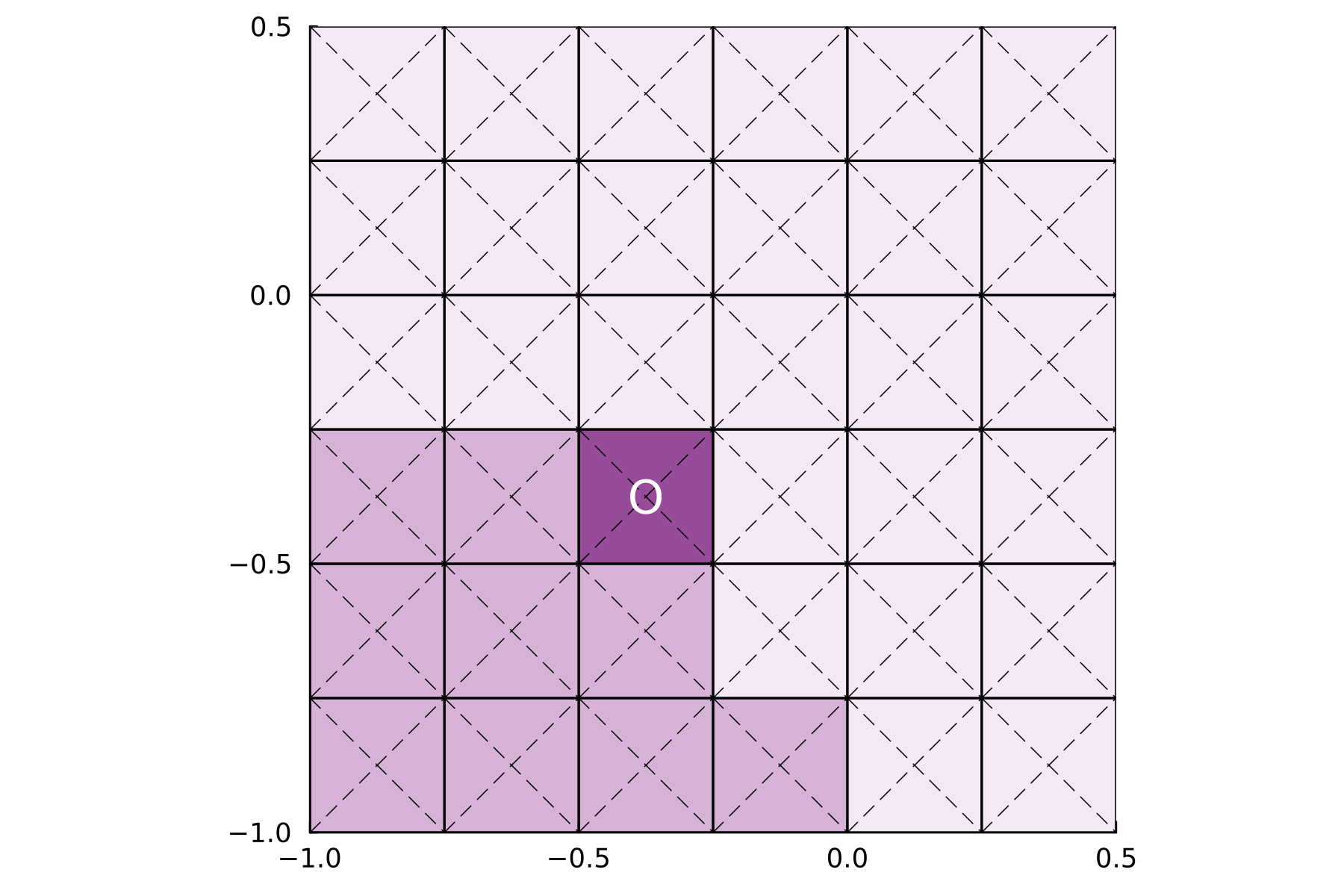}
        \vspace{\vertspace}
        \caption{Learned System, $k=5$}
        \label{fig:comp-data-5}
        \end{subfigure}
    \caption{Verification using our approach (colors) and SOS barriers (outlines) for the known and learned systems.
    Bold outlines indicate a satisfying barrier is found for that state, while dashed Xs indicate the barrier does not satisfy safety.}
    \label{fig:comparison}
\end{figure}

\subsection{Parameter Considerations}
The verification results on a linear system and computation times are compared for varying 
the learning error and discretization parameters, $\distBound$ and $\Delta$ respectively,
and the efficacy of using Proposition~\ref{prop:distbound} to choose $\distBound$ is highlighted.
The unknown linear system is given by 
$$
\x(k+1) = \begin{bmatrix}
    0.8 & 0.5\\
    0.0 & 0.5
    \end{bmatrix} \x(k), \quad \y(k) = \x(k) + \noise(k).
$$
The specification is the safety formula $\Pr_{\geq0.95} [\GloballyOp~ \FullSet]$, i.e., the probability of remaining within $\FullSet$ is at least 95\%. 

We used 200 datapoints for the verification of this system.
To compare the effect of $\distBound$, we manually selected values to be applied for every transition and compared the results with using the criterion in Proposition~\ref{prop:distbound}.
The discretization fineness is studied by varying $\Delta$ to produce abstractions consisting of 17 states up to 10001 states. 

Fig.~\ref{fig:discretization-epsilon-comparison} compares the average satisfaction probability interval sizes defined by 
$$
\bar{p} = \frac{1}{|\Qimdp|}\sum_{\qimdp\in\Qimdp}\pup(\qimdp) - \plow(\qimdp)
$$
where a smaller value indicates more certain (less-conservative) intervals, i.e., 
upper bound and lower bound probabilities are similar.
As the uniform choice of $\distBound$ approaches 0.15, the average probability interval size decreases although this trend reverses as $\distBound$ decreases further. 
This is due to $\distBound$ becoming too small to get non-zero regression error probabilities from Proposition~\ref{th:RKHS}. 
However, by applying Proposition~\ref{prop:distbound}, we obtain an optimal value for $\distBound$ for every region pair, resulting in the smallest interval averages $\bar{p}$ for every discretization.

Note that there is an asymptotic decay in the average probability interval size as the discretization gets finer.
This however comes at a higher computation cost.
The total computation times for framework components are shown in Table~\ref{tab:discretization-effect}.
The most demanding component involves discretizing $\FullSet$ and determining the image over-approximation and error upper bound of each state as discussed in Section~\ref{subsec:tbounds}. 
The transition probability interval calculations involve checking the intersections between the image over-approximations and their respective target states.
The unbounded verification procedure is fast relative to the time it takes to construct the \imdp abstraction. 

\begin{figure}
    \newcommand{\figwidth}{0.45\textwidth}
    \centering
    \includegraphics[width=\figwidth]{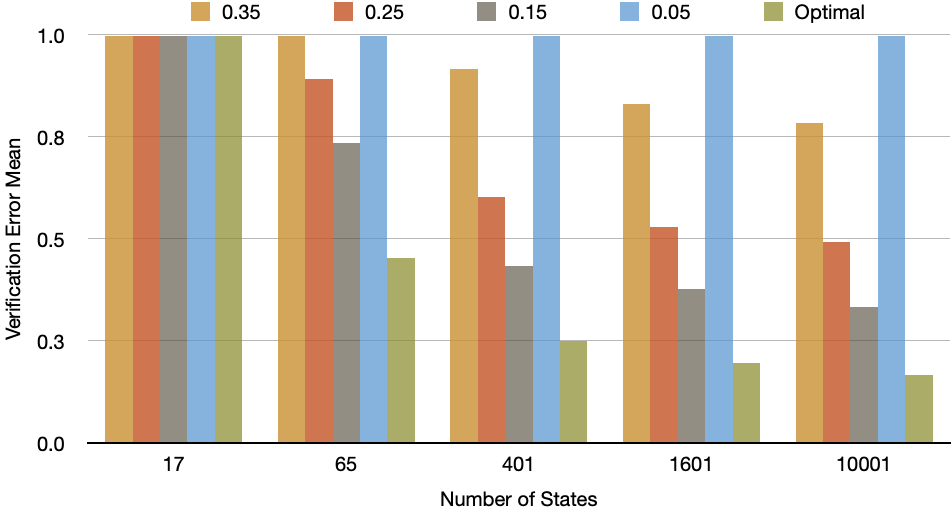}
    \caption{Effect of the size of discretization and $\distBound$ on the average satisfaction probability interval size (verification error).}
    \label{fig:discretization-epsilon-comparison}
\end{figure}

\begin{table}[t]
\centering
\caption{Component computation times for an increasing number of states. Components include discretization and image over-approximations, transition probability interval calculations, and final verification times in seconds.}
\label{tab:discretization-effect}
    \begin{tabular}{
        @{}
        l  
        S
        S
        S
        @{}
        }
        \toprule
        \multirow{2}{*}{States} & \multicolumn{3}{c}{\underline{ \hspace{1.9cm} Computation Time (s) \hspace{1.6cm}}} \\
         & {Disc. + Images} & {\quad Transitions} & {\quad Verification} \\
        \hline
        65 & 0.202 & 0.000691 & 0.00104\\
        401 & 1.3 & 0.0191 & 0.000358\\
        1,601 & 5.39 & 0.309 & 0.00127\\
        10,001 & 32.3 & 12.2 & 0.00844\\
        40,001 & 135.0 & 204.0 & 0.0858\\
        62,501 & 208.0 & 516.0 & 0.151\\
        \bottomrule
    \end{tabular}
\end{table}

\subsection{Switched System}
We extend the previous example to demonstrate the verification of a system with multiple actions.
The unknown system $f(\x(k),a_i) = A_i\x(k)$ has two actions $\ControlSet=\{a_1,a_2\}$ where
\begin{align*}
    A_1 = \begin{bmatrix}
    0.8 & 0.5\\
    0.0 & 0.5
    \end{bmatrix},~~
    A_2 = \begin{bmatrix}
    0.5 & 0\\ 
    -0.5 & 0.8
    \end{bmatrix}.
\end{align*}
Verification was performed subject to $\Spec=\Pr_{\geq0.95} [\GloballyOp^{\leq k}~\FullSet]$ (the probability of remaining in $\FullSet$ within the $k$ time steps is at least 95\%) using $400$ datapoints for each action to construct the GPs.
The $k=1$ and $k=\infty$ results, in Figs.~\ref{fig:switched-1-step} and \ref{fig:switched-inf-step} respectively, show the regions where the system satisfies $\Spec$ regardless of which action is selected. 
The framework can handle a system with an arbitrary number of actions so long there is data available for each action.

\begin{figure}[tp]
    \centering
    \newcommand\figwidth{0.20\textwidth}
    \newcommand\vertspace{-7mm}
    \begin{subfigure}{\linewidth}
        \centering
        \includegraphics[width=0.35\linewidth]{images/legend.png}
        \end{subfigure}
    \begin{subfigure}{\figwidth}
    \includegraphics[width=\linewidth]{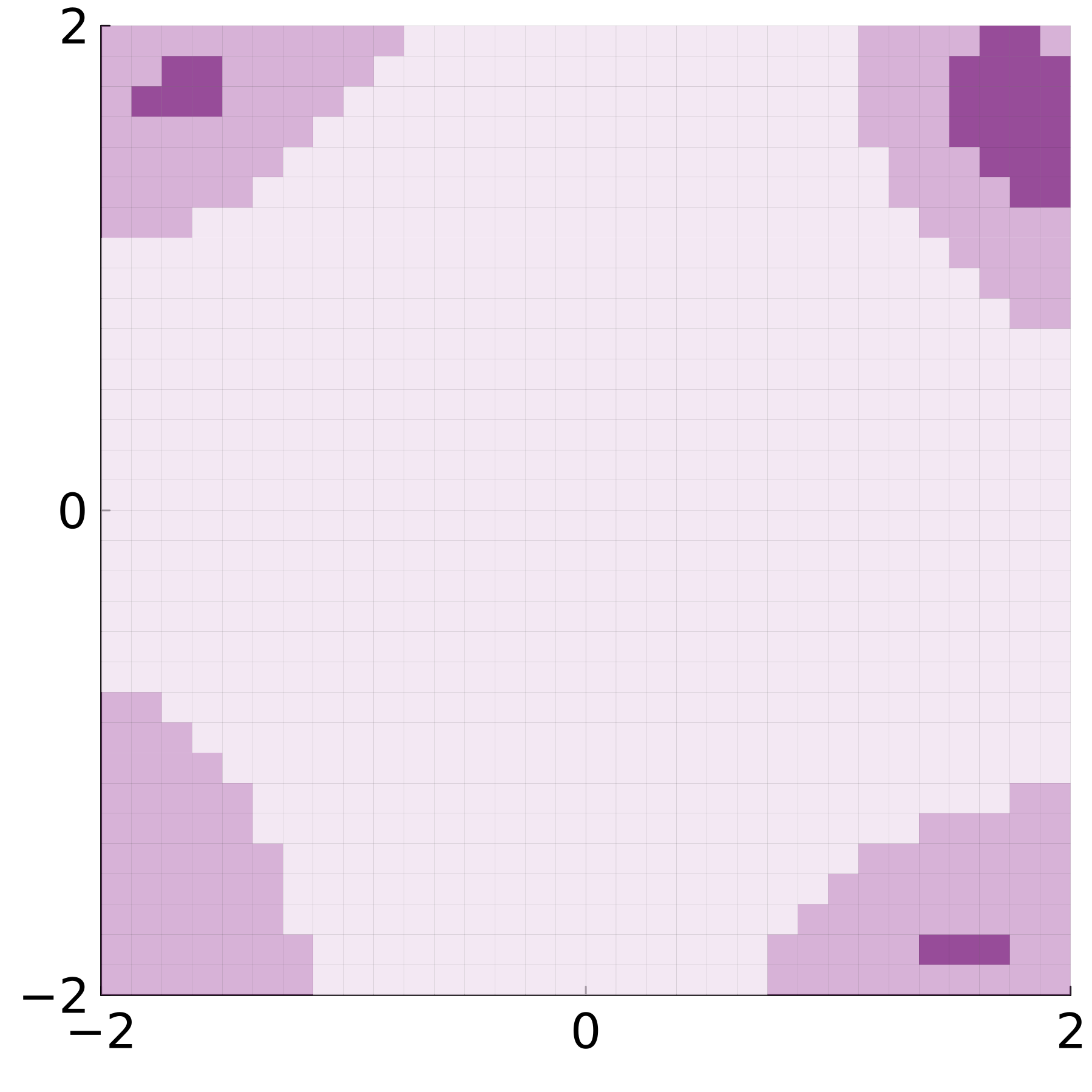}
    \vspace{\vertspace}
    \caption{$k=1$}
    \label{fig:switched-1-step}
    \end{subfigure}
    \begin{subfigure}{\figwidth}
    \includegraphics[width=\linewidth]{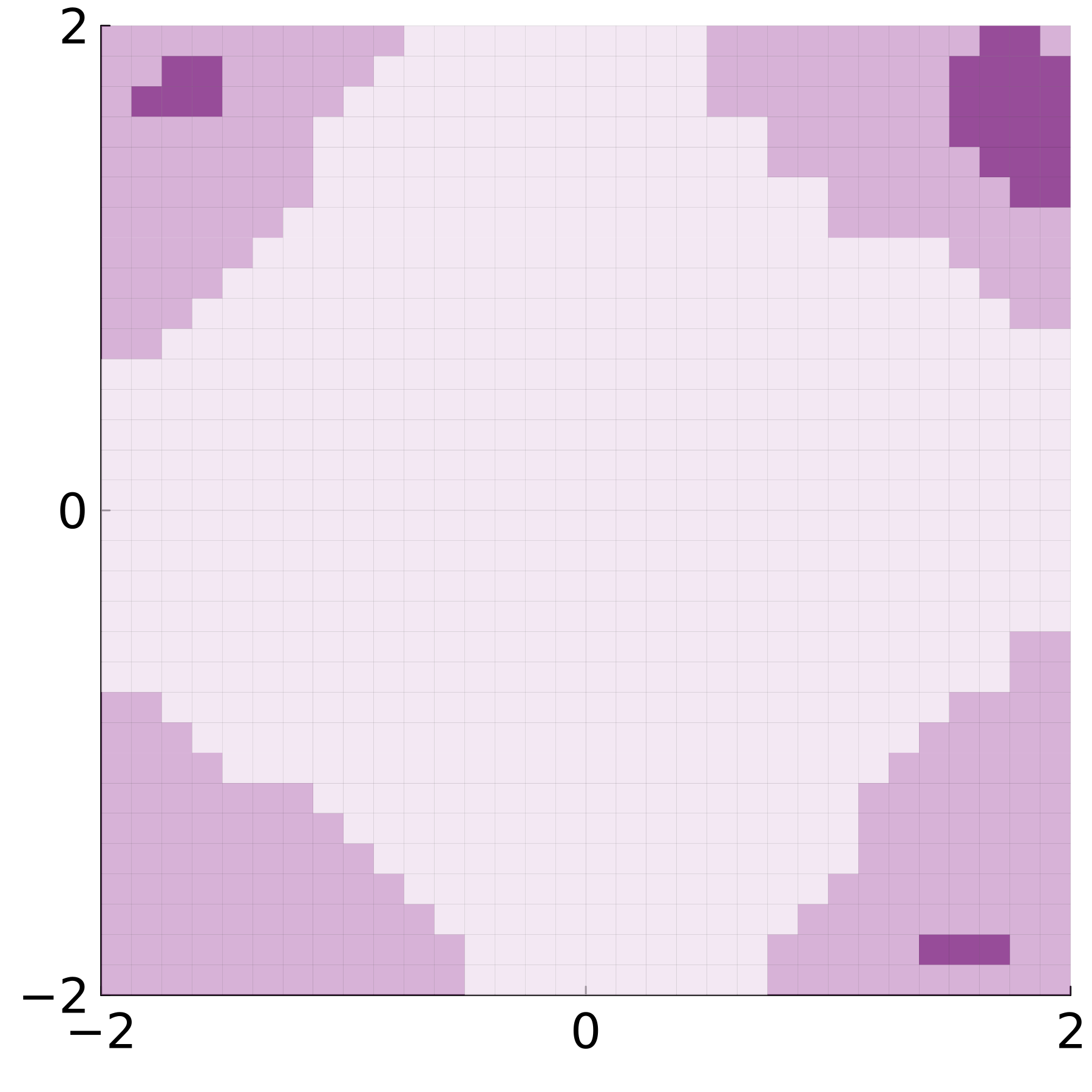}
    \vspace{\vertspace}
    \caption{$k=\infty$}
    \label{fig:switched-inf-step}
    \end{subfigure}
    \caption{Verification results for the switched system subject to $\Spec=\Pr_{\geq0.95} [\GloballyOp^{\leq k} \FullSet]$.
    }
    \label{fig:switched}
\end{figure}

\subsection{Autonomous Nonlinear System}
We perform verification on the unknown nonlinear system 
$$
f(\x(k))
= \x(k) + 0.2
\begin{bmatrix}
\x^{(2)}(k) + (\x^{(2)}(k))^2e^{\x^{(1)}(k)}\\  \x^{(1)}(k)
\end{bmatrix}
$$
with the vector field shown in Fig.~\ref{fig:nl5-true-vector-field}. 
The system is unstable about its equilibrium points at $(0,0)$ and $(0,-1)$ and slows as it approaches them. 
The flow enters the set $\FullSet$ in the upper-left
quadrant, and exits in the others. 
This nonlinear system could represent a closed-loop control system 
with dangerous operating conditions near the equilibria that should be avoided.

\begin{figure}[b]
    \centering
    \newcommand\figwidth{0.20\textwidth}
    \newcommand\vertspace{-7mm}
    \begin{subfigure}{\figwidth}
        \includegraphics[width=\linewidth,trim={1.5cm, 1.5cm, 0.75cm, 1.5cm}, clip]{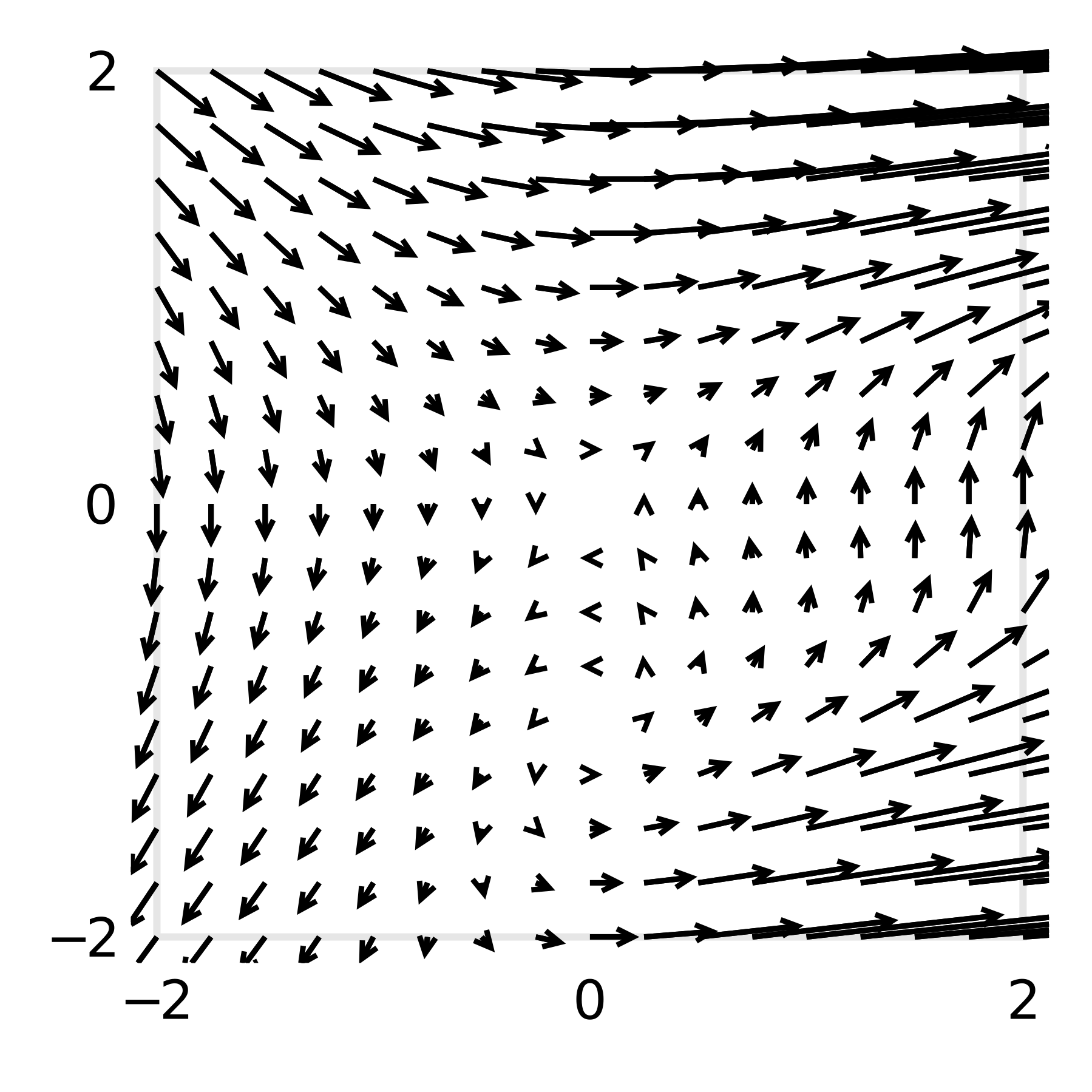}
        \vspace{\vertspace}
        \caption{True vector field}
        \label{fig:nl5-true-vector-field}
    \end{subfigure}
    \begin{subfigure}{\figwidth}
        \includegraphics[width=\linewidth,trim={1.5cm, 1.5cm, 0.75cm, 1.5cm}, clip]{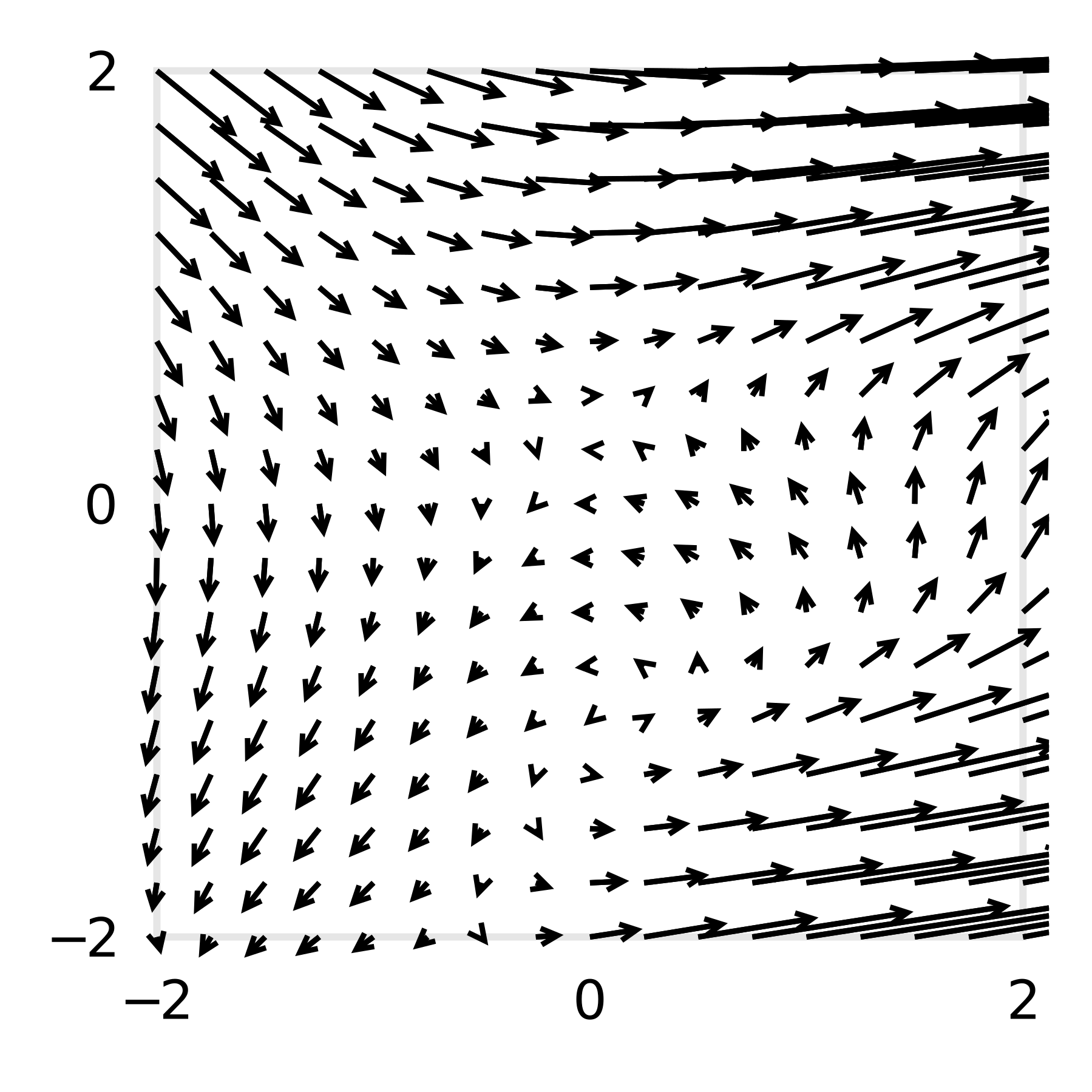}
        \vspace{\vertspace}
        \caption{Learned vector field}
        \label{fig:nl5-learned}
    \end{subfigure}
    \caption{Learned vector field for the autonomous nonlinear system.
    }
    \label{fig:nl5 vectorfield}
\end{figure}

\begin{figure}[t]
    \centering
    \newcommand\figwidth{0.20\textwidth}
    \newcommand\vertspace{-7mm}
    \begin{subfigure}{0.5\textwidth}
    \centering
    \includegraphics[width=0.35\linewidth]{images/legend.png}
    \end{subfigure}\\
    \begin{subfigure}{\figwidth}
    \includegraphics[width=\linewidth]{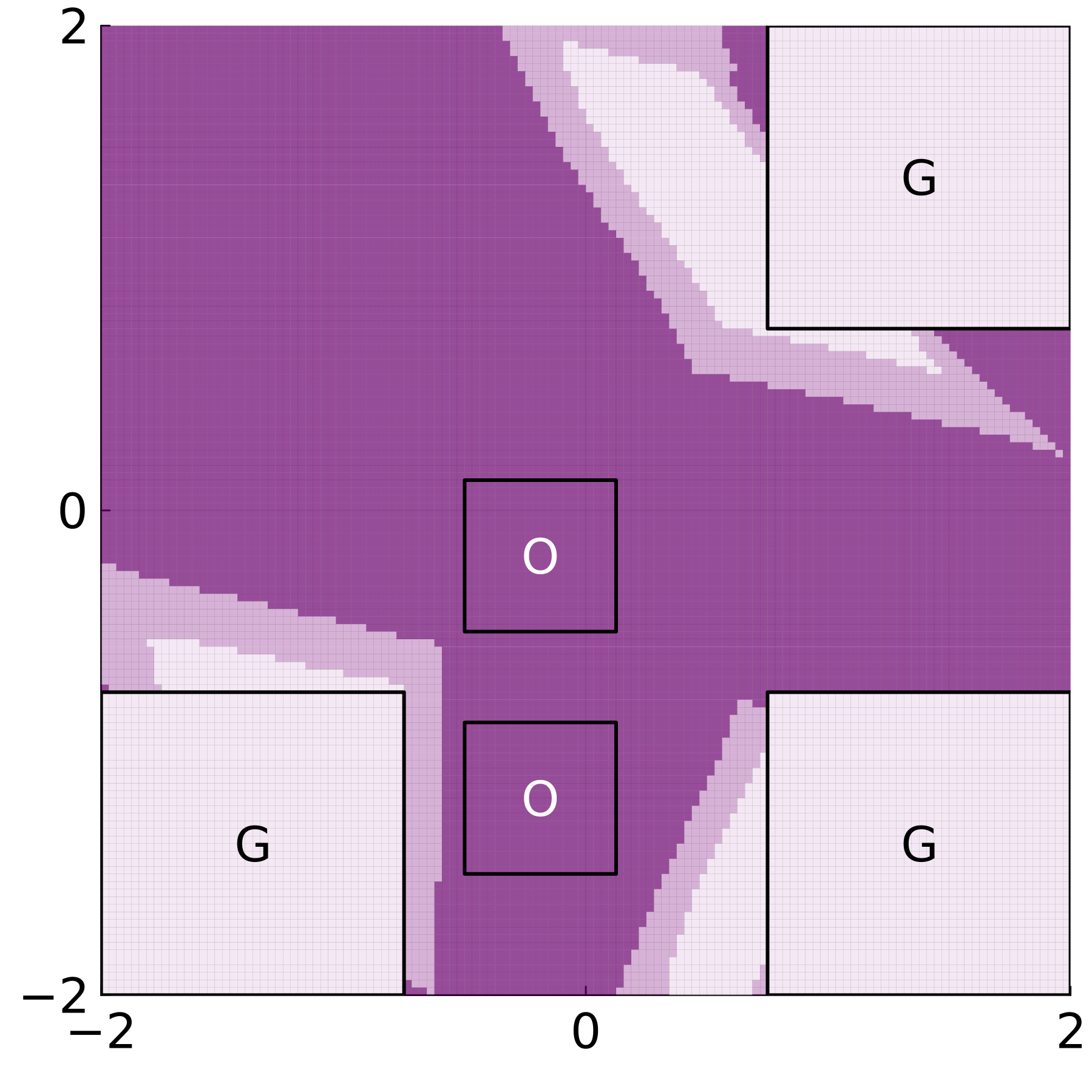}
    \vspace{\vertspace}
    \caption{$k=1$}
    \label{fig:nl5-1-step}
    \end{subfigure}
    \begin{subfigure}{\figwidth}
    \includegraphics[width=\linewidth]{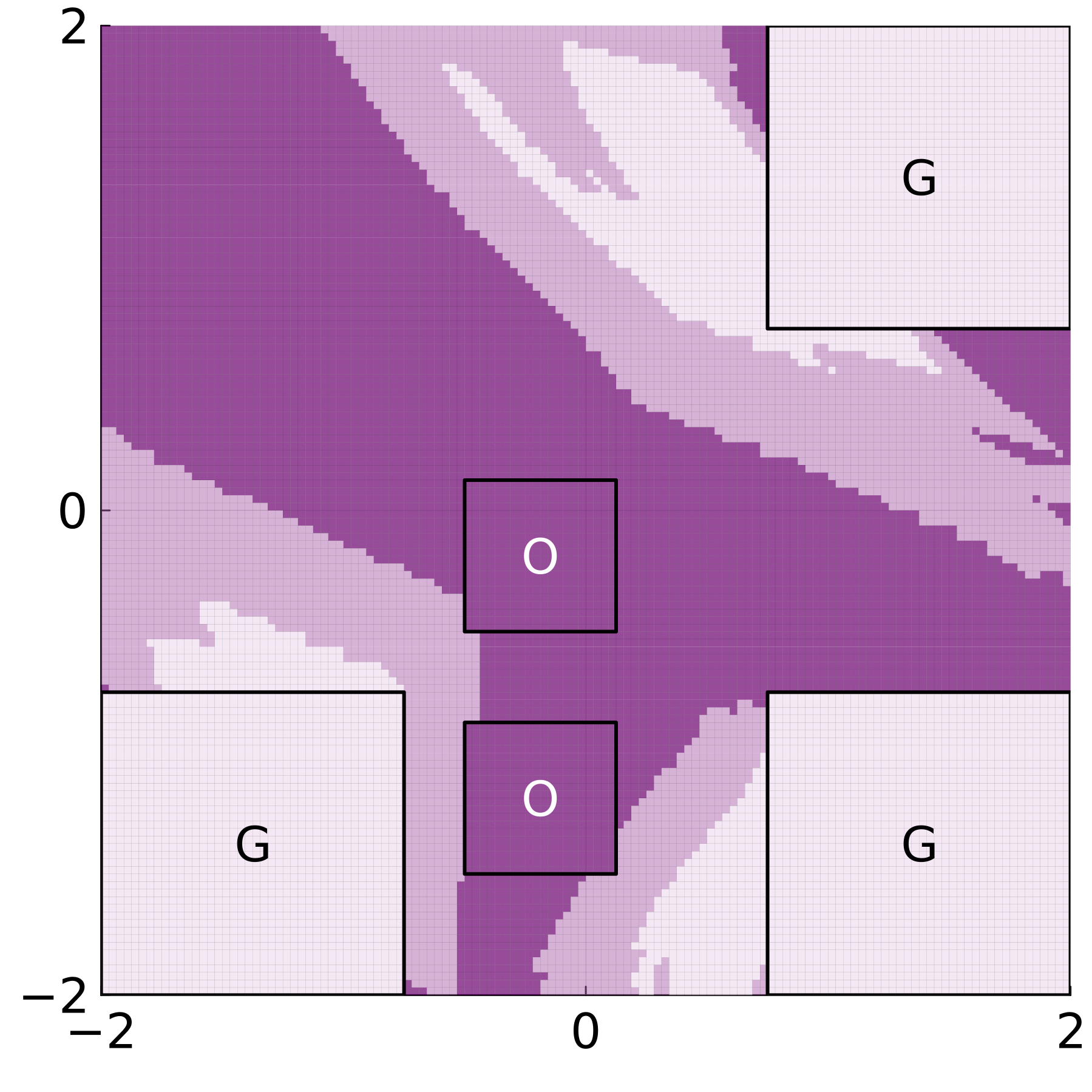}
    \vspace{\vertspace}
    \caption{$k=2$}
    \label{fig:nl5-2-step}
    \end{subfigure}
    \begin{subfigure}{\figwidth}
    \includegraphics[width=\linewidth]{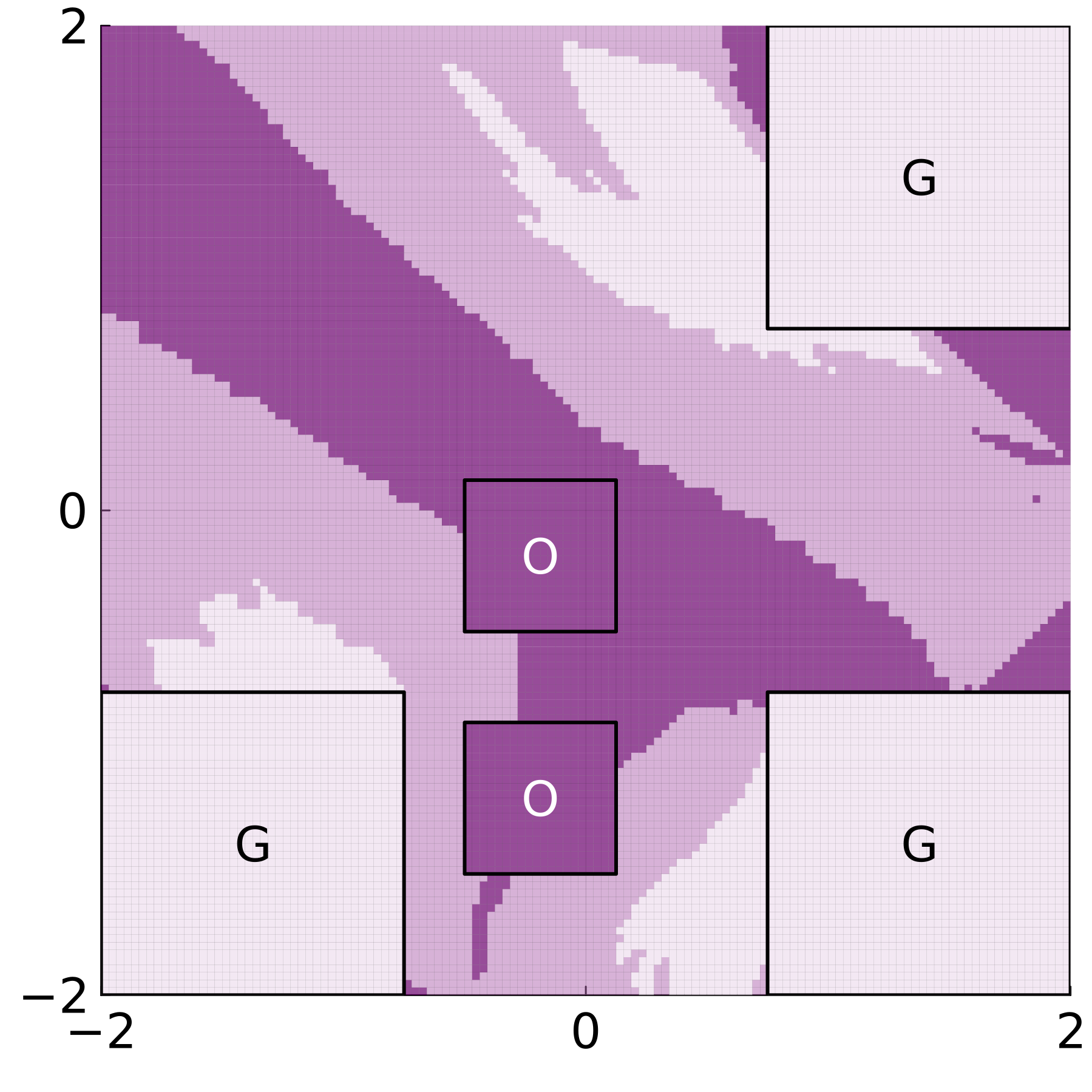}
    \vspace{\vertspace}
    \caption{$k=3$}
    \label{fig:nl5-3-step}
    \end{subfigure}
    \begin{subfigure}{\figwidth}
    \includegraphics[width=\linewidth,trim={1.5cm, 1.5cm, 0.75cm, 1.5cm}, clip]{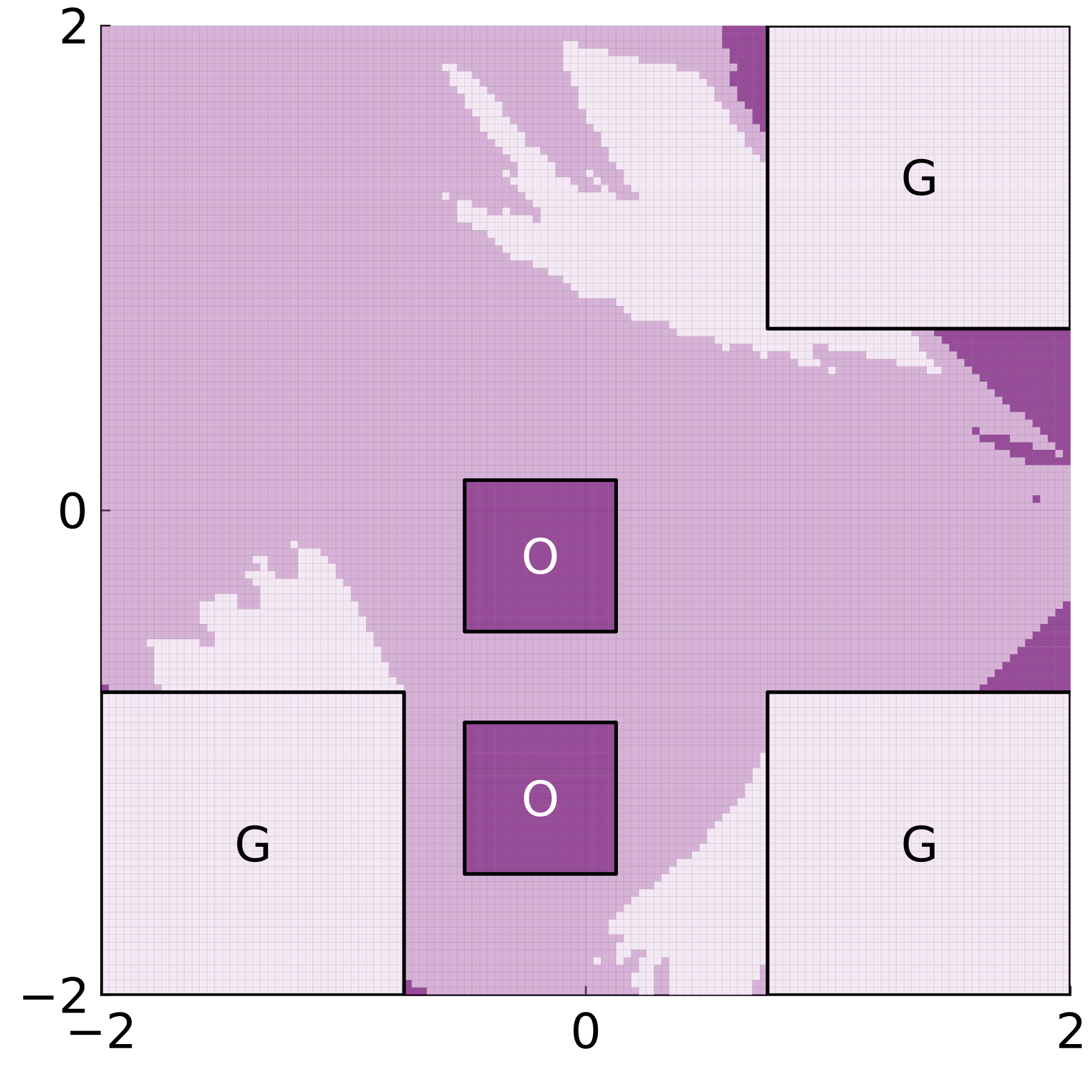}
    \vspace{\vertspace}
    \caption{$k=\infty$}
    \label{fig:nl-inf-step}
    \end{subfigure}
    \caption{Verification results for the nonlinear system subject to $\Spec=\Pr_{\geq 0.95}[O~\BoundedUntilOp G]$.
    }
    \label{fig:nl5}
\end{figure}

A 2000-point dataset is used for GP regression with the learned vector fields shown in Fig.~\ref{fig:nl5-learned}.
The specification $\Spec=\Pr_{\geq 0.95}[\neg O~\BoundedUntilOp G]$ commands to avoid $O$ and remain within $\FullSet$ until $G$ is reached within $k$ steps. 
With $\Delta=0.03125$, the uniform discretization yields 16,384 discrete states which culminated into an end-to-end verification runtime of 38 minutes.
The majority of this time involved computing the images and error bounds for each discrete state, which is the only parallelized part of the framework.

For $k=1$ shown in ~\ref{fig:nl5-1-step}, $\Qyes$ consists of the $G$ regions and nearby states that transition to the $G$ regions with high probability.
The size of the $\Qposs$ buffer between $\Qyes$ and $\Qno$ provides a qualitative measure of the uncertainty embedded in the abstraction. 
As $k$ increases, $\Qyes$ modestly grows while $\Qno$ shrinks.
The $k=\infty$ result shown in Fig.~\ref{fig:nl-inf-step} indicates where guarantees of satisfying or violating $\phi$ can be made over an unbounded horizon. 
There are many more states in $\Qyes$ as compared to even the $k=3$ result.
The increase of $\Qposs$ from $k=1$ to $k=\infty$ is due to the propagating learning and discretization uncertainty over longer horizons.
The discretization resolution plays an important role in the size of $\Qposs$, as the uncertain regions are easily induced by discrete states with possible self-transitions, which is mitigated with a finer discretization.

\subsection{3D Closed-Loop Dubin's Car}

\newcommand{\ImpSpec}{\Pr_{\geq 0.95}\big[\big(\neg O \wedge \big(GF \implies \Pr_{\geq 0.95}[\EventuallyOp~D2]\big)\big) \, \UntilOp ~D1\big]}

\begin{figure}[h!]
    \centering
    \newcommand\vertspace{-7mm}
    \begin{subfigure}{.7\linewidth}
    \centering
    \includegraphics[width=0.5\linewidth]{images/legend.png}
    \end{subfigure}
    \newcommand\figwidth{0.9\linewidth}
    \begin{subfigure}{\figwidth}
    \includegraphics[width=\linewidth,trim={0 2cm 0 2cm},clip]{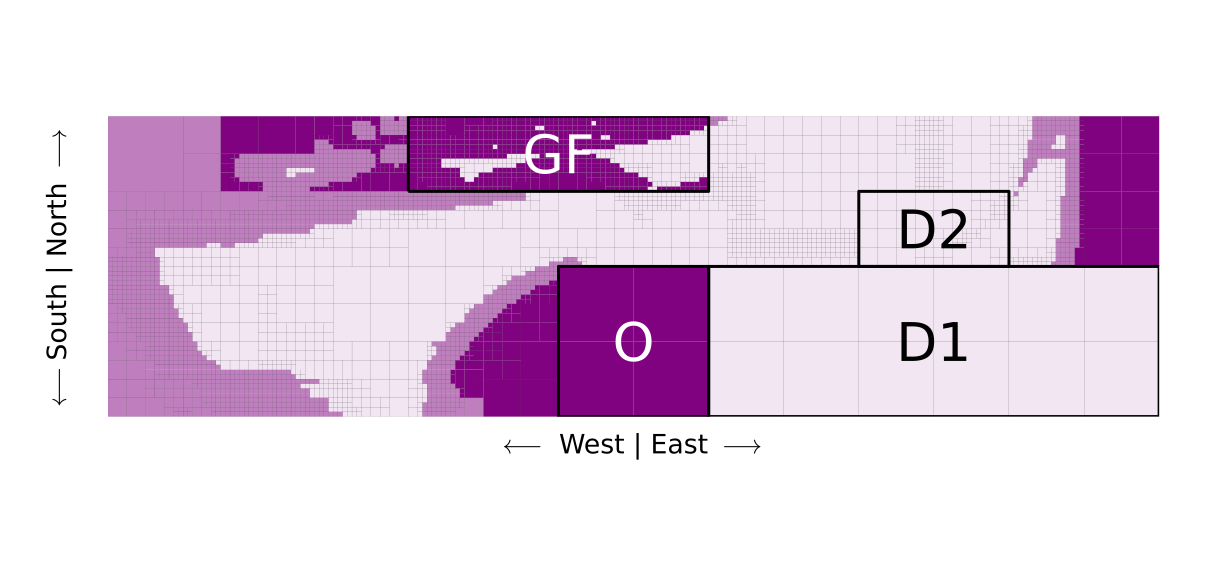}
    \vspace{\vertspace}
    \caption{$\theta_0=\pi/2$ $\rightarrow$}
    \label{fig:dubins1}
    \end{subfigure}
    \begin{subfigure}{\figwidth}
    \includegraphics[width=\linewidth,trim={0 2cm 0 2cm},clip]{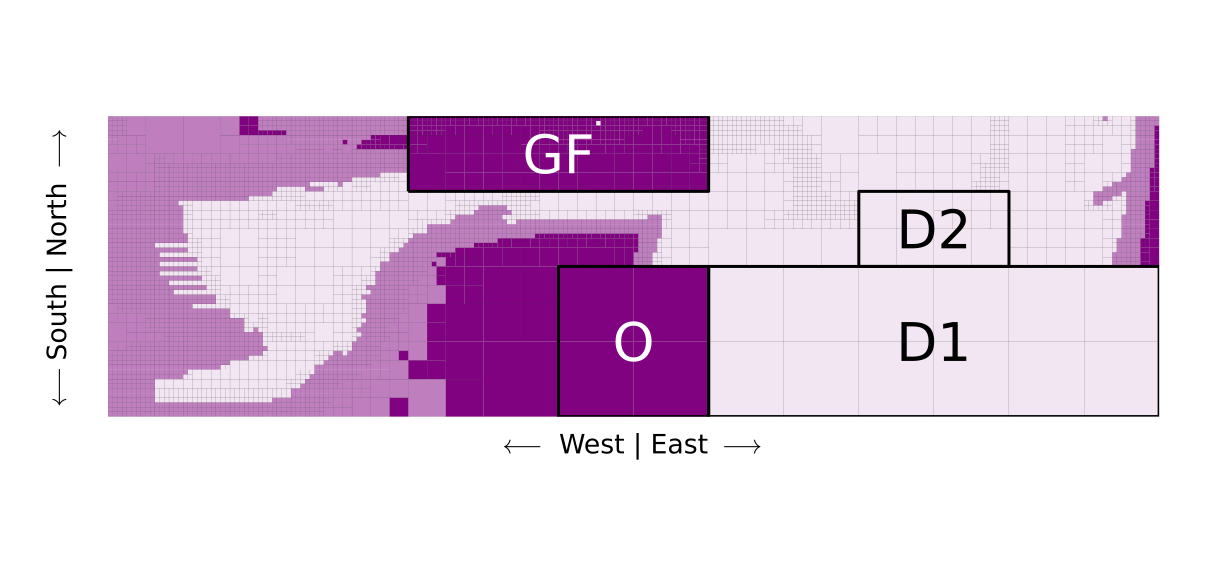}
    \vspace{\vertspace}
    \caption{$\theta_0=9\pi/10$ \rotatebox[origin=c]{-72}{$\rightarrow$}}
    \label{fig:dubins2}
    \end{subfigure}
     \begin{subfigure}{\figwidth}
    \includegraphics[width=\linewidth,trim={0 2cm 0 2cm},clip]{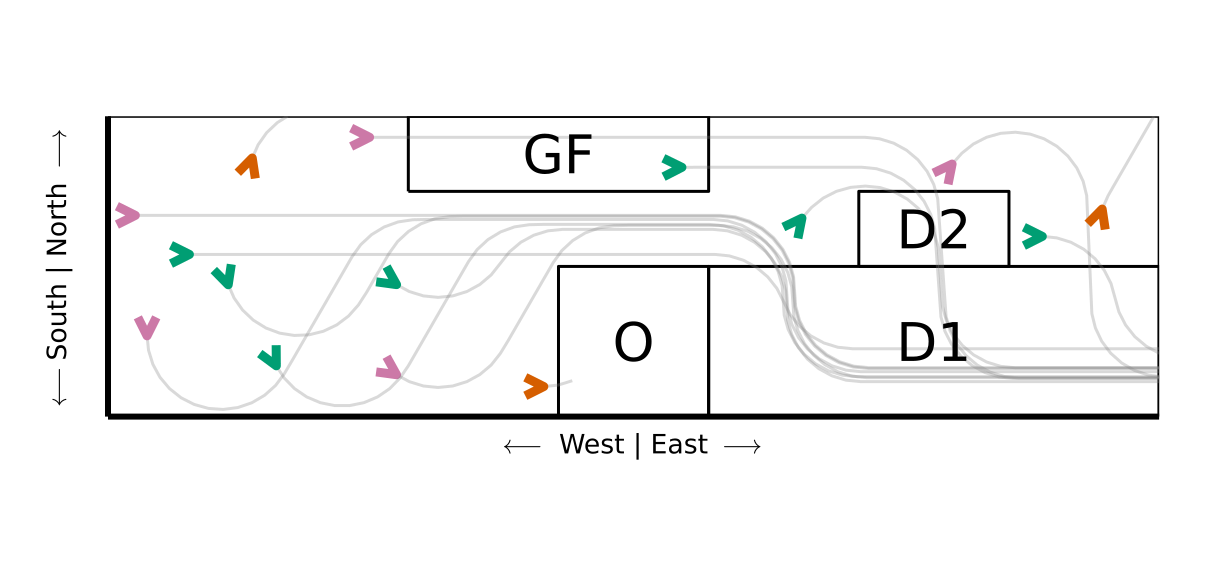}
    \vspace{\vertspace}
    \caption{Trajectories with initial heading colored to indicate prior satisfaction (\textcolor[HTML]{009E73}{$\pmb{\bm{>}}$}), violation (\textcolor[HTML]{D55E00}{$\pmb{\bm{>}}$}), or indeterminant (\textcolor[HTML]{CC79A7}{$\pmb{\bm{>}}$}) guarantees according to the framework.}
    \label{fig:dubins-sim}
    \end{subfigure}
    \caption{Verification results for the Dubin's car system at different initial headings subject to $\ImpSpec$.
    }
    \label{fig:dubins}
\end{figure}
\color{black}

Finally, we consider a discrete-time form of the Dubin's car model  
\begin{equation} 
\begin{bmatrix} x_{k+1} \\
y_{k+1}\\
\theta_{k+1} \\
\end{bmatrix} = 
\begin{bmatrix} x_k\\
y_k\\
\theta_{k} \\
\end{bmatrix} 
+ \Delta T \begin{bmatrix} \sinc(\tilde{u})v \cos(\theta + \tilde{u})\\ 
\sinc(\tilde{u})v \sin(\theta + \tilde{u})\\ 
u \end{bmatrix} \end{equation} 
where $x \in [-1,3]$ and $y \in [0,1]$ are the position, $\theta \in [-0.6, 3.8]$ is the heading angle,
$u$ controls the change in heading angle $\theta$, $\tilde{u}=0.5u\Delta T$, $v$ is the constant speed, and $\Delta T$ is the time discretization parameter.

The system models an aircraft that evolves in the environment shown in Fig.~\ref{fig:dubins-sim} that is equipped with a feedback controller $\pi$ designed to satisfy the following specification:
``avoid obstacle $O$ and reach a goal area $D1$; if the aircraft enters the geofenced area $GF$, it has to enter area $D2$ before reaching $D1$.''
The PCTL formula for verification is
\begin{equation}
    \Spec=\ImpSpec.
\end{equation}
To verify the (unknown) system,
a dataset consisting of 100,000 datapoints was generated. 
To increase the speed of the framework with this large dataset, GP regression was performed \emph{locally}, which has been shown to be improve efficiency in higher dimensions~\cite{jackson2021synergistic}. 
For each discrete region, the nearest 1,000 datapoints according to a suitable $SE(2)$ metric are used for local dynamics modelling.
In the same vein, local RKHS constants are computed for each region.
An automatic refinement procedure was employed to repeatedly refine the subset of $\Qposs$ that may reach $\Qyes$, with 12 refinement steps in total taking 35 hours.
With the initial discretization consisting of 280 states, 21.4\% and 17.5\% of all states by volume belong to $\Qyes$ and $\Qno$ respectively. 
After the 12 refinement steps, the discretization consists of 309,092 states with 49.3\% and 26.4\% belonging to $\Qyes$ and $\Qno$ respectively.

Fig.~\ref{fig:dubins} presents the verification results in the $x$ and $y$ (north and east) dimensions for two initial headings $\theta_0$, although results were generated for all headings within $[-0.6,~3.8]$ radians.
Fig.~\ref{fig:dubins-sim} shows trajectories of the system with controller $\pi$ with indicated initial headings that are colored according to their performance guarantee. 
The classification of each initial state as satisfying or violating the specification intuitively follows a pattern depending on the initial heading. 
When this heading is eastward, as seen in Fig.~\ref{fig:dubins1}, the controller can steer the system around the obstacle $O$ if there is enough clearance.
However, when the initial region is south-eastward, shown in Fig.~\ref{fig:dubins2}, the size of $\Qno$ around the obstacle grows as there is not enough time to avoid the obstacle.

While the controller was designed with the implication $GF \implies \Pr_{\geq 0.95}\big[\mathcal{F}~D2\big]$ in mind, the verification procedure identifies shortcomings in the controller performance. 
For example, when the initial heading is south-eastward (Fig.~\ref{fig:dubins2}) or the system is initialized in $GF$, in most regions there is no guarantee that the system reaches $D2$ with at least 95\% probability. 
Of the two trajectories that pass through $GF$ with $\theta_0=\pi/2$ in Fig.~\ref{fig:dubins-sim}, one is classified as satisfying $\Spec$ and the other classified as indeterminant.
This is due to the eastern regions of $GF$ in Fig.~\ref{fig:dubins1} having a high probability of reaching $D2$, which satisfies the implication.
The western portion of $GF$, in addition to the majority of $GF$ in the other case, does not have a minimum probability that satisfies this relation.
However, this does not preclude the system from actually achieving the task as seen in Fig.~\ref{fig:dubins-sim}, since the maximum probability of reaching $D2$ is above zero but not guaranteed to be at least 95\%. 
In fact, further refinement of the abstraction to reduce the transition uncertainty may result in greater portions of $GF$ satisfying the requirement of reaching $D2$.

\section{Conclusion}
\label{sec:conclusion}

We present a verification framework for unknown dynamic systems with measurement noise subject to PCTL specifications. 
Through learning with GP regression and sound \imdp 
abstraction, we can achieve guarantees on satisfying (or violating) complex properties when the system dynamics are \emph{a priori} unknown.
The maturation of this framework will address the challenges of scaling to higher dimensions, which include the reliance on larger datasets and more discrete states in the abstraction, which can lead to the state-explosion dilemma.

Extensions to the framework could allow for synthesis with measurement noise, general and nonlinear measurement models, and the scaling issues of GP regression and the bounding of state images and posterior covariance suprema.
While effective for GP regression, our interval-based abstraction method may be applied to other types of learning-based approaches provided sound probabilistic errors are available.

\section*{References}
\bibliographystyle{IEEEtran}        
\bibliography{refs,lahijanian}

\appendix

\section*{Proof of Proposition~\ref{prop:rkhsbound}}
\label{app:proof_RKHSnorm}

\begin{proof}
The proof relies on the following lemma, which gives equivalent conditions involving a function $f$ in an RKHS $\RKHS$, its RKHS norm $\|f\|_{\kernel}$, and the associated kernel $\kernel$.
\begin{lemma}[Theorem 3.11 from~\cite{paulsen2016introduction}]\label{lemma:rkhsnorm}
Let $\RKHS$ be an RKHS on $X$ with reproducing kernel $\kernel$ and let $\mathrm{f}:X\to \reals$ be a function. Then the following are equivalent:
\begin{enumerate}%
    \item \label{lem:sub1} $\mathrm{f}\in\RKHS$;
    \item \label{lem:sub2} there exists a constant $c\geq 0$ such that, for every finite subset $F=\{x_1,\dots x_n\}\subseteq X$, there exists a function $h\in\RKHS$ with $\|h\|_{\kernel}\leq c$ and $\mathrm{f}(x_i) = h(x_i)$  for all $i=1,\dots n$;
    \item \label{lem:sub3} there exists a constant $c\geq 0$ such that the function $c^2\kernel(x,x') - \mathrm{f}(x)\mathrm{f}(x')$ is a kernel function.
\end{enumerate}
Moreover, if $\mathrm{f}\in\RKHS$ then $\|\mathrm{f}\|_\kernel$ is the least $c$ that satisfies the inequalities in \ref{lem:sub2} and \ref{lem:sub3}.
\end{lemma}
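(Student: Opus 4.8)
The plan is to prove the three conditions equivalent by establishing the implication cycle \ref{lem:sub1} $\Rightarrow$ \ref{lem:sub2} $\Rightarrow$ \ref{lem:sub3} $\Rightarrow$ \ref{lem:sub1}, carrying the constant $c$ through each arrow so that the \emph{moreover} claim on $\|\mathrm{f}\|_\kernel$ emerges as a byproduct rather than a separate argument. The first implication is immediate: if $\mathrm{f}\in\RKHS$, then for every finite set one takes $h=\mathrm{f}$ itself, which obeys $\|h\|_\kernel=\|\mathrm{f}\|_\kernel$ and agrees with $\mathrm{f}$ identically, so the value $c=\|\mathrm{f}\|_\kernel$ witnesses \ref{lem:sub2}.

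For \ref{lem:sub2} $\Rightarrow$ \ref{lem:sub3}, I would fix an arbitrary finite set $\{x_1,\dots,x_n\}$ and scalars $a_1,\dots,a_n$ and verify positive semidefiniteness of $c^2\kernel(x,x')-\mathrm{f}(x)\mathrm{f}(x')$ on that set. Taking the interpolant $h$ from \ref{lem:sub2} with $\|h\|_\kernel\le c$ and $\mathrm{f}(x_i)=h(x_i)$, the reproducing property gives $\mathrm{f}(x_i)=\langle h,\kernel(\cdot,x_i)\rangle_\RKHS$, whence
\begin{align*}
\Big(\sum_i a_i\mathrm{f}(x_i)\Big)^2
&=\Big\langle h,\textstyle\sum_i a_i\kernel(\cdot,x_i)\Big\rangle_\RKHS^2\\
&\le \|h\|_\kernel^2\,\Big\|\textstyle\sum_i a_i\kernel(\cdot,x_i)\Big\|_\RKHS^2\\
&\le c^2\sum_{i,j}a_ia_j\kernel(x_i,x_j),
\end{align*}
using Cauchy--Schwarz and then the identity $\big\|\sum_i a_i\kernel(\cdot,x_i)\big\|_\kernel^2=\sum_{i,j}a_ia_j\kernel(x_i,x_j)$. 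Rearranging yields $\sum_{i,j}a_ia_j\big(c^2\kernel(x_i,x_j)-\mathrm{f}(x_i)\mathrm{f}(x_j)\big)\ge 0$, so the difference is positive semidefinite, i.e.\ a kernel, with the \emph{same} $c$.

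The substantive step is \ref{lem:sub3} $\Rightarrow$ \ref{lem:sub1}. I would first note that the rank-one map $(x,x')\mapsto\mathrm{f}(x)\mathrm{f}(x')$ is itself a kernel, whose RKHS (for $\mathrm{f}\not\equiv 0$) is the one-dimensional span of $\mathrm{f}$ with $\|\mathrm{f}\|_{\mathrm{f}\otimes\mathrm{f}}=1$. Condition \ref{lem:sub3} then exhibits $c^2\kernel=\ell+\mathrm{f}\otimes\mathrm{f}$ as a sum of two kernels, where $\ell:=c^2\kernel-\mathrm{f}\otimes\mathrm{f}$. Invoking Aronszajn's theorem on the RKHS of a sum of kernels---if $K=K_1+K_2$ with all three kernels, then $\mathcal{H}_{K_1}\subseteq\mathcal{H}_K$ and the inclusion does not increase norms---gives $\mathrm{f}\in\mathcal{H}_{c^2\kernel}$ with $\|\mathrm{f}\|_{c^2\kernel}\le\|\mathrm{f}\|_{\mathrm{f}\otimes\mathrm{f}}=1$. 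Since rescaling the kernel by $c^2$ rescales norms by $1/c$, i.e.\ $\mathcal{H}_{c^2\kernel}=\RKHS$ with $\|\mathrm{f}\|_{c^2\kernel}=\|\mathrm{f}\|_\kernel/c$, this gives $\mathrm{f}\in\RKHS$ and $\|\mathrm{f}\|_\kernel\le c$. Combining the three arrows closes the equivalence; for minimality, $c=\|\mathrm{f}\|_\kernel$ is admissible in \ref{lem:sub2} and hence \ref{lem:sub3}, while any admissible $c$ (in either condition, feeding through \ref{lem:sub2}$\Rightarrow$\ref{lem:sub3} when needed) forces $\|\mathrm{f}\|_\kernel\le c$ by the last argument, so $\|\mathrm{f}\|_\kernel$ is exactly the least admissible constant in both.

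The main obstacle is this final implication, specifically the appeal to the sum-of-kernels theorem: proving $\mathcal{H}_{K_1}\subseteq\mathcal{H}_{K_1+K_2}$ with a norm contraction is the genuine analytic content, and would require either a citation to Aronszajn or a self-contained construction via the pull-back of the direct-sum space $\mathcal{H}_\ell\oplus\mathcal{H}_{\mathrm{f}\otimes\mathrm{f}}$ under the summation map $(g_1,g_2)\mapsto g_1+g_2$, with the quotient norm realizing the minimal-norm decomposition. Care is also needed in the degenerate case $\mathrm{f}\equiv 0$, where $\mathrm{f}\otimes\mathrm{f}=0$ and the claim is trivial, so that the normalization $\|\mathrm{f}\|_{\mathrm{f}\otimes\mathrm{f}}=1$ is invoked only when $\mathrm{f}\neq 0$.
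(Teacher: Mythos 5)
Your proof is correct, but note that the paper itself never proves this lemma: it is imported verbatim as Theorem~3.11 of the cited Paulsen--Raghupathi reference and used as a black box inside the proof of Proposition~\ref{prop:rkhsbound}, so there is no in-paper argument to compare against. Your cycle \ref{lem:sub1}~$\Rightarrow$~\ref{lem:sub2}~$\Rightarrow$~\ref{lem:sub3}~$\Rightarrow$~\ref{lem:sub1} is essentially the textbook proof: the middle arrow via the interpolant, the reproducing property, and Cauchy--Schwarz is exactly right, and your closing arrow through the decomposition $c^2\kernel = \ell + \mathrm{f}\otimes\mathrm{f}$ together with Aronszajn's sum-of-kernels theorem is the same mechanism the reference uses (there phrased as: if $K_2-K_1$ is a kernel then $\RKHS_{K_1}$ is contractively contained in $\RKHS_{K_2}$). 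You correctly identify that this containment-with-norm-contraction is the only genuinely nontrivial analytic input, and your sketch of the direct-sum pull-back construction is the standard way to prove it, so citing it is legitimate rather than a gap. The bookkeeping for the ``moreover'' clause is also handled properly: carrying $c=\|\mathrm{f}\|_\kernel$ forward through the first two arrows and extracting $\|\mathrm{f}\|_\kappa\le c$ from the last shows $\|\mathrm{f}\|_\kernel$ is the least admissible constant in both \ref{lem:sub2} and \ref{lem:sub3}. One small point worth making explicit: the rescaling identity $\|\mathrm{f}\|_{c^2\kernel}=\|\mathrm{f}\|_\kernel/c$ requires $c>0$, but if $c=0$ is admissible in \ref{lem:sub3} then $-\mathrm{f}(x)^2\ge 0$ forces $\mathrm{f}\equiv 0$, which folds into the degenerate case you already flagged.
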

Let $\mathrm{f}:X\to\reals$ reside in the RKHS $\RKHS$ defined by $\kernel$. 
Note that any $c$ that satisfies \ref{lem:sub3} in Lemma~\ref{lemma:rkhsnorm} also satisfies \ref{lem:sub2}. due to the assumption that $\mathrm{f}\in\RKHS$ and choosing $h=\mathrm{f}$.

A kernel function $\kernel$ is positive semidefinite if and only if, for all finite subsets $\{x_1,\dots x_n\}\subset X$ and real vectors $\{b_1,\dots b_n\}\in\reals^n$,
$
\sum_{i}\sum_{j} \kernel(x_i,x_j)b_ib_j\geq 0
$
which the given kernel satisfies. 
By Lemma~\ref{lemma:rkhsnorm}, there exists a constant $c\geq 0$ such that
\begin{align*}
    \sum_{i} & \sum_{j} (c^2\kernel(x_i,x_j) - \mathrm{f}(x_i)\mathrm{f}(x_j))b_ib_j \\
    &=\sum_{i}\sum_{j} \kernel(x_i,x_j)b_ib_j - c^{-2}\mathrm{f}(x_i)\mathrm{f}(x_j)b_ib_j\\
    &=\sum_{i}\sum_{j} \kernel(x_i,x_j)b_ib_j - \sum_{i}\sum_{j}c^{-2}\mathrm{f}(x_i)\mathrm{f}(x_j)b_ib_j 
    \geq 0
\end{align*}
which leads to
\begin{align*}
&\sum_{i}\sum_{j} \kernel(x_i,x_j)b_ib_j \geq
\sum_{i}\sum_{j}c^{-2}\mathrm{f}(x_i)\mathrm{f}(x_j)b_ib_j 
\end{align*}
Assume $\kernel(x,x')\geq 0$. 
Then choose $c\geq 0$ such that, $\forall i,j$, 
\begin{align*}\label{eq:c}
    \kernel(x_i,x_j) &\geq |c^{-2}\mathrm{f}(x_i)\mathrm{f}(x_j)|\\
    c^2 &\geq \frac{|\mathrm{f}(x_i)\mathrm{f}(x_j)|}{\kernel(x_i,x_j)}
\end{align*}
which obligates the choice 
\begin{equation}
    \label{eq:cmax}
    c^2 = \sup_{x,x'\in X}\frac{|\mathrm{f}(x)\mathrm{f}(x')|}{\kernel(x,x')}
    \leq \frac{\sup_{x\in X} f(x)^2}{\inf_{x,x'\in X}\kernel(x,x')}. 
\end{equation}
By Lemma~\ref{lemma:rkhsnorm}, the right-hand side is an upper bound of $\|\mathrm{f}\|^2_\kernel$.
\end{proof}

\section*{Proof of Proposition~\ref{prop:infobound}}
\label{app:proof_info_gain}
\begin{proof}
The proof relies on Hadamard's determinant bounding inequality and $s=\sup_{x\in X}\kernel(x,x)$.
Let $K$ denote the kernel matrix from $d$ input points and kernel $\kappa$. 
Then, the maximum information gain is bounded by
\begin{align*}
    \InfoGainBound \coloneq \max_{K} \log |I + \sigma^{-2}K| &\leq \max_K \log \prod_{i=1}^{\datasetsize} |1 + \sigma^{-2}K_{i,i}| \\
       \leq \log \prod_{i=1}^{\datasetsize} |1 + \sigma^{-2}s|
       &= \sum_{i=1}^{\datasetsize} \log (1+\sigma^{-2}s)\\
      & = \datasetsize\log(1+\sigma^{-2}s)
\end{align*}
where $|\cdot|$ is the matrix determinant function.
\end{proof}

\section*{Proof of Proposition~\ref{prop:distbound}}
\label{app:proof_optimal_error}

\begin{proof}
Let $\Pr(\distBound_i)=(1-\delta_i(\distBound_i))$ denote the CDF of the probabilistic regression error. 
In order to maximize the upper bound in Theorem~\ref{Th:TransitionBounds}, the indicator function should return 1 and the product with $\Pr(\distBound_i)$ should be maximizing. The indicator function returns 1 if $\qimdppost\cap\underline{\qimdpprime}=\qimdppost$. The most $\qimdpprime$ can shrink is $$\inf_{a\in \partial \qimdppost,b\in \partial \qimdpprime} |a-b|.$$ Since $\Pr(\distBound_i)$ is non-decreasing, the choice of $\distBound$ is maximizing. The proof for the minimizing case is similar. 
\end{proof}
\section*{Proof of Lemma~\ref{lemma:ValueIterOriginalSys}}
\label{app:proof-valueIterOriginalSys}

\begin{proof}
Assume, for simplicity and w.l.o.g., that $\strX$ is stationary. Then $V_k^{\strX}:\reals^n \to \reals $ is defined recursively as
        \begin{align*}
            V_k^{\strX}(x)=
            \begin{cases}
                1 & \text{if } x \in \FullSet^1\\
                0 & \text{if } x \in \FullSet^0\\
                0 & \text{if } x \notin (\FullSet^0 \cup \FullSet^1) \wedge k = 0\\
                \mathrlap{\mathbf{E}_{v\sim \noisedistribution, \, \bar{x}\sim T(\cdot \mid x,\strX(x+v) )} [ V_{k-1}(\bar{x})]} & \hspace{45mm}\text{otherwise.}
            \end{cases}
        \end{align*}    
To prove Lemma~\ref{lemma:ValueIterOriginalSys},  we need to show that $$ V_k^{\strX}(x) = \Pr(\omega^k_{\x}\Satisfies\PathFormula\mid \omega_{\x}(0) = x,\strX),$$
that is the probability that a path of length $k$ of Process \eqref{eq:system} initialized at $x$ is equal to $V_k^{\strX}(x).$

The proof is by induction over the length of the path. The base case is $$ \Pr(\omega^0_{\x}\Satisfies\PathFormula\mid \omega^0_{\x}(0) = x,\strX)=\mathbf{1}_{\FullSet^1}(x)=V_{0}^{\strX}(x).$$
Then, to conclude the proof we need to show that under the assumption that for all $x\in \mathbb{R}^n$
$$ \Pr(\omega^{k-1}_{\x}\Satisfies\PathFormula\mid \omega_{\x}(0) =x)=V_{k-1}^{\strX}(x), $$
it holds that for all $x\in \mathbb{R}^n$
$$\Pr(\omega^k_{\x}\Satisfies\PathFormula\mid \omega_{\x}(0) = x,\strX)=V_{k}^{\strX}(x) .$$
Call $\bar{\FullSet}=\FullSet \setminus (\FullSet^0 \cup \FullSet^1),$ that is the complement of set $\FullSet^0 \cup \FullSet^1$, and define notation 
$$\omega^k_{\x}([i,k-1])\in \FullSet:=\forall j\in [i,k-1],\,\omega^k_{\x}(j)\in \FullSet.  $$
Assume w.l.o.g. that $x\in \bar{\FullSet}$.
The final part of the proof is:
\begin{align*}
   & \Pr(\omega^k_{\x}\Satisfies\PathFormula\mid \omega_{\x}(0) = x,\strX)\\
    &= \sum_{i=1}^k \Pr(\omega^k_{\x}(i)\in \FullSet^1 \wedge  \omega^k_{\x}([1,i-1])\in \bar{\FullSet} \mid \omega^k_{\x}(0)=x,\strX) \\
    &= \Pr(\omega^k_{\x}(1)\in \FullSet^1\mid \omega^k_{\x}(0)=x,\strX)\,+\\
    & \hspace{5mm} \sum_{i=2}^k \Pr(\omega^k_{\x}(i)\in \FullSet^1 \wedge \omega^k_{\x}([1,i-1])\in \bar{\FullSet}\mid \omega^k_{\x}(0)=x,\strX) \\
    &= \int \hspace{-2mm} \int_{\FullSet^1} t(\bar{x}\mid x,\strX(x+v))\noisedistribution(v)d\bar{x}dv \,+ \\
    & \quad \sum_{i=2}^k \int \hspace{-2mm} \int_{\bar{\FullSet}} \Pr(\omega^k_{\x}(i)\in \FullSet^1 \wedge \omega^k_{\x}([2,i-1])\in \bar{\FullSet}\mid \\
    &\hspace{2cm} \omega^k_{\x}(1)=\bar{x},\strX) \, t(\bar{x}\mid x,\strX(x+v))\noisedistribution(v) d\bar{x}dv\\
    &= \int \hspace{-2mm} \int_{\FullSet^1} t(\bar{x}\mid x,\strX(x+v))\noisedistribution(v)d\bar{x}dv \, + \\
    & \quad \sum_{i=1}^{k-1} \int \hspace{-2mm} \int_{\bar{\FullSet}} \Pr\big(\omega^{k-1}_{\x}(i)\in \FullSet^1 \wedge \omega^{k-1}_{\x}([1,i-1])\in \bar{\FullSet}\mid \\
    &\hspace{1.6cm} \omega^{k-1}_{\x}(0)=\bar{x},\strX\big) \, t(\bar{x}\mid x,\strX(x+v))\noisedistribution(v) d\bar{x}dv\\
    &= \int \hspace{-2mm} \int \big( \mathbf{1}_{\FullSet^1 }(\bar{x}) + \mathbf{1}_{\bar{\FullSet}}(\bar{x}) \, \Pr(\omega^{k-1}_{\x}\Satisfies\PathFormula \mid \omega_{\x}(0) = \bar{x},\strX) \big)\\
    & \hspace{4.3cm} t(\bar{x}\mid x,\strX(x+v))\noisedistribution(v) d\bar{x}dv\\
    &= \int \hspace{-2mm} \int  \Pr(\omega^{k-1}_{\x}\Satisfies\PathFormula\mid \omega_{\x}(0) = \bar{x},\strX)  \\
    & \hspace{4.3cm} t(\bar{x}\mid x,\strX(x+v))\noisedistribution(v) d\bar{x}dv\\
    &= V_k^{\strX}(x),
\end{align*}
where in the third equality we marginalized over the event $\omega^k_{\x}(1)=\bar{x} $ and use the definition of conditional probability, and in the fourth equality we use that for $i<k$ and $x\in \bar{\FullSet}$
\begin{align*}
    &\Pr(\omega^{k-1}_{\x}(i)\in \FullSet^1 \wedge \omega^{k-1}_{\x}([1,i-1])\in \bar{\FullSet}\mid \omega^{k-1}_{\x}(0)=\bar{x},\strX) \\
    &=\Pr(\omega^{k}_{\x}(i+1)\in \FullSet^1 \wedge \omega^{k}_{\x}([2,i+1-1])\in \bar{\FullSet}\mid \\
    & \hspace{6.5cm} \omega^{k}_{\x}(1)=\bar{x},\strX),
\end{align*}
which holds because of the Markov property.

\end{proof}

\begin{IEEEbiographynophoto}{John Skovbekk} (Student Member, IEEE) is a PhD candidate in the Ann and H.J. Smead Department of Aerospace Engineering Sciences at the University of Colorado, Boulder. He received Bachelor of Aerospace Engineering and Mechanics (2015) and Master of Science (2018) degrees from the University of Minnesota, Twin Cities. His research focuses on the intersection of formal control methods with data-driven modeling to enhance the safety and efficacy of robotics.
\end{IEEEbiographynophoto}

\begin{IEEEbiographynophoto}{Luca Laurenti} (Member, IEEE) is a tenure track assistant professor at the Delft Center for Systems and Control at TU Delft and co-director of the  HERALD Delft AI Lab . He received his PhD from the Department of Computer Science, University of Oxford (UK), where he was a member of the Trinity College. Luca has a background in stochastic systems, control theory, formal methods, and artificial intelligence. His research work focuses on developing data-driven systems provably robust to interactions with a dynamic and uncertain world.
\end{IEEEbiographynophoto}

\begin{IEEEbiographynophoto}{Prof. Eric Frew} (Member, IEEE) is a professor in the Ann and H.J. Smead Aerospace Engineering Sciences Department at the University of Colorado Boulder (CU). He received his B.S. in mechanical engineering from Cornell University in 1995 and his M.S and Ph.D. in aeronautics and astronautics from Stanford University in 1996 and 2003, respectively. Dr. Frew has been designing and deploying autonomous robotic systems for over twenty five years. His research efforts focus on autonomous flight of heterogeneous uncrewed aircraft systems; distributed information-gathering by mobile robots; miniature self-deploying systems; and guidance and control of uncrewed aircraft in complex atmospheric phenomena. Dr. Frew has successfully deployed autonomous robots in a wide variety of scenarios. He was co-leader of the team that performed the first-ever sampling of a severe supercell thunderstorm by an uncrewed aircraft and that recently deployed two autonomous coordinating aircraft in a supercell. He was also co-leader of the CU MARBLE team that placed third in the DARPA Subterranean Challenge. He is currently the Center Director for the National Science Foundation Industry University Cooperative Research Center (IUCRC) for Autonomous Air Mobility and Sensing (CAAMS).
\end{IEEEbiographynophoto}

\begin{IEEEbiographynophoto}{Morteza Lahijanian} (Member, IEEE) is an assistant professor in the Aerospace Engineering Sciences department, an affiliated faculty at the Computer Science department, and the director of the Assured, Reliable, and Interactive Autonomous (ARIA) Systems group at the University of Colorado Boulder. He received a B.S. in Bioengineering at the University of California, Berkeley and a PhD in Mechanical Engineering at Boston University. He served as a postdoctoral scholar in Computer Science at Rice University. Prior to joining CU Boulder, he was a research scientist in the department of Computer Science at the University of Oxford. His awards include Ella Mae Lawrence R. Quarles Physical Science Achievement Award, Jack White Engineering Physics Award, and NSF GK-12 Fellowship. Dr. Lahijanian's research interests span the areas of control theory, stochastic hybrid systems, formal methods, machine learning, and game theory with applications in robotics, particularly, motion planning, strategy synthesis, model checking, and human-robot interaction. 
\end{IEEEbiographynophoto}

\end{document}